\theoremstyle{definition}
\newtheorem{example}{Example}
\newtheorem*{example*}{Example}
\newtheorem*{theorem*}{Theorem}
\newtheorem{definition}{Definition}
\newtheorem*{definition*}{Definition}
\newtheorem{proposition}{Proposition}
\newtheorem*{proposition*}{proposition}
\newtheorem{corollary}{Corollary}
\newtheorem*{corollary*}{Corollary}
\newtheorem*{lemma*}{Lemma}
\newcommand{\showproofs}{1}
\setlist[itemize]{noitemsep, topsep=0pt}
\setlist[enumerate]{noitemsep, topsep=0pt}
\title{A Framework for Combining Entity Resolution and\\ Query Answering in Knowledge Bases}
\author{%
Ronald Fagin$^1$\and
Phokion G.\ Kolaitis$^{1,2}$\and
Domenico Lembo$^3$\and
Lucian Popa$^1$ \and
Federico Scafoglieri$^3$\\
\affiliations
$^1$IBM Research, Almaden, USA\\
$^2$UC Santa Cruz, USA\\
$^3$Sapienza University of Rome, Italy\\
\emails
fagin@us.ibm.com,
kolaitis@ucsc.edu,
lembo@diag.uniroma1.it,\\
lpopa@us.ibm.com,
scafoglieri@diag.uniroma1.it
}
\newcommand{\commentout}[1]{}
\newcommand{\chase}{\mathit{chase}}
\newcommand{\id}{\mathsf{id}}
\newcommand{\valueSet}{V}
\newcommand{\entitySet}{E}
\newcommand{\genericSet}{T}
\newcommand{\genericSettwo}{U}
\newcommand{\trasf}{\tau}
\newcommand{\setvar}{\mathit{SetVar}}
\newcommand{\dareduced}{\downarrow^{\rho}}
\newcommand{\DB}{\textbf{D}}
\newcommand{\I}{\mathcal{I}} 
\newcommand{\K}{\mathcal{K}}
\renewcommand{\S}{\mathcal{S}} \newcommand{\T}{\mathcal{T}}
\newcommand{\U}{\mathcal{U}} \newcommand{\V}{\mathcal{V}}
 \newcommand{\Z}{\mathcal{Z}}
\newcommand{\ra}{\rightarrow}
\newcommand{\tup}[1]{\langle #1\rangle}            
\newcommand{\activeD}{\mathit{active}}
\newcommand{\underD}{\mathit{under}}
\def\-{\raisebox{.75pt}{-}}
\renewcommand{\tup}[1]{\langle #1\rangle}
\newcommand{\cert}{\mathit{cert}}
\renewcommand{\DB}{\mathbf{D}}
\newcommand{\Mod}{\mathit{Mod}}
\newcommand{\sig}{\mathit{sig}}
\newcommand{\alphaOntology}{\S_O}
\newcommand{\alphaPredicates}{\S_P}
\newcommand{\alphaBuiltIns}{\S_B}
\newcommand{\alphaEntities}{\S_E}
\newcommand{\alphaValues}{\S_V}
\newcommand{\alphaVariables}{\S_\V}
\newcommand{\alphaEntityNulls}{\S_{EN}}
\newcommand{\alphaValueNulls}{\S_{VN}}
\newcommand{\arity}{\mathit{arity}}
\newcommand{\type}{\mathit{type}}
\newcommand{\entity}{\mathsf{e}}
\newcommand{\val}{\mathsf{v}}
\newif\ifdraft
\newcommand{\nb}[1]{\textcolor{red}{\bf!}%
	\marginpar[\parbox{15mm}{\raggedleft\scriptsize\textcolor{red}{#1}}]%
	{\parbox{15mm}{\raggedright\scriptsize\textcolor{red}{#1}}}}
\newcommand{\nb}[1]{}
\begin{document}

\maketitle

\begin{abstract}
We propose a new framework for combining entity resolution and query answering in knowledge bases (KBs) with tuple-generating dependencies (tgds) and equality-generating dependencies (egds) as rules.
We define the semantics of the KB in terms of special instances that involve equivalence classes of entities and sets of values. Intuitively, the former collect all entities denoting the same real-world object, while the latter collect all alternative values for an attribute. This approach allows us to both resolve entities and bypass possible inconsistencies in the data. We then design a chase procedure that is tailored to this new framework and has the feature that it never fails; moreover, when the chase procedure terminates, it produces a universal solution, which in turn can be used to obtain the certain answers to conjunctive queries. We finally discuss challenges arising when the chase does not terminate. 
\end{abstract}

\newcommand{\Name}{\mathit{Name}}
\newcommand{\HPhone}{\mathit{HPhone}}
\newcommand{\JaccardSim}{\mathit{JaccSim}}
\newcommand{\Age}{\mathit{Age}}
\newcommand{\Zip}{\mathit{Zip}}
\newcommand{\SameHouse}{\mathit{SameHouse}}
\newcommand{\Doe}{\textnormal{\textbf{Doe}}}
\newcommand{\PNumberA}{\texttt{358}}
\newcommand{\PNumberB}{\texttt{635}}
\newcommand{\PNumberC}{\texttt{262}}
\newcommand{\wrt}{w.r.t.\ }

\section{Introduction}
\label{sec:introduction}

Entity resolution is the problem of determining whether different data records refer to the same real-world object, such as the same individual or the same organization, and so on~\cite{BGMS*09,PITP21}.
In this paper, we study entity resolution in combination with query answering in the context of knowledge bases (KBs) consisting of ground atoms and rules specified as tuple-generating dependencies (tgds) and equality-generating dependencies (egds).
These rules have been widely investigates in databases and knowledge representation, e.g. in~\cite{BeVa84,CCDL04,FKMP05,BLMS11,CHKK*13,KrMR19}; in particular, they  allow expressing both data transformations  and  classical axioms  used in Description Logic (DL)~\cite{BCMNP07,ACKZ09} and Datalog +/- ontologies~\cite{CaGL09b}.
In addition, egds are employed to express typical entity resolution rules that one may  write in practice, i.e., rules that enforce equality between two entities, as in~\cite{BiCG22}.
The KBs considered here separate out \emph{entities} and \emph{values} and 
involve $n$-ary predicates that denote $n$-ary relations, $n\geq 1$, over entities and/or values from predefined datatypes. 
As is customary for ontologies,  the  \emph{TBox} is the intensional component (i.e., the rules) of a KB, while the \emph{database} of the KB is its extensional component (i.e., the ground atoms).

\begin{figure*}
{\centering
\begin{multicols}{2}

\scalebox{0.9}{
\begin{tikzpicture}[main/.style = {draw=none}, node distance=2.6cm]
\node[main] (1) {$\Doe_1$};
\node[main] (2) [right of=1] {\PNumberA};
\node[main] (3) [left of=1] {\texttt{John Doe}};

\draw[->] (1) -- node[midway, above] {HPhone}  (2);
\draw[->] (1) -- node[midway, above] {Name}  (3);

\node[main] (4) [below of=1, yshift=1.7cm] {$\Doe_2$};
\node[main] (5) [right of=4] {\PNumberB};
\node[main] (6) [left of=4] {\texttt{J. Doe}};

\draw[->] (4) -- node[midway, below] {HPhone}  (5);
\draw[->] (4) -- node[midway, below] {Name}  (6);

\end{tikzpicture}
}
\par
\hspace{0.2cm}\subcaptionbox{Before entity resolution.\label{Fig:Introduction-without}}[18em]

\scalebox{0.9}{
\begin{tikzpicture}[main/.style = {draw=none}, node distance=2.9cm]

\tikzset{hfit/.style={rounded rectangle, inner xsep=0pt, fill=#1!30},
           vfit/.style={rounded corners, fill=#1!30}}

\node[main] (1) {$\Doe_1$};
\node[main] (2) [right of=1] {\PNumberA};
\node[main] (3) [left of=1] {\texttt{John Doe}};

\node[main] (4) [below of=1, yshift=1.81cm] {$\Doe_2$};
\node[main] (5) [right of=4] {\PNumberB};
\node[main] (6) [left of=4] {\texttt{J. Doe}};


\begin{pgfonlayer}{background}
  \node[fit=(1)(4), vfit=lightgray,  line width=1.6pt] (7) {};
  \node[fit=(2)(5), vfit=lightgray] (8) {};
  \node[fit=(3)(6), vfit=lightgray] (9) {};
\end{pgfonlayer}

\draw[->] (7) -- node[midway, above] {HPhone}  (8);
\draw[->] (7) -- node[midway, above] {Name}  (9);

\end{tikzpicture}

}
\par
\hspace{1.6cm}\subcaptionbox{After entity resolution.\label{Fig:Introduction-with}}[12em]
    
\end{multicols}

\caption{Ground Atoms of the Knowledge Base}\label{Fig:ex-introduction}
}
\end{figure*}
As an example,  Fig.\ \ref{Fig:Introduction-without} depicts a set of ground atoms, where $\Doe_1$ and $\Doe_2$ are entities 
while the rest are values (indicating names and landline home phone numbers). 
Fig.~\ref{Fig:ex-introduction-rules} illustrates a small TBox (containing only egds, for simplicity).
\begin{figure}[h]
	\small
	\begin{align*}
		\arraycolsep=3.0pt
		\begin{array}{ll}
			(s_1)  &  \Name(p_1, n_1) \wedge \Name(p_2, n_2) \wedge  \JaccardSim(n_1, n_2, 0.6)\\ 
			& \rightarrow p_1 = p_2 \\
			(s_2)  &  \Name (p, n_1) \wedge \Name (p, n_2)  \rightarrow n_1 = n_2 \\
			(s_3)  &  \HPhone (p, f_1) \wedge \HPhone (p, f_2) \rightarrow f_1 = f_2
		\end{array}
	\end{align*}
	\normalsize
	\caption{Entity-egds and value-egds.}
	\label{Fig:ex-introduction-rules}
\end{figure}
%
%
To capture entity resolution rules, we allow egds to contain atoms involving 
built-in predicates, such as $\JaccardSim$. In the example, rule $s_1$ states that two names (i.e., strings) with Jaccard similarity above $0.6$ must belong to the same individual. Rules $s_2$ and $s_3$ stipulate  that an individual has at most one name and at most one landline home phone number, respectively. We call \emph{entity-egds} rules that impose equality on two entities (e.g., $s_1$), and we call \emph{value-egds} rules that impose equality on two values (e.g., $s_2$ and $s_3$).

It is now easy  to see that, according to the standard semantics, the database in Fig.~\ref{Fig:Introduction-without} does not satisfy the entity-egd $s_1$ 
(note that the Jaccard similarity of \texttt{John Doe} and \texttt{J. Doe} is $0.625$). Thus, the main challenge one has in practice is to come up with a consistent way to complete or modify the original KB, while respecting all its rules. 

In this paper, we develop a  framework for entity resolution and query answering in KBs, where   the valid models,  called \emph{KB solutions},  must satisfy all entity resolution rules along with all  other KB rules; furthermore, the solutions must include all original data 
(i.e., no information is ever dropped or altered). Our approach is guided by the intuitive principle that for each real-world object there must be a single 
node in the solution that represents all  ``equivalent'' entities denoting the object.
To achieve this, we use  equivalence classes of entities, which become first-class citizens in our framework.
In addition, we relax 
%
 the standard way in which value-egds are satisfied by allowing solutions to use sets of values, thus collecting together all possible values for a given attribute (e.g., the second argument of $\Name$ or $\HPhone$). 
Continuing with the above example, Fig.~\ref{Fig:Introduction-with} shows a new 
set of ground atoms that uses equivalence classes and sets of values.  

We remark that the use  of equivalence classes of entities and sets of values in KB solutions requires a drastic revision of  the classical notion of satisfaction for tgds and egds. 
Intuitively, we interpret egds as matching dependencies~\cite{BeKL13,Fan08}. 
That is, when the conditions expressed by (the body of) an entity-egd or a value-egd hold in the data (and thus two entities or two  values must  be made equal),
we combine them through a merging function. We adopt a general and common merging function that takes the union of the entities or the union of the values. This function actually belongs to the \emph{Union class of match and merge functions} analyzed in~\cite{BGMS*09,BeKL13}.
In this way, we group all possible alternatives for denoting an object into a unique set of entities. Similarly, when two 
values exist where only one value is instead allowed according to the TBox, we explicitly form their union.
Note that  we consider the union of entities a ``global'' feature for a solution, that is, an entity $e$ may belong to exactly one equivalence class;  
in contrast, the union of values is ``local'' to the context in which a value occurs, that is, a particular value may belong to more than one set of values. 
\ifthenelse{\equal{\showproofs}{1}}{For instance, let us add $\HPhone(\Doe_3,\PNumberA)$ to the KB of our example. Then, in a solution satisfying the KB, $\PNumberA$ belongs to two different sets: 
the one shown in Fig.~\ref{Fig:Introduction-with} (i.e., $\{\PNumberA,\PNumberB\}$) and the singleton $\{\PNumberA\}$, which is the set of landline home phone numbers of $\Doe_3$.}

It is worth noting that the semantics we adopt for value-egds allows us to always have a solution (that is, a model) for a KB, even when there are no models according to the standard first-order semantics; for example,
if a value-egd enforces the equality of two different values, then first-order logic concludes that the KB is inconsistent. 
We thus may say that our semantics is inconsistency tolerant with respect to violations of value-egds. Indeed, we collect together all  possible alternative values for individual attributes, whereas in data cleaning the task is to choose one of the alternatives.

We also point out that value-egds may substantially affect entity resolution. 
As an example, consider the database in Fig.~\ref{Fig:Introduction-without} and on top of it the new set of rules given below
\ifthenelse{\equal{\showproofs}{0}}{
\[	
 \small
	\begin{aligned}
		\arraycolsep=3.0pt
		\begin{array}{ll}
			(s'_1) & \Name (p_1, n_1) \wedge \HPhone (p_1, f) \wedge \Name (p_2, n_2) \wedge\\ 
			& \HPhone (p_2, f) \wedge \JaccardSim(n_1, n_2, 0.6) \rightarrow p_1 = p_2 \\
			(s'_2)  &  \Name(p_1, n_1) \wedge \Name (p_2, n_2) \wedge  
			\HPhone (p_1, f_1) \wedge\\ 
			& \HPhone (p_2, f_2)  \wedge \JaccardSim(n_1, n_2, 0.6) \rightarrow f_1 = f_2 
		\end{array}
	\end{aligned}
\]
 \normalsize
}

\ifthenelse{\equal{\showproofs}{1}}{

\[	
 \small
	\begin{aligned}
		\arraycolsep=3.0pt
		\begin{array}{ll}
			(s'_1) & \Name (p_1, n_1) \wedge \HPhone (p_1, f) \wedge \Name (p_2, n_2) \wedge\\ 
			& \HPhone (p_2, f) \wedge \JaccardSim(n_1, n_2, 0.6) \rightarrow p_1 = p_2 \\
			(s'_2)  &  \Name(p_1, n_1) \wedge \Name (p_2, n_2) \wedge  
			\HPhone (p_1, f_1) \wedge\\ 
			& \HPhone (p_2, f_2)  \wedge \JaccardSim(n_1, n_2, 0.6) \rightarrow f_1 = f_2\\[3mm] 
		\end{array}
	\end{aligned}
\]
 \normalsize
 
 }

Rule $s'_1$ states that two entities having similar names and the same landline home phone number denote the same real-world individual, while rule $s'_2$ says that two entities with similar names must have the same phone number. It is easy to see that a TBox having only rule $s'_1$ would not lead to inferring that $\Doe_1$ and $\Doe_2$ denote the same individual. 
However, by virtue of rule $s'_2$, a solution has to group together the two telephone numbers into the same set, thus saying that $\Doe_1$ and $\Doe_2$ have both the landline home phone numbers $\{\PNumberB,\PNumberA\}$. This union ``fires'' rule $s'_1$, and thus entities are resolved, i.e., are put together into the same equivalence class of the solution. 

We finally note that, in order  to maximize entity resolution, our semantics allows the body a rule to  be satisfied by assignments in which different occurrences of the same value-variable are replaced by different sets of values, as long as these sets have a non-empty intersection (instead  of requiring that  all  occurrences are replaced by the same set of values).
%
\ifthenelse{\equal{\showproofs}{0}}{For instance, if we add to the rules in Fig.~\ref{Fig:ex-introduction-rules} the tgd
\small
\[
\begin{array}{l}
	\HPhone(p_1,f) \wedge \HPhone(p_2,f)  \rightarrow \SameHouse(p_1, p_2,f) 
	\end{array}
\]
\normalsize
stating that two entities with the same home phone number live in the same house, and to the database in Fig.~\ref{Fig:Introduction-without} the atom $\HPhone(\Doe_3,\PNumberA)$, we can conclude that the individual denoted by $\Doe_3$ and the individual denoted by $\Doe_1$  and $\Doe_2$ live in the same house (since they share one phone number).}

\ifthenelse{\equal{\showproofs}{1}}{For instance, if we add to the rules in Fig.~\ref{Fig:ex-introduction-rules} the tgd

\small
\[
\begin{array}{l}
	\HPhone(p_1,f) \wedge \HPhone(p_2,f)  \rightarrow \SameHouse(p_1, p_2,f)\\[3mm] 
	\end{array}
\]
\normalsize

\noindent stating that two entities with the same home phone number live in the same house, and to the database in Fig.~\ref{Fig:Introduction-without} the atom $\HPhone(\Doe_3,\PNumberA)$, we can conclude that the individual denoted by $\Doe_3$ and the individual denoted by $\Doe_1$  and $\Doe_2$ live in the same house (since they share one phone number).}
Consistently with this choice, we also require tgds to ``propagate'' such intersections. Then,
through the above tgd, we also infer that the phone number of the house is $\{\PNumberA\}$,which we denote with a fact of the form $\SameHouse([\Doe_1,\Doe_2],[\Doe_3], \{\PNumberA\})$.
Note that the behaviour we described differs from the standard one only for value-variables, since two equivalence classes of entities are always either disjoint or the same class.

In this paper, we formalize the aforementioned ideas and investigate query answering, 
with focus on conjunctive queries (CQs). 
The main contributions are as follows.
\begin{itemize}
    \item We propose a new framework for entity resolution in KBs consisting of tgds and egds, and give rigorous semantics.
    \item 
    We define \emph{universal solutions} and show that, as for standard tgd and egd semantics, universal solutions can be used to obtain the certain answers of CQs. 
    \item We propose a variant of the classical \emph{chase} procedure~\cite{BeVa84,FKMP05} tailored to our approach. 
     An important feature of our chase procedure is that it never fails, even in the presence of egds. At the same time, as in other frameworks, our chase procedure might not terminate.
    \item We show that, when the chase procedure terminates, it returns a universal solution (and thus we have an  algorithm for computing the certain answers to  conjunctive queries).
    \item When the chase procedure does not terminate, defining the result of the chase becomes a more intricate task. 
    \commentout{According to~\cite{BeVa84}, a possible approach is to include in the result of the chase all facts that become persistent from a certain point on of the chase procedure. We point out that the strategy proposed in~\cite{BeVa84}
    does not work in our framework and thus a different notion for the result of the chase is needed, which we however leave for future study.}
    We point out that the strategy proposed in~\cite{BeVa84} does not work in our framework. Thus a different notion of the result of the chase is needed, which we  leave for future study.
 \end{itemize}

\ifthenelse{\equal{\showproofs}{1}}{

\medskip

The rest of the paper is organized as follows. In Section~\ref{sec:preliminaries} we give some background notions useful for the following treatment. In Section~\ref{sec:framework} we propose our framework and give the notions of instance, solution and universal solution for a KB. In Section~\ref{sec:query-answering} we define query answering and show that for CQs it can be solved through query evaluation over a universal solution. In Section~\ref{sec:computing-Univ-Model} we first define the chase procedure in our setting, and discuss the case of terminating chase sequences, for which the chase procedure returns a universal solution; then we attack the challenging case of non-terminating chase sequences, for which we show that previously proposed approaches cannot directly adapted to our setting. In Section~\ref{sec:related} we discuss some related work, and in Section~\ref{sec:conclusions} we conclude the paper.
}

\ifthenelse{\equal{\showproofs}{0}}{
    \noindent Due to space limitations, all proofs have been moved to an appendix, which we provide as supplementary material.
}


\newcommand{\occ}{\mathit{Occ}}

\renewcommand{\vec}[1]{\mathbf{#1}}

\section{Basic Notions}
\label{sec:preliminaries}

\ifthenelse{\equal{\showproofs}{1}}{
\noindent \textbf{Equivalence relations.}
If $\Z$ is a set, then an \emph{equivalence relation on $\Z$} is a binary relation 
 $\theta \subseteq \Z \times \Z$ that is reflexive (i.e., $x \: \theta \: x$, for every $x \in \Z$), symmetric (i.e., $x \: \theta \: y$ implies $y \: \theta \: x$, for every $x,y \in \Z$), and transitive (i.e., $x \: \theta \: y$ and $ y \: \theta \: z$ imply $x \: \theta \: z$, for every $x,y,z \in \Z$).

If $x \in \Z$, then the \emph{equivalence class of $x$ w.r.t.\  $\theta$} is the set $[x]_\theta = \{ y \in \Z \: \mid \: y \: \theta \: x \: \}$. Sometimes, we will write $[x]$ instead of $[x]_\theta$, if $\theta$ is clear from the context. Moreover, we may write, e.g., $[a,b,c]_\theta$ (or simply $[a,b,c]$) to denote the equivalence class consisting of the elements $a$, $b$, and $c$.

The equivalence classes of an equivalence relation $\theta$ have the following well known properties: (i) each equivalence class is non-empty; (ii) for all $x,y\in \Z$, we have that $x\theta y$ if and only if $[x]_\theta=[y]_\theta$; (iii) two equivalence classes are either equal sets or  disjoint sets. Consequently, the equivalence classes of $\theta$ form a partition of $\Z$. The \emph{quotient set $\Z/\theta$ of $\theta$ on $\Z$}  is  the set of all equivalence classes over $\Z$ w.r.t.\ $\theta$, i.e.,  $\Z/\theta= \{[x]_\theta \: | \: x \in \Z\}$. Obviously, given a quotient set $\Z/\theta$, we have that $\Z =  \underset{[x]_\theta \in \Z/\theta}{\bigcup}[x]_\theta$.}

\ifthenelse{\equal{\showproofs}{0}}{We take for granted the notions of equivalence relation and equivalence class. If $\Z$ is a set, $\theta$ is an equivalence relation on $\Z$, and $x \in \Z$, then the \emph{equivalence class of $x$ \wrt $\theta$} is denoted by $[x]_\theta$ (or simply $[x]$, if $\theta$ is clear from the context). Sometimes, we write, e.g., $[a,b,c]_\theta$ (or $[a,b,c]$) to denote the equivalence class consisting of the elements $a$, $b$, and $c$. The \emph{quotient set $\Z/\theta$ of $\theta$ on $\Z$} is  the set of all equivalence classes over $\Z$ w.r.t.\ $\theta$.}

\ifthenelse{\equal{\showproofs}{1}}{

\medskip

\noindent \textbf{Alphabets and conjunctions of atoms.} }

We consider four pairwise disjoint alphabets $\alphaPredicates$, $\alphaEntities$, $\alphaValues$, and $\alphaVariables$. The set $\alphaPredicates$ is a finite alphabet for predicates; it is partitioned into the sets $\alphaOntology$ and $\alphaBuiltIns$, which are the alphabets for \emph{KB} predicates and \emph{built-in} predicates, respectively. The sets $\alphaEntities$, $\alphaValues$, and $\alphaVariables$ are countable infinite alphabets for \emph{entities}, \emph{values}, and \emph{variables}, respectively. For ease of exposition, we do not distinguish between different data types, 
thus  $\alphaValues$ is a single set containing all possible values. 

The number of  arguments of a predicate 
$P \in \alphaPredicates$
is the \emph{arity} of $P$, denoted with $\arity(P)$. With each $n$-ary predicate $P$ we associate a tuple $\type(P)= \tup{\rho_1,\ldots \rho_n}$,  such that, for each $1\leq i\leq n$,  either $\rho_i=\entity$ or $\rho_i=\val$. This tuple  specifies the \emph{types of the arguments of $P$}, i.e., whether each argument of $P$ ranges over entities ($\entity$) or values ($\val$). We also write $\type(P,i)$ for the \emph{type} $\rho_i$ of the $i$-th argument of $P$.

Note that the built-in predicates from $\alphaBuiltIns$ are special, pre-interpreted predicates, whose arguments range only over values, i.e., if $B$ is an $n$-ary built-in predicate, then  $\type(B)$ is the $n$-ary tuple $\tup{\val,\ldots,\val}$. 
The examples in Sec.~\ref{sec:introduction}
use  the Jaccard similarity $\JaccardSim$ as a built-in predicate. 

An \emph{atom} is an expression $P(t_1,\dots,t_n)$, where $P \in \alphaPredicates $, and each $t_i$ is either a variable from $\alphaVariables$ or a constant, which in turn is either an entity from $\alphaEntities$, if $\type(P,i)=\entity$, or a value from $\alphaValues$, if $\type(P,i)=\val$. When $t_i \in \alphaVariables$, we call it an \emph{entity-variable}  if $\type(P,i)=\entity$, or a \emph{value-variable}, if $\type(P,i)=\val$. A \emph{ground} atom is an atom with no variables.

A \emph{conjunction $\phi(\vec{x})$ of atoms} is an expression $P_1(\vec{t}_1) \land \ldots \land P_m(\vec{t}_m)$,
\commentout{
\begin{align}
	\label{eq:conj}
	\begin{array}{c}
		P_1(\vec{t}_1) \land \ldots \land P_m(\vec{t}_m)
	\end{array}
\end{align}}
where each $P_j(\vec{t}_j)$ is an atom such that each variable in $\vec{t}_j$ is among those in $\vec{x}$.
We also require that every variable $x$ in $\vec{x}$ is either an entity-variable or a value-variable and that, if $x$ occurs in an atom whose predicate is built-in (note that thus $x$ is a value-variable), then there exists some other atom $P_j(\vec{t}_j)$ in $\phi(\vec{x})$ such that $P_j \in \alphaOntology$ and $x$ is  in  $\vec{t}_j$.
If a conjunction contains no built-in predicates, we say that it is \emph{built-in free}; 
if it contains no entities or values, we say that it is \emph{constant-free}.

%
%

\ifthenelse{\equal{\showproofs}{1}}{

\medskip

\noindent \textbf{Remark on notation.} A}
As done in this section, in the formulas appearing throughout the paper, we use $e$ to denote an entity from $\alphaEntities$, we use $v$ to denote  a value from $\alphaValues$, we use $c$ to denote a constant (i.e., $c \in \alphaEntities \cup \alphaValues$), we use
$x$, $y$, $w$, and $z$ to denote variables from $\alphaVariables$, and we use $t$ to denote terms (i.e., $t \in \alphaVariables \cup \alphaEntities \cup \alphaValues$). Typically,  we use  $P$ to denote a predicate from $\alphaPredicates$. All the above symbols may appear with subscripts. 
In the examples, we use self-explanatory symbols, and we write entities in bold font 
and values in true type font.

\renewcommand{\Mod}{\mathit{Sol}}

\newcommand{\Emp}{\mathit{Emp}}
\newcommand{\Pers}{\mathit{Pers}}
\newcommand{\CI}{\mathit{CI}}
\newcommand{\phone}{\mathit{phone}}
\newcommand{\name}{\mathit{name}}
\newcommand{\comp}{\mathit{comp}}
\newcommand{\Yahoo}{\textnormal{\textbf{Yahoo}}}
\newcommand{\IBM}{\textnormal{\textbf{IBM}}}
\newcommand{\director}{\mathit{dir}}
\newcommand{\CEO}{\mathit{CEO}}

\newcommand{\Body}{\mathsf{body}}

\section{Framework}
\label{sec:framework}

In this section, we present the syntax and the semantics of a framework for entity resolution in knowledge bases. 
\commentout{While the syntax involves standard notions used in data exchange and integration, the semantics  is novel, because it entails  merging  entities into sets of entities and   merging  values into sets of values.}


\smallskip

\noindent \textbf{Syntax.} A \emph{knowledge base} (KB) $\K$ is a pair $(\T,\DB)$, consisting of a TBox  $\T$ and a database $\DB$.  
The TBox is a finite set of \emph{tuple-generating dependencies} (tgds) and \emph{equality-generating dependencies} (egds).  
A \emph{tgd} is a formula 
\begin{align}
	\label{eq:tgd}
	\begin{array}{c}
		\forall \vec{x}(\phi(\vec{x}) \rightarrow \exists \vec{y} \psi(\vec{x}, \vec{y})),
	\end{array}
\end{align}
where $\phi(\vec{x})$ and $\psi(\vec{x},\vec{y})$  are conjunctions of atoms, such that $\vec{x}$ and  $\vec{y}$ have no variables in common and $\psi(\vec{x},\vec{y})$ is built-in free and, for simplicity, constant-free.  As in \cite{FKMP05}, we assume that all variables in $\vec{x}$ appear in $\phi(\vec{x})$, but not necessarily in $\psi(\vec{x}, \vec{y})$. We call $\phi(\vec{x})$ the body of the tgd, and $\psi(\vec{x},\vec{y})$ the head of it. 



An \emph{egd} is a formula 
\begin{equation}
	\label{eq:egd}
	\forall \vec{x}(\phi(\vec{x}) \ra  y = z),
\end{equation}
where $\phi(\vec{x})$ is a conjunction of atoms  and $y$ and $z$ are distinct variables occurring in $\phi(\vec{x})$, such that either both $y$ and $z$ are  entity-variables (in which case we have an  \emph{entity-egd}) or both $y$ and $z$ are value-variables (in which case we have a \emph{value-egd}). We call $\phi(\vec{x})$ the body of the egd. For \emph{value-egds}, we require that neither $y$ nor $z$  occur in atoms having a built-in as predicate. This ensures that the meaning of built-ins remains fixed.
We will write  $\Body(r)$ to denote  the body of a tgd  or an egd $r$;  furthermore, we may write $r$ with no quantifiers.

\begin{example}
	\label{ex:TBox}
	Let $\T$ be the TBox consisting of the rules  
 \begin{align*}
        \small
		\arraycolsep=1.5pt
		\begin{array}{ll}
			(r_1) & \CI(p_1,\name_1,\phone_1) \wedge \CI(p_2,\name_2,\phone_2) \wedge\\
		      &   \JaccardSim(\name_1,\name_2,0.6) \rightarrow p_1 = p_2 \\
			(r_2) & \CI(p, \name_1, \phone_1) \wedge \CI(p, \name_2, \phone_2)\\ 
                 &   \rightarrow \name_1 = \name_2 \\
            (r_3) & \CI(p,\name_1,\phone_1)\wedge \CI(p,\name_2,\phone_2)\\ 
                 &  \rightarrow \phone_1=\phone_2 \\
            (r_4) & \CI(p,\name,\phone) \rightarrow \Emp(p,\comp) \wedge \CEO(\comp,\director)\\
            (r_5) & \Emp(p,\comp_1) \wedge \Emp(p,\comp_2) \rightarrow \comp_1 = \comp_2\\
           (r_6) & \CI(p_1,\name_1,\phone) \wedge \CI(p_2,\name_2,\phone)\\ 
                &  \rightarrow \SameHouse(p_1,p_2,\phone)
		\end{array}
        \normalsize
	\end{align*}
		Here, $\type(\CI) = \tup{\entity,\val,\val}$, $\type(\Emp) = \tup{\entity,\entity}$, $\type(\CEO) = \tup{\entity,\entity}$, and $\type(\SameHouse) = \tup{\entity,\entity,\val}$.
The predicates have the meaning suggested by their names. In particular, the predicate $\CI$ maintains contact information of individuals, whereas $\Emp$ associates employees to the companies they work for. 
Each of the  six rules makes an assertion about the predicates. In particular,
rule $r_1$ states that if the Jaccard similarity of two names is higher then $0.6$, then these names are names of the same individual. 
\ifthenelse{\equal{\showproofs}{1}}{The TBox also says that everyone can have only one name ($r_2$) and only one the landline phone number ($r_3$); everyone having contact information is also an employee in some company which in turn has a CEO ($r_4$); everyone can work for only one company ($r_5$); two individuals with the same landline phone number live in the same house ($r_6$). \qed}
\qed	
\end{example}

%
%
%
%

The database $\DB$ of a KB $\K$ is a finite set of ground atoms of the form $P(c_1,\ldots,c_n)$ over the alphabets $\alphaPredicates$, $\alphaEntities$, and $\alphaValues$, where $P \in \alphaPredicates$  and each $c_i$ is  an entity from  $\alphaEntities$, if $\type(P,i)=\entity$, or a value from $\alphaValues$, if $\type(P,i)=\val$.

\begin{example}
	\label{ex:DB}
	Let $\DB$ be the database consisting of the atoms
	\begin{align*}
        \small
		\arraycolsep=1.2pt
        \begin{array}{llll}
			(g_1) & \CI(\Doe_1, \texttt{J. Doe}, \PNumberA) &~~~(g_4) & \Emp(\Doe_2, \Yahoo)  \\
			(g_2) & \CI(\Doe_2, \texttt{John Doe}, \PNumberB) &~~~(g_5) & \Emp(\Doe_3, \IBM) \\
			(g_3) & \CI(\Doe_3, \texttt{Mary Doe}, \PNumberA)  &~~~(g_6) & \CEO(\Yahoo, \Doe_1). 
		\end{array}
	\end{align*}
In particular, the atom $g_1$ of $\DB$ asserts that $\Doe_1$ has name \texttt{J.Doe} and phone number \PNumberA. 
 \ifthenelse{\equal{\showproofs}{1}}{Moreover, the database $\DB$ also specifies that $\Doe_1$ has phone number \PNumberA\ ($g_1$), $\Doe_2$ has name \texttt{John Doe} and phone number \PNumberB\ ($g_2$), $\Doe_3$ has name \texttt{Mary Doe} and phone number \PNumberA\ ($g_3$), $\Doe_2$ is employee of $\Yahoo$ ($g_4$), $\Doe_3$ is employee of $\IBM$ ($g_5$), and the CEO of $\Yahoo$ is $\Doe_1$ ($g_6$). \qed}
 \qed
	\end{example}

To ensure that built-in predicates have the same semantics in every KB, 
we assume that we have a fixed (infinite and countable) set $GB$ of ground atoms of the form  $B(v_1,..v_n)$, where $B$ is in $ \alphaBuiltIns$ and $v_1,..v_n$ are in  $\alphaValues$. Intuitively, $GB$ contains all facts about built-in predicates that hold overall. 
Given a  KB $\K=(\T,\DB)$, we assume that $\DB=\DB_O \cup \DB_B$, where $\DB_O$ contains only ground atoms with predicate from $\alphaOntology$ and $\DB_B$ is the (finite) set of all atoms in $GB$ whose built-in predicates and values are mentioned in $\T$ and in $\DB_O$.



\smallskip

\noindent \textbf{Semantics.} Let $\alphaEntityNulls$ and $\alphaValueNulls$ be two
infinite, countable, disjoint sets 
that are also disjoint from  the alphabets introduced in Sec.~\ref{sec:preliminaries}. 
We call $\alphaEntityNulls$ the set of entity-nulls and $\alphaValueNulls$  the set of value-nulls; their union is 
referred to as the set of nulls.
We use
%
$\sig(\K)$ to denote the signature of a KB $\K$, i.e., the set of symbols of $\alphaPredicates$, $\alphaEntities$ and $\alphaValues$ occurring in $\K$. 


The semantics of a KB is given using 
 special databases, called KB \emph{instances}, whose ground atoms have  components that are either equivalence classes of entities and entity-nulls or non-empty sets of values and value-nulls. 

\begin{definition}
\label{def:instance}
Let $\K$ be a KB, $\mathcal{S}$  a subset of  $(\alphaEntities \cap \sig(\K)) \cup \alphaEntityNulls$, and $\sim$  an  equivalence relation on $\mathcal{S}$, called the equivalence relation \emph{associated} with $\K$. An  \emph{instance $\I$ for $\K$ \wrt $\sim$} is a set of \emph{facts} $P(\genericSet_1,\ldots,\genericSet_n)$ such that $P \in \alphaPredicates \cap \sig(\K)$, $\arity(P)=n$, and, for each $1 \leq i \leq n$, we have that $T_i \neq \emptyset$ and either $T_i \in {\mathcal{S}/\hspace*{-3pt}\sim}$, if $\type(P,i)=\entity$, or $T_i \subseteq (\alphaValues \cap \sig(\K)) \cup \alphaValueNulls$, if $\type(P,i)=\val$. 
\end{definition}

To denote an equivalence class  in $\I$, we may use the symbol $E$\ifthenelse{\equal{\showproofs}{1}}{ (possibly with a subscript).}. 
Note that an equivalence class $E$  in $\I$ may contain both entities and entity-nulls. In the sequel, we  may simply refer to $E$ as an equivalence class of entities. Similarly, a non-empty subset of $(\alphaValues \cap \sig(\K)) \cup \alphaValueNulls$  in $\I$ can be simply referred to as a set of values, denoted with the symbol $V$\ifthenelse{\equal{\showproofs}{1}}{ (possibly with a subscript).}. We will use $\genericSet$ (possibly with a subscript) to denote a set that can be either an equivalence class of entities or a set of values.

\commentout{The active domain of an instance $\I$ for $\K$ is the set containing all equivalence classes and all sets of values occurring in $\I$, as formalized below, where we also give the notion of underlying domain.}

\begin{definition}
	Let  $\K$ be a KB and let  $\I$ be  an instance for $\K$ \wrt to an equivalence relation $\sim$.

The \emph{active domain of $\I$}, denoted by $\activeD(\I)$,  is the set
	
\centerline{$ \{\genericSet \mid \mbox{there are $P(\genericSet_1,\ldots,\genericSet_n) \in \I$ and  $i\leq n$ with  $T=T_i$}\}$.} 
 
We write  $\activeD_E(\I)$ and $\activeD_V(\I)$ to  denote the set of all 
equivalence classes of entities and the set of all sets of values contained in $\activeD(\I)$, respectively. Obviously, $\activeD(\I)=\activeD_E(\I) \cup \activeD_V(\I)$.
	
		The \emph{underlying domain of $\I$}, denoted by $\underD(\I)$, is the set 
  $\underD_E(\I) \cup \underD_V(\I)$, where

\centerline{$\underD_E(\I)= \{e \mid\mbox{there is $\entitySet \in \activeD_E(\I)$ and  $e \in \entitySet$}\}$}

\noindent and 

\centerline{
$\underD_V(\I)= \{v \mid \mbox{there is $\valueSet \in \activeD_V(\I)$  and  $v \in \valueSet$}\}$.}
\end{definition}

If $\I$ is an instance of $\K$ \wrt  $\sim$,
then it is easy to see that $\I$ is an instance of $\K$ \wrt the equivalence relation $\sim'$ induced on $\underD_E(\I)$ by $\sim$. In what follows, we will consider only instances \wrt equivalence relations over $\S=\underD_E(\I)$.
\commentout{
It is not difficult to see that if $\I$ is an instance of $\K$ \wrt an equivalence relation $\sim$ over a set $\S \subseteq (\alphaEntities \cap \sig(\K)) \cup \alphaEntityNulls$, then $\I$ is an instance of $\K$ \wrt the equivalence relation $\sim'$ over $\underD_E(\I)$ such that $\sim'=\sim \setminus \{\tup{n_i,n_j}~|~n_i,n_j  \not \in \underD_E(\I)\}$. Moreover $\S/\sim \supseteq \underD_E(\I) /\sim'$ and $\sim'$ is unique, i.e., there is no $\sim'' \neq \sim$ such that $\sim''$ is an equivalence relation over $\underD_E(\I)$ and $\I$ is an instance for $\K$ \wrt $\sim''$. Therefore, in the following we will consider only instances \wrt equivalence relations over $\S=\underD_E(\I)$.}
Furthermore, we may simply call $\I$ an instance for $\K$ and leave the equivalence relation associated with $\I$ implicit. 

\begin{example}
	\label{ex:interpretation}
Let $\K=(\T,\DB)$ be a KB such that $\T$ and $\DB$ are as in Ex.~\ref{ex:TBox} and in Ex.~\ref{ex:DB}, respectively. 
%
%
%
%
Then, consider 
the following set $\I$ of facts:
\small
\[
\arraycolsep=1pt
\begin{array}{ll}
\hspace*{-1pt}(d_1)\CI([\Doe_1], \{ \texttt{J. Doe} \}, \{ \PNumberA \} ) & (d_4)\Emp([\Doe_2], [\Yahoo])\\ 
\hspace*{-1pt}(d_2)\CI([\Doe_2], \{ \texttt{John Doe} \}, \{ \PNumberB \} ) & (d_5)\Emp([\Doe_3], [\IBM])\\
\hspace*{-1pt}(d_3)\CI([\Doe_3], \{ \texttt{Mary Doe} \}, \{ \PNumberA \}) & (d_6)\CEO([\Yahoo],[\Doe_1])
\end{array}
\]
\normalsize
 $\I$ is an instance for $\K$ \wrt the identity relation over the set $\mathcal{S}=\{\Doe_1,\Doe_2,\Doe_3,\Yahoo,\IBM\}$.
%
%
%
%
Further, let $\textnormal{\textbf{e}}_1^\bot$ and $\textnormal{\textbf{e}}_2^\bot$ be entity-nulls, and $\sim'$ be an equivalence relation over $\mathcal{S} \cup \{\textnormal{\textbf{e}}_1^\bot,\textnormal{\textbf{e}}_2^\bot\}$, such that $\Doe_1 \sim' \Doe_2$, $\IBM \sim' \textnormal{\textbf{e}}_1^\bot$ and their symmetric versions are the only equivalences in $\sim'$ different from the identity. The following set $\I'$ of facts  is an instance for $\K$ \wrt $\sim'$.
\[
\small
\begin{array}{l}
\CI([\Doe_1,\Doe_2], \{\texttt{J. Doe},\texttt{John Doe} \}, \{\PNumberA,\PNumberB\} ),\\
\CI([\Doe_3],\{ \texttt{Mary Doe} \},\{ \PNumberA \}),\\
\Emp([\Doe_1,\Doe_2],[\Yahoo]),~~\Emp([\Doe_3],[\IBM,\textnormal{\textbf{e}}_1^\bot])\\
\CEO([\Yahoo],[\Doe_1,\Doe_2]),~~\CEO([\IBM,\textnormal{\textbf{e}}_1^\bot], [\textnormal{\textbf{e}}_2^\bot])\\ 
\SameHouse([\Doe_1,\Doe_2],[\Doe_3],\{\PNumberA\}),\\\SameHouse([\Doe_3],[\Doe_1,\Doe_2],\{\PNumberA\}).
\normalsize \hfill \qed
\end{array}
\]
\end{example}

%
%
%
%
%
%
%

To define the notions of satisfaction of tgds and egds by an instance, we first introduce the notion of an \emph{assignment} from a conjunction $\phi(\vec{x})$ of atoms to an instance $\I$ of a KB $\K$. To formalize this notion, we  need a preliminary transformation of $\phi(\vec{x})$ that substitutes each occurrence of a value-variable in $\phi(\vec{x})$ with a fresh variable. We call such fresh variables \emph{set-variables} and denote the result of the transformation $\trasf(\phi(\vec{x}))$. If $x$ is a value-variable in $\vec{x}$, we write $\setvar(x,\trasf(\phi(\vec{x})))$ to denote the set of fresh variables used in $\trasf(\phi(\vec{x}))$ to replace the occurrences of $x$ in $\phi(\vec{x})$.
\commentout{To formalize this notion, we  describe a preliminary transformation of $\phi(\vec{x})$ that substitutes each occurrence of a value-variable in $\phi(\vec{x})$ with a fresh variable. We call such fresh variables \emph{set-variables}. More precisely, we write $\trasf(\phi(\vec{x}))$ to denote a conjunction of atoms obtained as follows:
\begin{tabbing}
	for each\= ~value-variable $x$ in $\vec{x}$\\
	\> for each\= ~occurrence of $x$ in $\phi(\vec{x})$:\\
	\>\> replace $x$ with a fresh symbol $S$ 
\end{tabbing}

If $x$ is a value-variable in $\vec{x}$, we write $\setvar(x,\trasf(\phi(\vec{x})))$ to denote the set of fresh variables used in $\trasf(\phi(\vec{x}))$ to replace the occurrences of $x$ in $\phi(\vec{x})$.
}
For example, if $\phi(\vec{x}) =P_1(x,y,z)\land P_2(y,z) \land P_3(x,w)$, where $\type(P_1)=\tup{\entity,\entity,\val}$, $\type(P_2)=\tup{\entity,\val}$, and $\type(P_3)=\tup{\entity,\val}$, then $\trasf(\phi(\vec{x}))=P_1(x,y,S_1^z)\land P_2(y,S_2^z)\land P_3(x,S_1^w)$, where  $S_1^z$, $S_2^z$, $S_1^w$ are the fresh set-variables introduced by $\trasf$. Also, $\setvar(z,\trasf(\phi(\vec{x})))=\{S_1^z,S_2^z\}$ and  $\setvar(w,\trasf(\phi(\vec{x})))=\{S_1^w\}$. 
In what follows, if $x$ is a value-variable, then   each set-variable in $\setvar(x,\trasf(\phi(\vec{x})))$ will have $x$ as a superscript.

We are now ready to formally define assignments. 

\begin{definition}
	\label{def:assignment-revisited}
	Let $\phi(\vec{x})$ be a conjunction of atoms 
	and let $\I$ be an instance for a KB $\K$ \wrt 
 $\sim$. 
 An \emph{assignment from $\phi(\vec{x})$ to $\I$} is a mapping $\mu$ from the variables and values in $\tau(\phi(\vec{x}))$ to $\activeD(\I)$, defined as follows:
	\begin{enumerate}	
		\item $\mu(x)$ is an equivalence class in $\activeD_E(\I)$, for every entity-variable $x$; 
		\item  $\mu(v)=\valueSet$ such that 
	$\valueSet \in \activeD_V(\I)$ and $v \in \valueSet$, for every value $v$; 
		\item  $\mu(S)$ is a set of values in $\activeD_V(\I)$, for every set-variable $S$; 
		\item $\bigcap_{i=1}^{k} \mu(S_i^x) \neq \emptyset$, for every value-variable $x$ such that $\setvar(x,\trasf(\phi(\vec{x})))=\{S_1^x,\ldots,S_k^x\}$; 
	%
	%
%
%
		\item $\I$ contains a fact of the form $P(\mu(x),\mu(S),[e]_\sim,\mu(v))$, for each atom $P(x,S,e,v)$ of $\trasf(\phi(\vec{x}))$, where $x$ is an entity-variable, $S$ is a set-variable, 
    $e$ is an entity in $\alphaEntities$ and $v$ is a value in $\alphaValues$ (the definition  generalizes in the obvious way for atoms of different form).
	\end{enumerate}	
 \end{definition}

In words, the above definition says that an assignment maps every entity-variable to an equivalence class of entities (Condition~1), every value to a set of values containing it (Condition~2),  and every occurrence of a value-variable to a set of values (Condition~3), in such a way that multiple occurrences of the same value-variable are mapped to sets with a non-empty intersection (Condition~4). This  captures the intuition that, since predicate arguments ranging over values are interpreted through sets of values, a join between such arguments holds when such sets have a non-empty intersection. Finally, Condition~5 states that $\tau(\phi(\vec{x}))$ is ``realized" 
in $\I$.
Note that an assignment also maps values to sets of values (Condition~2). This  allows an assignment to map atoms in  $\phi(\vec{x})$ to facts in $\I$, since predicate arguments ranging over values are instantiated in $\I$ by sets of values.

If $P(\vec{t})$ is an atom of $\phi(\vec{x})$, 
we write $\mu(P(\vec{t}))$ for the fact of $\I$ in  Condition~5 
and call it the \emph{$\mu$-image (in $\I$) 
of $P(\vec{t})$}. We write $\mu(\phi(\vec{x}))$ for the set $\{ \mu(P(\vec{t}))~|~P(\vec{t})\textrm{ occurs in } \phi(\vec{x})\}$, and call it the \emph{$\mu$-image (in $\I$) 
of $\phi(\vec{x})$}.


\begin{example}
	\label{ex:assignment}
Consider  the tgd $r_6$ of Ex.~\ref{ex:TBox} and the instance $\I'$ given in Ex.~\ref{ex:interpretation} and apply 
 $\tau$ to the body of $r_6$ to obtain $\CI(p_1,S^{\name_1},S_1^{\phone}) \wedge CI(p_2,S^{\name_2},S_2^{\phone})$. Let $\mu$ be the following mapping: 
\[
 \small
\begin{array}{l}
	\mu(p_1)=[\Doe_1,\Doe_2],~~\mu(S^{\name_1})=\{\texttt{J. Doe},\texttt{John Doe}\}\\
    \mu(S_1^{\phone})=\{\PNumberA,\PNumberB\},~~\mu(p_2)=[\Doe_3]\\
	\mu(S^{\name_2})=\{\texttt{Mary Doe}\},~~\mu(S_2^{\phone})=\{\PNumberA\}.
 \normalsize
\end{array}
\]
It is easy to see that $\mu$ is an assignment from the body of 
 $r_6$ to $\I'$ (note the non-empty intersection between  $\mu(S_1^{\phone})$ and $\mu(S_2^{\phone})$). Let us now apply $\trasf$ to the head of $r_6$ to obtain $\SameHouse(p_1,p_2,R^{\phone})$. The following mapping $\mu'$ is an assignment from $\SameHouse(p_1,p_2,\phone)$
to $\I'$:
	\[
 \small
\begin{array}{c}
	\mu'(p_1)=[\Doe_1,\Doe_2], \mu'(p_2)=[\Doe_3], \mu'(R^{\phone})=\{\PNumberA\}.
 \normalsize
\end{array}
\]
\end{example}

We are now ready to define the semantics of tgds and egds. 

\begin{definition}
\label{def:rule-satisfaction}
An instance $\I$ for a KB $\K$ satisfies:
\begin{itemize}
\item a tgd of the form (\ref{eq:tgd}), if for each assignment $\mu$ from $\phi(\vec{x})$ to $\I$ there is an assignment $\mu'$ from  $\psi(\vec{x}, \vec{y})$ to $\I$ such that, for each $x$ in $\vec{x}$ occurring in both $\phi(\vec{x})$ and $\psi(\vec{x}, \vec{y})$:  
\begin{itemize}
	\item $\mu(x)=\mu'(x)$, if $x$ is an entity-variable;
	\item $\bigcap_{i=1}^{k} \mu(S_i^x) \subseteq \bigcap_{i=1}^{\ell} \mu'(R_i^x)$, where $\{S_1^x,\ldots,S_k^x\}=\setvar(x,\trasf(\phi(\vec{x})))$ and $\{R_1^x,\ldots,R_\ell^x\}=\setvar(x,\trasf(\psi(\vec{x},\vec{y})))$, if $x$ is a value-variable.
	\end{itemize}
Each such assignment $\mu'$  is called a 
\emph{head-compatible tgd-extension} of $\mu$ to $\I$ (or simply a tgd-extension of $\mu$ to $\I$). 
\item an entity-egd of the form (\ref{eq:egd}), if each assignment $\mu$ from $\phi(\vec{x})$ to $\I$ is such that $\mu(y)=\mu(z)$;
\item a value-egd of the form (\ref{eq:egd}), if each assignment $\mu$ from $\phi(\vec{x})$ to $\I$ is such that
$\mu(S^y)=\mu(S^z)$ for all set-variables $S^y,S^z \in \setvar(y,\trasf(\phi(\vec{x}))) \cup \setvar(z,\trasf(\phi(\vec{x})))$.
\end{itemize} 
\end{definition}
In  words, Def.~\ref{def:rule-satisfaction} expresses the following for each assignment $\mu$ from the body of a rule $r$ to $\I$:

 $(1)$ If $r$ is a tgd, the definition stipulates two different behaviours for frontier variables (i.e., variables occurring in both the body and the head of the tgd). 
 Specifically, for the tgd to be satisfied, it should be possible to extend $\mu$ so that
 (i) every frontier entity-variable is assigned to exactly the same equivalence class both in the body and the head of the rule (this is in line with the standard semantics for frontier variables in tgds); and  (ii) the intersection of the sets of values assigned in the body of the tgd to the various occurrences of a frontier value-variable $x$ is contained in the intersection of the sets of values assigned to the occurrences of $x$ in the head of the tgd. Note that,  while $\mu$ maps all the occurrences of an entity-variable in $\phi(\vec{x})$ to the same  equivalence class, it may map multiple occurrences of a value-variable to different sets of values with non empty-intersection; thus, it is  natural that this intersection is ``propagated" to the head.

$(2)$ If $r$ is an entity-egd, the notion of satisfaction is standard, since it requires that $\mu$ assigns $y$ and $z$ to the same equivalence class of entities. 


$(3)$ If $r$ is a value-egd,  we have again to take into account that each occurrence of $y$ in the rule body can be assigned by $\mu$ to a different set of values (provided that all such sets have a non-empty intersection), and analogously for each occurrence of $z$. The definition stipulates that, for the value-egd to be satisfied, all such sets of values are equal.

\begin{example}
$\I'$ from Ex.~\ref{ex:interpretation} satisfies tgd 
$r_6$ of Ex.~\ref{ex:TBox}, since $\mu'$
is a head-compatible tgd extension of the only assignment $\mu$ from the body of $r_6$ to $\I'$. Indeed, $\mu'(p_1)=\mu(p_1)$, $\mu'(p_2)=\mu(p_2)$, and $\mu'(S^{\phone})=\mu(S_1^{\phone}) \cap \mu(S_2^{\phone})$ (cf.\ Ex.~\ref{ex:assignment}).
It is also easy to see that $\I'$ satisfies all rules of Ex.~\ref{ex:TBox}.~\qed
\end{example}


Finally, we define when an instance is a solution for a KB.

\begin{definition}\label{def:solution}
Let $\K=(\T,\DB)$ be a KB and $\I$  an instance.
\begin{itemize}
    \item 
$\I$ \emph{satisfies}  $\T$ if $\I$ satisfies every tgd and egd in $\T$.

\item  $\I$ satisfies a ground atom $P(c_1,\ldots,c_n)$ in $\DB$ if there is a fact  $P(\genericSet_1,\ldots,\genericSet_n)$ in $\I$ such that $c_i \in \genericSet_i$, for $1\leq i \leq n$.

\item  $\I$ satisfies  $\DB$ if $\I$ satisfies all ground atoms in $\DB$. 


\item  $\I$ is a \emph{solution} for    $\K$, denoted by $\I \models \K$, if $\I$ satisfies $\T$ and $\DB$. The set of all solutions of a KB $\K$ is denoted by $\Mod(\K)$, i.e., $\Mod(\K) = \{ \I~|~\I \models \K \}$.
\end{itemize}
\end{definition}


\begin{example}
\label{exa:univ-sol}
Consider the instances $\I$ and $\I'$ given in Ex.~\ref{ex:interpretation} for the $\K$ of Ex.~\ref{ex:TBox} and Ex.~\ref{ex:DB}. It is easy to see that $\I$ is not a solution for $\K$, whereas $\I'$ is a solution for $\K$.
\qed
\end{example}


\noindent \textbf{Universal solutions.} 
It is well known that the \emph{universal} solutions exhibit good properties that make them to be the preferred solutions (see, e.g., 
\cite{FKMP05,CaGK13,CDLLR07}). 
To introduce the notion of a universal solution in  our framework, we first need to adapt the notion of homomorphism.
We begin with some auxiliary definitions and notation.

\begin{definition}  \label{def:contain} 	Let $\vec{\genericSet}=\tup{\genericSet_1,\ldots,\genericSet_n}$ and
$\vec{\genericSet'}=\tup{\genericSet'_1,\ldots,\genericSet'_n}$ be two tuples such that each $\genericSet_i$ and each $\genericSet'_i$ is either a set of entities and entity-nulls or a set of values and value-nulls.
	
	\begin{itemize} 
		\item $\vec{\genericSet'}$ \emph{dominates} $\vec{\genericSet}$, 
  denoted $\vec{\genericSet} \leq \vec{\genericSet'}$, if $\genericSet_i \subseteq \genericSet'_i$,  for all $i$.
  \item 
  $\vec{\genericSet'}$ \emph{strictly dominates} $\vec{\genericSet}$, denoted   $\vec{\genericSet} < \vec{\genericSet}'$,  
  if  $\vec{\genericSet} \leq \vec{\genericSet'}$ and $\vec{\genericSet} \neq \vec{\genericSet'}$.
		
		\item Let  $P(\vec{\genericSet})$ and $P(\vec{\genericSet'})$ be  facts. 
  $P(\vec{\genericSet'})$ \emph{dominates} $P(\vec{\genericSet})$, 
  denoted $P(\vec{\genericSet}) \leq P(\vec{\genericSet'})$, if $\vec{\genericSet} \leq \vec{\genericSet'}$; 
   $P(\vec{\genericSet'})$ \emph{strictly dominates} $P(\vec{\genericSet})$, denoted $P(\vec{\genericSet}) < P(\vec{\genericSet'})$, if $\vec{\genericSet} < \vec{\genericSet'}$.
	\end{itemize}
	
\end{definition}

\begin{definition}
	\label{def:homomorphism}
	Let $\I_1$ and $\I_2$  be two instances of a KB $\K$.
	A homomorphism $h:\I_1 \rightarrow \I_2$ is a mapping from the elements of $\underD(\I_1)$ to elements of $\underD(\I_2)$ such that:
	\begin{enumerate}
		\item $h(e)=e$, for every entity $e$ in $\underD_E(\I_1) \cap \alphaEntities$;
		\item $h(e_\bot)$ belongs to $\underD_E(\I_2)$, for every entity-null $e_\bot$ in  $\underD_E(\I_1) \cap \alphaEntityNulls$;
		\item $h(v)=v$, for every value $v$ in $\underD_V(\I_1) \cap \alphaValues$;
		\item $h(v_\bot)$ belongs to   $\underD_V(\I_2)$, for every value-null $v_\bot$ in $\underD_V(\I_1) \cap \alphaValueNulls$;
		\item for every $P(\genericSet_1,\ldots,\genericSet_n)$ in $\I_1$, there is a  $P(\genericSettwo_1,\ldots,\genericSettwo_n)$ in $\I_2$ such that $P(h(\genericSet_1),\ldots,h(\genericSet_n))\leq P(\genericSettwo_1,\ldots,\genericSettwo_n)$, where $h(\genericSet_i)=\{h(x) \mid x \in \genericSet_i)\}$, for $1 \leq i \leq n$.
	\end{enumerate}
\end{definition}

In the sequel, we may \ifthenelse{\equal{\showproofs}{1}}{write $h(P(\genericSet_1,\ldots,\genericSet_n))$ instead of $P(h(\genericSet_1),\ldots,h(\genericSet_n))$, and may u} 
use $h(\tup{\genericSet_1,\ldots,\genericSet_n})$ to denote the tuple $\tup{h(\genericSet_1),\ldots,h(\genericSet_n)}$.

\commentout{It is easy to see that the composition of two homomorphisms is  a homomorphism.}

We now define the key notion of a universal solution. 

\begin{definition}
	\label{def:universal}
	A solution $\U$ for a KB $\K$ is \emph{universal} if, for every $\I \in \Mod(\K)$, there is a homomorphism $h: \U \rightarrow \I$.
\end{definition}

The instance $\I'$ in Ex.~\ref{ex:interpretation} is a universal solution for $\K$, as is the instance obtained by eliminating $\textnormal{\textbf{e}}_1^\bot$ from $\I'$.

Two instances $\I_1$ and $\I_2$ are \emph{homomorphically equivalent} if there are homomorphisms $h:\I_1 \rightarrow \I_2$ and  $h':\I_2 \rightarrow \I_1$.
All universal solutions
are homomorphically equivalent.

\section{Query answering}
\label{sec:query-answering}
A conjunctive query (CQ) $q$ is  
a formula $\exists \vec{y}\phi(\vec{x},\vec{y})$,
where 
$\phi(\vec{x}, \vec{y})$ is a built-in free conjunction of atoms. 
The \emph{arity} of $q$ is the number 
of its free variables in $\vec{x}$; we will often write $q(\vec{x})$, instead of just $q$, to indicate the free variables of $q$.

Let  $q(\vec{x}): \exists \vec{x} \phi(\vec{x},\vec{y})$ be a CQ, where $\vec{x}=x_1,\ldots,x_n$. 
Given a KB $\K$ and an instance $\I$ for $\K$, 
the \emph{answer to $q$ on $\I$}, denoted by $q^\I$, is the set of all tuples 
$\tup{\genericSet_1,\ldots,\genericSet_n}$ such that there is an assignment $\mu$ from $\phi(\vec{x}, \vec{y})$ to $\I$ for which 
\begin{itemize}
\item $\genericSet_i = \mu(x_i)$, if $x_i$ is an entity-variable;
\item $\genericSet_i=\bigcap_{j=1}^{k}\mu(S_j^{x_i})$,  if $x_i$ is a value-variable, where $\{S_1^{x_i},\ldots,S_k^{x_i}\}=\setvar(x_i,\trasf(\phi(\vec{x},\vec{y})))$.
\end{itemize}
We will also say that   $\mu$ is an assignment from $q(\vec{x})$ to $\I$.

\begin{example}
\label{ex:query}
Let $q(x)$ be the CQ
$$
\exists p_1,p_2~\CI(p_1,\texttt{J. Doe},x) \wedge \CI(p_2,\texttt{Mary Doe},x)$$
 asking for the phone number in common between (the entities named) \texttt{J. Doe} and \texttt{Mary Doe}.
The answer to $q(x)$ on the instance $\I'$ in Ex.~\ref{ex:interpretation} is
$q^{\I'}=\{\tup{\{\PNumberA\}}\}$. This is obtained through the assignment $\mu_q$
\small
\[
\begin{array}{l}
\hspace*{-0.2cm}\mu_q(p_1)=[\Doe_1,\Doe_2],\mu_q(\texttt{J. Doe})=\{\texttt{J. Doe},\texttt{John Doe}\}\\
\hspace*{-0.2cm}\mu_q(S^x_1)=\{\texttt{\PNumberA,\PNumberB}\},\mu_q(p_2)=[\Doe_3]\\
\hspace*{-0.2cm}\mu_q(\texttt{Mary Doe})= \{\texttt{Mary Doe}\},\mu_q(S_2^x)=\{\PNumberA\}
\end{array}
\]
\normalsize
If $q_1(x): \exists z~\CEO(z,x)
	$ is the query asking for the CEOs, then
\commentout{
\begin{equation}
\label{eq:query-CEO}
\small
	\exists z~\CEO(z,x)
 \normalsize
\end{equation}}
 $q_1^{\I'} = \{ \tup{[\Doe_1, \Doe_2]}, \tup{[\textnormal{\textbf{e}}_2^\bot]} \}$. \qed
 \commentout{, where  answers are obtained through assignments $\mu_1$ and $\mu_2$ 
defined as follows:
\[
\small
\begin{array}{l}
	\mu_1(x)=[\Doe_1, \Doe_2],~~\mu_1(z)=[\mathbf{\Yahoo}]\\
	\mu_2(x)=[\textnormal{\textbf{e}}_2^\bot],~~\mu_2(z)=[\mathbf{\IBM},\textnormal{\textbf{e}}_1^\bot]. \hfill \qed
\end{array} 
\normalsize
\]}
\end{example}

The next result tells how  conjunctive queries are preserved under homomorphisms in our framework. 
\commentout{
following proposition follows from the definitions of answer to a query over an instance, 
assignment, and homomorphism between instances. Intuitively, the proposition below clarifies how in our framework CQ answering is preserved under homomorphism.}
%

\begin{restatable}{proposition}{queryunderhomo}
	\label{pro:query-under-homo}
	Let $q$ be a CQ and let $\K$ a KB.
 If $\I_1$, $\I_2$ are two instances for $\K$ and $h:\I_1 \rightarrow \I_2$ is a homomorphism from $\I_1$ to $\I_2$, 
 then, for every $\vec{\genericSet} \in q^{\I_1}$, there is $\vec{\genericSettwo} \in q^{\I_2}$ such that $h(\vec{\genericSet}) \leq \vec{\genericSettwo}$.
\end{restatable}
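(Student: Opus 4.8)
The plan is to take a tuple $\vec{\genericSet}=\langle \genericSet_1,\ldots,\genericSet_n\rangle \in q^{\I_1}$, witnessed by an assignment $\mu$ from $\phi(\vec{x},\vec{y})$ to $\I_1$, and to transform $\mu$ into an assignment $\mu'$ from $\phi(\vec{x},\vec{y})$ to $\I_2$ whose answer tuple $\vec{\genericSettwo}$ dominates $h(\vec{\genericSet})$. I would build $\mu'$ atom by atom over $\trasf(\phi(\vec{x},\vec{y}))$: for each atom $A$, its $\mu$-image $\mu(A)$ is a fact of $\I_1$, so Condition~5 of Def.~\ref{def:homomorphism} yields a fact $G_A\in \I_2$ with $h(\mu(A)) \leq G_A$, i.e. $h(\mu(t))\subseteq (\text{the component of } G_A \text{ at the position of } t)$ for every argument $t$ of $A$. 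I then define $\mu'$ on each symbol of $A$ to be the corresponding component of $G_A$.

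Several symbols may occur in more than one atom, so $\mu'$ must be checked to be well defined. A set-variable occurs in a single atom, since $\trasf$ introduces a fresh one for each occurrence of a value-variable, so here there is nothing to check. For an entity-variable or an entity-constant $t$, $\mu$ assigns a single equivalence class $E=\mu(t)$, and every witnessing fact places the nonempty set $h(E)$ inside some equivalence class of $\I_2$; since distinct equivalence classes are disjoint, this class is unique, and all atoms containing $t$ force the same value $\mu'(t)$. (A value-constant $v$ is assigned the component of $G_A$ at its position, which contains $v$ because $v=h(v)\in h(\mu(v))$; I would assume, as in the example queries, that value-constants are not repeated, so that this choice too is unambiguous.) With $\mu'$ well defined, $\mu'(A)=G_A\in \I_2$ for every atom $A$, giving Condition~5 of Def.~\ref{def:assignment-revisited}, while Conditions~1--3 hold by construction.

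The crux is Condition~4 of Def.~\ref{def:assignment-revisited} together with the domination of the value components of the answer tuple. Fix a value-variable $x$ with $\setvar(x,\trasf(\phi))=\{S_1^x,\ldots,S_k^x\}$. Condition~4 for $\mu$ provides some $w\in \bigcap_{j=1}^{k}\mu(S_j^x)$; then $h(w)\in h(\mu(S_j^x))\subseteq \mu'(S_j^x)$ for every $j$, so $h(w)\in\bigcap_{j=1}^{k}\mu'(S_j^x)$ and Condition~4 holds for $\mu'$. The same observation gives $h\!\left(\bigcap_{j=1}^{k}\mu(S_j^x)\right)\subseteq \bigcap_{j=1}^{k}\mu'(S_j^x)$. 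Now read off $\vec{\genericSettwo}$ from $\mu'$ as in the definition of $q^{\I_2}$: for a free value-variable $x_i$ the component of $h(\vec{\genericSet})$ is $h\!\left(\bigcap_j\mu(S_j^{x_i})\right)$, which is contained in the corresponding component $\bigcap_j\mu'(S_j^{x_i})=\genericSettwo_i$; for a free entity-variable $x_i$ the component is $h(\mu(x_i))\subseteq \mu'(x_i)=\genericSettwo_i$ by construction. Hence $h(\vec{\genericSet})\leq\vec{\genericSettwo}$ in the sense of Def.~\ref{def:contain}, and $\vec{\genericSettwo}\in q^{\I_2}$ since $\mu'$ is a legitimate assignment.

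The main obstacle is exactly the set-valued nature of the framework. In the classical setting one composes an assignment with a homomorphism and lands on a genuine image; here a homomorphism only guarantees \emph{domination}, so the argument must instead exhibit a single witness $h(w)$ lying simultaneously in all the value-sets $\mu'(S_j^x)$ of $\I_2$ and then propagate the resulting containment of intersections to the answer tuple. The second delicate point is the well-definedness of $\mu'$ on entity-variables shared by several atoms, which rests on the disjointness of equivalence classes in $\I_2$ rather than on any property of $h$ alone.
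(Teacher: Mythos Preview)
Your proposal is correct and follows essentially the same approach as the paper: take the witnessing assignment $\mu$ in $\I_1$, use Condition~5 of Def.~\ref{def:homomorphism} to pick dominating facts in $\I_2$, define $\mu'$ from their components, and verify the assignment conditions together with the domination of the answer tuple via $h(\bigcap_j \mu(S_j^x))\subseteq \bigcap_j \mu'(S_j^x)$. The paper argues on a representative query (``without loss of generality'') rather than atom by atom, and implicitly uses the disjointness of equivalence classes where you make it explicit; your caveat about repeated value constants matches the paper's own level of generality, which likewise sidesteps that case in its chosen representative form.
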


\ifthenelse{\equal{\showproofs}{1}}{
    \begin{proof}
	
Without loss of generality, let us consider the query $q(x,y)$ defined has $\exists w \, P_1(x,y,e) \land P_2(x,y,w,v)$, where $x$ is an entity-variable, $y$ and $w$ are value-variables, $e$ is an entity from $\alphaEntities$ and $v$ is a value from $\alphaValues$. By applying the $\tau$ operator to (the conjunction of atoms in) $q(x,y)$ we get $P_1(x,S_1^y,e) \land P_2(x,S_2^y,S^w,v)$. Then, by the definition of answer to a query over an instance it follows that for every tuple $\tup{E,V} \in q^{\I_1}$ there is an assignment $\mu$ from $q(x,y)$ to $\I_1$ such that, $\mu(x)=E$, $\mu(S_1^y)=V_1$, $\mu(S_2^y)=V_2$, $\mu(S^w)=V_3$, $\mu(v)=V_4$, $V=V_1 \cap V_2 =\emptyset$, $v \in V_4$, and $\I_1$ contains the facts $P_1(E,V_1,[e]_{\sim^1})$ and $P_2(E,V_2,V_3,V_4)$ (where $\sim^1$ is the equivalence relation associated to $\I_1$). Since $h$ is homomorphism from $\I_1$ to $\I_2$, the instance $\I_2$ has to contain the facts $P_1(\hat{E},\hat{V}_1,\hat{E}_1)$ and $P_2(\hat{E},\hat{V}_2,\hat{V}_3,\hat{V}_4)$, such that $P_1(h(E),h(V_1),h([e]_{\sim^1})) \leq P_1(\hat{E},\hat{V}_1,\hat{E}_1)$ and $P_2(h(E),h(V_2),h(V_3),h(V_4)) \leq P_2(\hat{E},\hat{V}_2,\hat{V}_3,\hat{V}_4)$
We then construct a mapping $\mu'$ from the variables and values in $P_1(x,S_1^y,e) \land P_2(x,S_2^y,S^w,v)$ to $\activeD_E(\I_2)$ as follows: $\mu'(x)=\hat{E}$, $\mu'(S_1^y)=\hat{V}_1$, $\mu'(S_2^y)=\hat{V}_2$, $\mu'(S^w)=\hat{V}_3$, and $\mu'(v)=\hat{V}_4$. It is not difficult to see that $\mu'$ is an assignment from $q(x,y)$ to $\I_2$. In particular: 
\begin{itemize}
\item since $e \in [e]_{\sim^1}$ and $h(e)=e$, then $e \in h([e]_{\sim^1})$, and since $h([e]_{\sim^1}) \subseteq \hat{E}_1$, it follows that $e \in \hat{E}_1$, which implies that $\hat{E}_1=[e]_{\sim^2}$ (where $\sim^2$ is the equivalence relation associated to $\I_2$);
\item since $V_1 \cap V_2 \neq \emptyset$ and since $h(V_1 \cap V_2) \subseteq h(V_1) \cap h(V_2) \subseteq \hat{V}_1 \cap \hat{V}_2$, it follows that that $\hat{V}_1 \cap \hat{V}_2 \neq \emptyset$;
\item since $v \in V_4$ and $h(v)=v$, then $v \in h(V_4)$, and since $h(V_4) \subseteq \hat{V}_4$, it follows that $v \in \hat{V}_4$.
\end{itemize}
The above assignment $\mu'$ testifies that $\tup{\hat{E},\hat{V}} \in q^{\I_2}$, where $\hat{V}=\hat{V}_1 \cap \hat{V}_2$. This proves the thesis, since, as said, $h(E) \subseteq \hat{E}$ and $h(V) \subseteq \hat{V}$.
\end{proof}
}
{}


%

When querying a KB $\K$, we are interested in reasoning over all solutions for $\K$. We adapt  the classical notion of certain answers to our framework. A tuple $\vec{\genericSet} = \tup{\genericSet_1,\ldots,\genericSet_n}$ is  \emph{null-free} if
 each $T_i$ is non-empty and contains no nulls.


\begin{definition}
	\label{def:certain-answer}
		Let $q$ be a CQ and let $\K$ be a KB. 
        A null-free tuple $\vec{T}$ 
        is a \emph{certain answer} to $q$ \wrt $\K$ if
	\begin{enumerate}
		\item  for every solution $\I$ for $\K$, there is a tuple 
        $\vec{T'} \in q^{\I}$ such that $\vec{T} \leq \vec{T'}$;
		\item   there is no null-free tuple $\vec{T'}$ that satisfies $1.$ and  $\vec{T} < \vec{T'}$.
	\end{enumerate}
We write $\cert(q,\K)$ for the set of certain answers to $q$.
\end{definition}

Note that the second condition in the above definition asserts  that a certain answer has to be a maximal null-free tuple with respect to the tuple dominance order $\leq$ given in Def.~\ref{def:contain}, among the tuples satisfying Condition $1.$ 
\ifthenelse{\equal{\showproofs}{1}}{The following example clarifies the importance of this condition.

\begin{example}
	In every solution for the KB $\K$ used in our ongoing example, $\Doe_1$ and $\Doe_2$ are in the same equivalence class denoting a CEO. 
    Then, the null-free tuple $\tup{\{\Doe_1,\Doe_2\}}$ is a certain answer to the query $q_1(x): \exists z \CEO(z,x)$. Observe that  the null-free tuple $\tup{\{\Doe_1\}}$ satisfies the first (but not the second) condition in Def.~\ref{def:certain-answer}. However, we do not want this tuple to be  a certain answer since it is not an equivalence class in any solution.  \qed
\end{example}
}

 \ifthenelse{\equal{\showproofs}{0}}{The next result tells that if two sets of entities appear in a certain answer or in different certain answers, then either they are the same set or they are disjoint. In particular,  the sets of entities that appear in  certain answers can be viewed as equivalence classes of some equivalence relation.}

 \ifthenelse{\equal{\showproofs}{1}}{The next results tell that if two sets of entities appear in a certain answer or in different certain answers, then either they are the same set or they are disjoint. In particular,  the sets of entities that appear in  certain answers can be viewed as equivalence classes of some equivalence relation.}
 
\begin{restatable}{proposition}{partitionofcertainanswers}
	\label{pro:partition-of-certain-answers}
Let $q$ be a CQ of arity $n$, let $\K$ be a KB, and let $\vec{\genericSet}=\langle T_1, \dots , T_n \rangle$ and $\vec{\genericSet'}=\langle T'_1, \dots, T'_n \rangle$ be two certain answers to $q$ \wrt $\K$. If $T_i$ and $T'_j$ are sets of entities, then either $T_i = T'_j$ or $T_i \cap T'_j = \emptyset$, where $1\leq i,j\leq n$.
\end{restatable}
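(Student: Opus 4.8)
The plan is to exploit the \emph{maximality} clause (Condition~2) of Definition~\ref{def:certain-answer} together with the fact that, in any fixed solution, entity equivalence classes are pairwise equal or disjoint. Assume $T_i \cap T'_j \neq \emptyset$; I will show $T_i = T'_j$. Since $T_i$ and $T'_j$ are sets of entities, the free variables $x_i$ and $x_j$ of $q$ are entity-variables, so by the definition of $q^{\I}$ their corresponding components in any answer tuple are equivalence classes of entities. Moreover, being components of (null-free) certain answers, $T_i$ and $T'_j$ consist of genuine entities from $\alphaEntities$; fix some $e \in T_i \cap T'_j$.

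The core observation concerns a single but arbitrary solution $\I \in \Mod(\K)$. By Condition~1 applied to $\vec{\genericSet}$ there is $\vec{S} \in q^{\I}$ with $\vec{\genericSet} \leq \vec{S}$, and by Condition~1 applied to $\vec{\genericSet'}$ there is $\vec{S'} \in q^{\I}$ with $\vec{\genericSet'} \leq \vec{S'}$. Then $S_i$ and $S'_j$ are both equivalence classes of the equivalence relation associated with $\I$, and from $T_i \subseteq S_i$ and $T'_j \subseteq S'_j$ we get $e \in S_i \cap S'_j$. Since two equivalence classes sharing an element coincide, $S_i = S'_j$. Consequently $T_i \cup T'_j \subseteq S_i$, while $T_k \subseteq S_k$ for every other coordinate $k$.

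Now define $\vec{\genericSet}^{*}$ to be $\vec{\genericSet}$ with its $i$-th component enlarged to $T_i \cup T'_j$ and all other coordinates unchanged; this is a null-free tuple with $\vec{\genericSet} \leq \vec{\genericSet}^{*}$. The previous paragraph shows that, for each $\I$, the \emph{same} witness $\vec{S}$ already satisfies $\vec{\genericSet}^{*} \leq \vec{S}$, so $\vec{\genericSet}^{*}$ satisfies Condition~1. By the maximality of $\vec{\genericSet}$ there is no null-free tuple strictly dominating it that still satisfies Condition~1, forcing $\vec{\genericSet}^{*} = \vec{\genericSet}$ and hence $T'_j \subseteq T_i$. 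Running the symmetric construction on $\vec{\genericSet'}$, enlarging its $j$-th component to $T'_j \cup T_i$, yields $T_i \subseteq T'_j$, so $T_i = T'_j$.

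I expect the only delicate point to be the bookkeeping in the core observation: one must keep $\vec{S}$ and $\vec{S'}$ as two possibly distinct witnessing tuples in the \emph{same} solution, and recognize that the globality of entity classes (so that $S_i$ and $S'_j$ are literally the same class once they share $e$) is exactly what lets a single witness $\vec{S}$ serve the enlarged tuple. The remainder is a routine invocation of the maximality clause. It is worth noting that the argument genuinely requires both clauses of the definition of certain answer: Condition~1 places $T_i$ and $T'_j$ inside a common class in every solution, but only Condition~2 excludes the possibility that the two certain answers are unequal proper subsets of that class.
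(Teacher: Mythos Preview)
Your proof is correct and follows essentially the same approach as the paper: for an arbitrary solution, use Condition~1 to obtain witness tuples whose entity components are equivalence classes, conclude that these classes coincide since they share an element, and then invoke the maximality in Condition~2 on the tuple obtained by enlarging the $i$-th component to $T_i \cup T'_j$. The only cosmetic difference is that you present a direct argument deriving both inclusions $T'_j \subseteq T_i$ and $T_i \subseteq T'_j$ by symmetry, whereas the paper phrases it as a proof by contradiction with a case split on which of $T_i$, $T'_j$ is a proper subset of $T_i \cup T'_j$; the underlying idea is identical.
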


\ifthenelse{\equal{\showproofs}{1}}{
    \begin{proof}
	Towards a contradiction, assume that $T_i$, $T'_j$ are two non-empty sets of entities such that $T_i \neq T'_j$ and $T_i \cap T'_j \neq\emptyset$. Then $T_i \subsetneqq T_i \cup T'_j$ or $T'_j \subsetneqq T_i \cup T'_j$. Let's assume $T_i \subsetneqq T_i \cup T'_j$. Since $\langle T_1, \dots, T_n \rangle$ and $\langle T'_1, \dots, T'_n \rangle$ are certain answers of $q$, for every solution $\I$, there are tuples $\langle L_1, \dots, L_n \rangle$, $\langle L'_1, \dots, L'_n \rangle$ in $q^\I$ such that for every $k$ we have $T_k \subseteq L_k$ and $T'_k \subseteq L'_k$. Thus we have that $T_i \subseteq L_i$ and $T'_j \subseteq L'_j$. Since $T_i \cap T'_j \neq \emptyset$, it follows that $L_i \cap L'_j \neq \emptyset$. However, $L_i$, $L'_j$ are equivalence classes in the solution $\I$, hence they must coincide, i.e. $L_i = L'_j$. It follows that $T_i \cup T'_j \subseteq L_i = L'_j$. Consider now the tuple $\langle T_1, \dots, T_{i-1}, T_i \cup T_j ,T_{i+1}, \dots ,T_n \rangle$, which is obtained from $\langle T_1, \dots, T_n \rangle$ by replacing $T_i$ by $T_i \cup T_j$. Note that $\langle T_1, \dots, T_{i-1}, T_i ,T_{i+1}, \dots ,T_n \rangle < \langle T_1, \dots, T_{i-1}, T_i \cup T_j ,T_{i+1}, \dots ,T_n \rangle$ because $T_i \subsetneqq T_i \cup T'_j$. However for every solution $\I$ of $\K$, we have that $T_i \cup T'_j \subseteq L_i$, and for all $1 \leq k \leq n$ such that $k \neq i$, we have that $T_k \subseteq L_k$. This means that $\langle T_1, \dots, T_{i-1}, T_i \cup T_j ,T_{i+1}, \dots ,T_n \rangle$ satisfies Condition~$(i)$ of Definition~\ref{def:certain-answer}, and therefore $\langle T_1, \dots, T_n \rangle$ does not respect Condition~$(ii)$ of Definition~\ref{def:certain-answer}), and thus it is not a certain answer, which is a contradiction. Note finally that if we assume $T'_j \subsetneqq T_i \cup T'_j$ we contradict that $\langle T'_1, \dots, T'_n \rangle$ is a certain answer by using an analogous argument. 
\end{proof}

}

\ifthenelse{\equal{\showproofs}{1}}{
The following property easily follows from the above proposition.
\begin{restatable}{corollary}{corpartitionofcertainanswers}
	\label{cor:partition-of-certain-answers}
	Let $\K$ be a KB, $q$ be a CQ, and 
	$\langle T_1, \dots ,T_n \rangle$ be 
	a certain answer to $q$ with respect to $\K$. If $T_i$ and $T_j$ are sets of entities 
	such that $i \neq j$, then either $T_i = T_j$ or $T_i \cap T_j = \emptyset$.
\end{restatable}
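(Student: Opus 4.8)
The plan is to obtain this corollary as an immediate specialization of Proposition~\ref{pro:partition-of-certain-answers}. That proposition speaks of \emph{two} certain answers $\vec{T}$ and $\vec{T'}$ and compares a component $T_i$ of the first with a component $T'_j$ of the second; the corollary instead speaks of a \emph{single} certain answer and compares two of its own components $T_i$ and $T_j$. The key observation is that one and the same tuple may be supplied simultaneously for both arguments of the proposition.

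First I would set $\vec{T'} = \vec{T} = \langle T_1, \dots, T_n \rangle$. This is legitimate because $\langle T_1, \dots, T_n \rangle$ is, by hypothesis, a certain answer to $q$ with respect to $\K$, and being a certain answer is a property of a single tuple; nothing in the statement of Proposition~\ref{pro:partition-of-certain-answers} requires its two certain answers to be distinct. Under this identification we have $T'_j = T_j$ for every index $j$ with $1 \leq j \leq n$.

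Then I would invoke Proposition~\ref{pro:partition-of-certain-answers} directly. Whenever $T_i$ and $T_j$ (the latter read as $T'_j$) are both sets of entities, the proposition yields $T_i = T_j$ or $T_i \cap T_j = \emptyset$, which is exactly the desired conclusion. The extra hypothesis $i \neq j$ of the corollary is merely a restriction of the index range $1 \leq i,j \leq n$ permitted by the proposition, so it causes no difficulty. There is essentially no obstacle in this argument: the only point worth checking is that reusing a single certain answer in both slots of the proposition is admissible, and this follows at once from the remark above. Hence the corollary follows.
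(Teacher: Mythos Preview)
Your proposal is correct and follows exactly the same approach as the paper's own proof, which simply observes that the corollary is the instance of Proposition~\ref{pro:partition-of-certain-answers} obtained by taking $\vec{T}=\vec{T'}$. Your write-up is just a more detailed version of this one-line argument.
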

    \begin{proof}
 It follows from Proposition~\ref{pro:partition-of-certain-answers} when $\vec{T}=\vec{T'}$. 
\end{proof}
}

In data exchange and related areas, the certain answers to CQs can be obtained by evaluating the query on a universal solution and then applying an operator $\downarrow$ that eliminates the tuples that contain nulls (see, e.g.,~\cite{FKMP05,CDLLR07}). 
The operator $\downarrow$ can easily be adapted to our framework as follows.
If $\entitySet$ is a set of entities and entity-nulls,
then  $\entitySet{\downarrow}=\{e~|~e \in \alphaEntities \cap \entitySet\}$. Similarly, if $\valueSet$ is 
 a non-empty set of values and value-nulls, then $\valueSet{\downarrow}=\{v~|~v \in \alphaValues \cap \valueSet\}$. In words,  $\downarrow$ removes all nulls from $\entitySet$ and from $\valueSet$. If $\langle T_1, \dots,T_n \rangle$ is a tuple such that each $T_i$ is either a set of entities and entity-nulls or a set of values and value-nulls, then we set $\langle T_1, \dots,T_n \rangle{\downarrow} = \langle T_1 {\downarrow}, \dots,T_n {\downarrow} \rangle$.
Finally, given a set $\Theta$ of tuples of the above form,
$\Theta {\downarrow}$ is the set obtained from $\Theta$ by removing all tuples in $\Theta$ containing a $T_i$ such that $T_i {\downarrow} = \emptyset$, and replacing every  other tuple $\langle T_1, \dots,T_n \rangle$ in $\Theta$ by the tuple $\langle T_1, \dots,T_n \rangle {\downarrow}$.

The next example shows that, in our framework, the certain answers to a CQ $q$  cannot always be obtained by evaluating $q$ on a universal solution and then applying 
 $\downarrow$.
\commentout{
Whereas under standard semantics, 
the certain answers to a CQ $q$ over theories admitting universal solutions can be often obtained by computing the answers to $q$ over one such solution and then filtering these answers through the $\downarrow$ operator (see, e.g.,~\cite{FKMP05,CDLLR07}), in our framework this is not the case, as shown by the example below.
}

\begin{example}
Let $P_1$ and $P_2$ be
such that $\type(P_1)=\type(P_2)=\tup{\entity,\val}$, $\T'$ be the TBox consisting of the rules
$$
\small
	\begin{array}{l}
	P_1(x,y) \rightarrow P_2(x,y),~~
	P_1(x,y) \land P_1(x,z) \rightarrow y=z
 \normalsize 
\end{array}
$$
and $\DB'=\{P_1(\textnormal{\textbf{e}},1), P_1(\textnormal{\textbf{e}},2)\}$.
Consider  the following universal solutions for $\K'=(\T',\DB')$
$$
\begin{array}{lcl}
\I_1 &=& \{P_1([\textnormal{\textbf{e}}],\{1,2\}), P_2([\textnormal{\textbf{e}}],\{1,2\})\}\\
\I_2 &=& \{P_1([\textnormal{\textbf{e}}],\{1,2\}), P_2([\textnormal{\textbf{e}}],\{1\}), P_2([\textnormal{\textbf{e}}],\{1,2\})\}
\end{array}
$$
and the query $q(x,y): P_2(x,y)$.
Then
$q^{\I_1}{\downarrow} = q^{\I_1} = \{\tup{[\textnormal{\textbf{e}}],\{1,2\}}\}$, and $q^{\I_2}{\downarrow} =  q^{\I_2}=\{\tup{[\textnormal{\textbf{e}}],\{1,2\}}, \tup{[\textnormal{\textbf{e}}],\{1\}}\}$. Thus, $q^{\I_2}{\downarrow}$ does not coincide with the set of certain answers to $q$ w.r.t.\ $\K'$, because the tuple $\tup{[\textnormal{\textbf{e}}],\{1\}}$ does not satisfy the second condition in Def.~\ref{def:certain-answer}.
\qed
	\end{example}


Intuitively, in the above example the universal solution $\I_2$ is not ``minimal'', in the sense that the fact 
$P_2([\textnormal{\textbf{e}}],\{1\})$ in $\I_2$ is dominated by another fact of $\I_2$, namely $P_2([\textnormal{\textbf{e}}],\{1,2\})$. 
This behaviour causes that the answer to the query over $\I_2$ contains also tuples that are not maximal with respect to tuple dominance (cf.\ Def.~\ref{def:certain-answer}).  
\ifthenelse{\equal{\showproofs}{1}}{Note also that  the situation is even  more complicated in the presence of nulls. For example, we might have an instance containing both facts $Q([\textnormal{\textbf{e}}],\{1,2,n\})$ and $Q([\textnormal{\textbf{e}}],\{1,n'\})$, where $n$ and $n'$ are two different value-nulls. In this case, no fact in the solution is dominated by another fact, still ``non-minimality"  shows up if we eliminate the nulls.} 
This suggests that we need some additional processing besides elimination of nulls. 
We modify the  operator $\downarrow$ by performing a further \emph{reduction} step. 

\commentout{
\begin{definition} \label{def:dareduced}
 Given a query $q$ and an instance $\I$ for a KB $\K$, we write $q^\I\dareduced$ to denote the set of null-free tuples obtained through following three steps:
\begin{enumerate}
\item evaluate $q$ on $\I$ and obtain $q^\I$;
\item compute $q^\I\downarrow$;
\item remove from  $q^\I\downarrow$ all tuples strictly dominated by other tuples in the set, that is, if $\vec{\genericSet}$ and $\vec{\genericSet'}$ are two tuples in $q^\I\downarrow$ such that $\vec{\genericSet} < \vec{\genericSet'}$, then remove $\vec{\genericSet}$ from $q^\I\downarrow$.\nb{If needed, this definition can be shortened}
\end{enumerate}
\end{definition}
}

\begin{definition} \label{def:dareduced}
 Given a query $q$ and an instance $\I$ for a KB $\K$, we write $q^\I{\dareduced}$ to denote the set of null-free tuples obtained by removing from $q^\I{\downarrow}$ all tuples strictly dominated by other tuples in the set, that is, if $\vec{\genericSet}$ and $\vec{\genericSet'}$ are two tuples in $q^\I{\downarrow}$ such that $\vec{\genericSet} < \vec{\genericSet'}$, then remove $\vec{\genericSet}$ from $q^\I{\downarrow}$.

\end{definition}

%
%
%
%
%
%

\ifthenelse{\equal{\showproofs}{1}}{
The following property is needed to prove the last theorem of this section. It easily follows from Prop.~\ref{pro:query-under-homo} and Def.~\ref{def:homomorphism}.

\begin{restatable}{corollary}{queryoverunivsolution}
\label{cor:query-over-univ-solution}
Let $\K$ be a KB, $q$ be a CQ, $\U$ be a universal solution for $\K$, and $\vec{\genericSet} \in q^{\U}$. Then, for every $\I \in \Mod(\K)$ there is a tuple $\vec{\genericSet'} \in q^{\I}$ such that  $\vec{\genericSet}\downarrow \leq \vec{\genericSet'}$. 
\end{restatable}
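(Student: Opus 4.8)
The plan is to combine the universality of $\U$ with Proposition~\ref{pro:query-under-homo}, and then to observe that applying $\downarrow$ to an answer tuple can only shrink it relative to its homomorphic image. First I would fix an arbitrary solution $\I \in \Mod(\K)$. Since $\U$ is a universal solution for $\K$, Definition~\ref{def:universal} supplies a homomorphism $h : \U \rightarrow \I$. Because $\vec{\genericSet} \in q^{\U}$, Proposition~\ref{pro:query-under-homo} then yields a tuple $\vec{\genericSettwo} \in q^{\I}$ with $h(\vec{\genericSet}) \leq \vec{\genericSettwo}$. I would take $\vec{\genericSet'} := \vec{\genericSettwo}$ and argue that it witnesses the claim.

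The one genuine step is to show $\vec{\genericSet}{\downarrow} \leq h(\vec{\genericSet})$, after which transitivity of the dominance order $\leq$ (which, by Definition~\ref{def:contain}, is componentwise $\subseteq$ and hence transitive) finishes everything: $\vec{\genericSet}{\downarrow} \leq h(\vec{\genericSet}) \leq \vec{\genericSettwo} = \vec{\genericSet'}$. To establish $\vec{\genericSet}{\downarrow} \leq h(\vec{\genericSet})$, write $\vec{\genericSet} = \tup{\genericSet_1,\ldots,\genericSet_n}$ and fix a component $\genericSet_i$. By the definition of $\downarrow$, the set $\genericSet_i{\downarrow}$ consists precisely of the non-null constants of $\genericSet_i$ (entities from $\alphaEntities$ if $\genericSet_i$ is a set of entities, values from $\alphaValues$ if it is a set of values). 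For any such constant $c \in \genericSet_i{\downarrow}$, conditions~1 and~3 of Definition~\ref{def:homomorphism} give $h(c) = c$; since $c \in \genericSet_i$ we have $h(c) \in h(\genericSet_i) = \{h(x) \mid x \in \genericSet_i\}$, and therefore $c = h(c) \in h(\genericSet_i)$. Hence $\genericSet_i{\downarrow} \subseteq h(\genericSet_i)$ for every $i$, which is exactly $\vec{\genericSet}{\downarrow} \leq h(\vec{\genericSet})$.

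As $\I$ was arbitrary, this produces for each solution $\I$ a tuple $\vec{\genericSet'} \in q^{\I}$ with $\vec{\genericSet}{\downarrow} \leq \vec{\genericSet'}$, as required. I do not expect any real obstacle here: the entire content lies in the fact that homomorphisms fix constants, so nulls are the only elements that $\downarrow$ removes and that are not recovered in the image, combined with the already-proved preservation result. The only points to handle carefully are bookkeeping: that $q^{\U}$ indeed returns tuples whose components are sets of entities-and-entity-nulls or sets of values-and-value-nulls (so that $\downarrow$ and the componentwise $\subseteq$ comparison are meaningful, including when a value component is an intersection $\bigcap_j \mu(S_j^{x_i})$), and that $h(\vec{\genericSet})$ is read as the componentwise image $\tup{h(\genericSet_1),\ldots,h(\genericSet_n)}$ as fixed in the notation following Definition~\ref{def:homomorphism}.
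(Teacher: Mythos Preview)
Your proof is correct and follows essentially the same route as the paper: obtain a homomorphism $h:\U\to\I$ from universality, apply Proposition~\ref{pro:query-under-homo} to get $\vec{\genericSet'}\in q^{\I}$ with $h(\vec{\genericSet})\leq\vec{\genericSet'}$, and then use that homomorphisms fix non-null constants to conclude $\vec{\genericSet}{\downarrow}\leq h(\vec{\genericSet})$. Your argument is in fact more explicit than the paper's, which simply asserts that $\vec{\genericSet}{\downarrow}\leq h(\vec{\genericSet})$ follows from Definition~\ref{def:homomorphism} without spelling out the componentwise reasoning.
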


\begin{proof} Obviously, $\vec{\genericSet'}\downarrow \leq \vec{\genericSet'}$.
Since $\U$ is universal, there is a homomorphism $h$ from $\U$ to $\I$. From Proposition~\ref{pro:query-under-homo} it follows that there is $\vec{\genericSet'} \in q^{\I}$ such that $h(\vec{\genericSet}) \leq \vec{\genericSet'}$. From Definition~\ref{def:homomorphism} (homomorphism) it follows that $\vec{\genericSet}\downarrow \leq h(\vec{\genericSet})$, and thus $\vec{\genericSet}\downarrow \leq \vec{\genericSet'}$. 
\end{proof}
}

We are now able to present the main result of this section, which asserts that universal solutions can be used to compute the certain answers to  CQs in our framework.

\begin{restatable}{theorem}{thmqueryanswering}
	\label{ref:thm-query-answering}
	Let $q$ be a CQ, let $\K$ be a KB,  and let $\U$ be a universal solution for $\K$. Then $\cert(q,\K) =  q^\U{{\dareduced}}$.
\end{restatable}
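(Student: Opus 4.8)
The plan is to prove the set equality $\cert(q,\K) = q^\U{\dareduced}$ by establishing the two inclusions separately, relying crucially on Corollary~\ref{cor:query-over-univ-solution} together with the fact that the universal solution $\U$ is itself a solution of $\K$, so that Condition~1 of Definition~\ref{def:certain-answer} may be instantiated at $\I=\U$. Throughout I would use that, by Definition~\ref{def:dareduced}, $q^\U{\dareduced}$ is exactly the set of \emph{maximal} elements of $q^\U{\downarrow}$ under the dominance order $\leq$, and that every tuple in $q^\U{\downarrow}$ is null-free with all components non-empty.

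For the inclusion $q^\U{\dareduced} \subseteq \cert(q,\K)$, I would take $\vec{T} \in q^\U{\dareduced}$, so that $\vec{T} = \vec{S}{\downarrow}$ for some $\vec{S} \in q^\U$ and $\vec{T}$ is not strictly dominated by any tuple of $q^\U{\downarrow}$. Condition~1 of Definition~\ref{def:certain-answer} is immediate from Corollary~\ref{cor:query-over-univ-solution}: for every solution $\I$ there is $\vec{T'} \in q^\I$ with $\vec{T} = \vec{S}{\downarrow} \leq \vec{T'}$. For Condition~2 (maximality), suppose some null-free $\vec{W}$ satisfies Condition~1 and $\vec{T} < \vec{W}$. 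Instantiating Condition~1 at the solution $\U$ gives $\vec{V} \in q^\U$ with $\vec{W} \leq \vec{V}$; since $\vec{W}$ is null-free, $\vec{W} = \vec{W}{\downarrow} \leq \vec{V}{\downarrow}$, and $\vec{V}{\downarrow} \in q^\U{\downarrow}$. Then $\vec{T} < \vec{W} \leq \vec{V}{\downarrow}$ exhibits $\vec{T}$ as strictly dominated in $q^\U{\downarrow}$, a contradiction. Hence $\vec{T}$ is a certain answer.

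For the reverse inclusion $\cert(q,\K) \subseteq q^\U{\dareduced}$, take $\vec{T} \in \cert(q,\K)$. Instantiating Condition~1 at $\I = \U$ yields $\vec{V} \in q^\U$ with $\vec{T} \leq \vec{V}$, hence $\vec{T} = \vec{T}{\downarrow} \leq \vec{V}{\downarrow}$; as $\vec{T}$ is null-free, every component of $\vec{V}{\downarrow}$ is non-empty, so $\vec{V}{\downarrow} \in q^\U{\downarrow}$. By Corollary~\ref{cor:query-over-univ-solution}, $\vec{V}{\downarrow}$ itself satisfies Condition~1 of Definition~\ref{def:certain-answer}; were $\vec{T} < \vec{V}{\downarrow}$, this would contradict the maximality (Condition~2) of the certain answer $\vec{T}$. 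Therefore $\vec{T} = \vec{V}{\downarrow} \in q^\U{\downarrow}$. Finally, to see that $\vec{T}$ survives the reduction step, suppose $\vec{T} < \vec{W}$ for some $\vec{W} \in q^\U{\downarrow}$; again by Corollary~\ref{cor:query-over-univ-solution} this null-free $\vec{W}$ satisfies Condition~1, contradicting maximality of $\vec{T}$. Thus $\vec{T}$ is maximal in $q^\U{\downarrow}$, i.e.\ $\vec{T} \in q^\U{\dareduced}$.

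The routine parts are the manipulations of the $\downarrow$ operator and of the dominance order. The step that deserves the most care — and the one I expect to be the crux — is the maximality argument appearing in both directions: one must use that $\U$ is a solution in order to pull a dominating tuple back into $q^\U$, and then combine this with Corollary~\ref{cor:query-over-univ-solution} to certify that the $\downarrow$-image of any $q^\U$-tuple already meets Condition~1 of the certain-answer definition. A secondary subtlety worth verifying is that $\vec{V}{\downarrow}$ genuinely belongs to $q^\U{\downarrow}$ (i.e.\ is not discarded for having an empty component), which holds because $\vec{T} \leq \vec{V}{\downarrow}$ with $\vec{T}$ null-free forces every component of $\vec{V}{\downarrow}$ to be non-empty.
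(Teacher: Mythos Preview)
Your proof is correct and follows essentially the same approach as the paper's: both directions hinge on Corollary~\ref{cor:query-over-univ-solution} together with instantiating Condition~1 of Definition~\ref{def:certain-answer} at the solution $\U$, then playing maximality against dominance in $q^\U{\downarrow}$. Your treatment is in fact slightly tidier in two places: you explicitly verify that $\vec{V}{\downarrow}$ is not discarded from $q^\U{\downarrow}$ for having an empty component, and in the final maximality step you argue directly via the Corollary for an arbitrary $\vec{W}\in q^\U{\downarrow}$ rather than (as the paper does) invoking the already-proved first inclusion to pass to a maximal dominating tuple.
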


\ifthenelse{\equal{\showproofs}{1}}{
    \begin{proof}
We first show that $q^\U{\dareduced} \subseteq \cert(q,\K)$. Let $\vec{\genericSet} \in {q^{\U}{\dareduced}}$, which of course implies $\vec{\genericSet} \in q^{\U}$. By Corollary~\ref{cor:query-over-univ-solution} we have that for every $\I \in \Mod(\K)$ there is a tuple $\vec{\genericSet'} \in q^{\I}$ such that $\vec{\genericSet}{\downarrow} \leq \vec{\genericSet'}$. Since $\vec{\genericSet} \in {q^{\U}{\dareduced}}$, it is null-free and thus $\vec{\genericSet}{\downarrow}=\vec{\genericSet}$. Hence, Condition~$(i)$ of Definition~\ref{def:certain-answer} (certain answers) is satisfied by $\vec{\genericSet}$. 
Let us now assume by contradiction that $\vec{\genericSet}$ does not satisfy Condition~$(ii)$ of Definition~\ref{def:certain-answer}. This means that there exists a null-free tuple $\vec{\genericSet'}$ satisfying Condition~$(i)$ of Definition~\ref{def:certain-answer} such that $\vec{\genericSet} < \vec{\genericSet'}$. Since $\vec{\genericSet'}$ satisfies Condition~$(i)$ of Definition~\ref{def:certain-answer}, there must exist a tuple $\vec{\genericSettwo} \in q^\U$ such that $\vec{\genericSet'} \leq \vec{\genericSettwo}$, and thus obviously $\vec{\genericSet} < \vec{\genericSettwo}$. We can apply the $\downarrow$ operator to both $\vec{\genericSet}$ and $\vec{\genericSettwo}$ and get $\vec{\genericSet}{\downarrow} < \vec{\genericSettwo}{\downarrow}$, and since, as said, $\vec{\genericSet}=\vec{\genericSet}{\downarrow}$, we have that $\vec{\genericSet}< \vec{\genericSettwo}{\downarrow}$, which is a contradiction. Indeed, $\vec{\genericSet} \in q^{\U}{\dareduced}$ (by hypothesis) and $\vec{\genericSettwo}\downarrow \in q^{\U}{\downarrow}$, and thus it is not possible that $\vec{\genericSettwo}$ dominates $\vec{\genericSet}$ (cf.\ Step~3 of the definition of the $\dareduced$ operator). 

We now show that $\cert(q,\K)\subseteq q^\U{\dareduced}$. Let $\vec{\genericSet} \in \cert(q,\K)$. By Condition~$(i)$ of Definition~\ref{def:certain-answer} there exists a tuple $\vec{\genericSettwo} \in q^\U$ such that $\vec{\genericSet} \leq \vec{\genericSettwo}$, which implies $\vec{\genericSet} \leq \vec{\genericSettwo}{\downarrow}$ (since $\vec{\genericSet}$ is null-free). 
By Corollary~\ref{cor:query-over-univ-solution}, we have that for each $\I \in \Mod(\K)$ there is a tuple $\vec{\genericSet'} \in q^{\I}$ such that $\vec{\genericSettwo}{\downarrow} \leq \vec{\genericSet'}$. This last fact implies that $\vec{\genericSettwo}{\downarrow}$ is a null-free tuple satisfying Condition~$(i)$ of Definition~\ref{def:certain-answer}. Thus, if $\vec{\genericSet} < \vec{\genericSettwo}{\downarrow}$ then $\vec{\genericSet}$ is not a certain answer, which contradicts the assumption. Therefore, it must be $\vec{\genericSet} = \vec{\genericSettwo}{\downarrow}$. This shows that $\vec{\genericSet} \in q^\U{\downarrow}$. Let us assume by contradiction that $\vec{\genericSet} \not \in q^\U{\dareduced}$. By definition of the $\dareduced$ operator, the above assumption implies that there exists a tuple $\vec{X} \in q^\U{\dareduced}$ such that $\vec{T} < \vec{X}$, and since $q^\U{\dareduced} \subseteq \cert(q,\K)$, as shown in the first part of this proof, $\vec{X}$ satisfies Condition~$(i)$ of Definition~\ref{def:certain-answer}, which in turn implies that $\vec{T} \not \in \cert(q,\K)$, which is a contradiction.
\end{proof}
}

\begin{example}
    As seen earlier in Ex.\ref{exa:univ-sol},  $\I'$ is a universal solution for $\K$. 
    Then, for the query $q_1(x): \exists z CEO(z,x)$,  we have that   $\cert(q_1,\K) = q_1^{\I'}{\dareduced}=\{ \tup{[\Doe_1,\Doe_2]}\}$. \qed
\end{example}

\newcommand{\facts}{\mathrm{Facts}}
\newcommand{\factSequence}{\mathrm{FactDerivation}}
\newcommand{\limit}{\mathit{lim}}
\newcommand{\pred}{\mathit{Pred}}
\newcommand{\vt}{\vec{t}}
\newcommand{\vx}{\vec{x}}

\section{Computing a universal model}
\label{sec:computing-Univ-Model}

In this section, we adapt the well-known notion of restricted chase~\cite{BeVa84,JoKl84,FKMP05,CDLLR07}
to our framework. 
\commentout{
Intuitively, given a KB $\K=(\T,\DB) $, to chase $\K$ we define a procedure that starts from an instance for $\K$ ``specular'' to the database $\DB$, which we call the base instance of $\DB$, and incrementally constructs an instance that satisfies the tgds and the egds of the TBox $\T$. This is obtained by iteratively applying three chase rules, one for each kind of rule that may occur in $\T$, as long as they are triggered in the (current) instance.
}
Interestingly, the chase procedure we define never produces a failure, 
unlike, e.g., the restricted chase procedure in the case of standard data exchange, where the application of egds may cause a failure when two different constants have to be made equal. 
Instead, in our framework, when an entity-egd forces two different equivalence classes of entities to be equated, we combine the two equivalence classes into a bigger equivalence class.  
Similarly, when a value-egd forces two different sets of values to be equated, we take the union of the two sets and modify the instance accordingly.  However, it is possible that  the chase procedure may have infinitely many steps, each producing a new instance. As a consequence, some care is required in defining the result of the application of this (potentially infinite) procedure to a KB $\K$, so that we can obtain an instance that can be used (at least in principle) for query answering. We call such instance \emph{the result of the chase of the KB $\K$}, and distinguish the case in which the chase terminates from the case in which it does not. In the former case, the result of the chase of $\K$ is simply the instance produced in the last step of the chase procedure, and we show that this instance is a universal solution for $\K$. In the latter case, we point out that previous approaches from the literature for infinite standard chase sequences under tgds and egds cannot be smoothly adapted to our framework, and we leave it open for this case how to define the result of the chase so that it is a universal solution.


We start with the notion of the \emph{base instance} for a KB. 
Given a KB $\K=(\T,\DB)$, we define the set 
$$\I^{\DB}=\{P(\{c_1\},\ldots,\{c_n\})~|~P(c_1,\ldots,c_n) \in \DB\}.$$ 
 $\I^{\DB}$ is an instance for $\K$ \wrt the identity relation $\id$ over the set $\S=\alphaEntities \cap \sig(\K)$. We call $\I^\DB$ the \emph{base instance} for  $\K$.
Note that the base instance for $\K$ is also a base instance for every KB having $\DB$ as database, and is a solution for $(\emptyset,\DB)$.
%
As an example, note that the instance $\I$ of Ex.~\ref{ex:interpretation} is the base instance for the KB $\K$ defined in Ex.~\ref{ex:TBox} and in Ex.~\ref{ex:DB}.

We next define three chase steps, one for tgds, one for entity-egds, and one for value-egds.

\begin{definition}
	\label{def:chaseStep}	
	Let $\K=(\T,\DB)$ be a KB and  $\I_1$  an instance for $\K$ \wrt an equivalence relation $\sim^1$ on $\underD_E(\I_1)$.
	
	\begin{itemize}
		\item (tgd) Let $r$ be a tgd of the form (\ref{eq:tgd}). Without loss of generality, we assume that all atoms in $\psi(\vec{x},\vec{y})$ are of the form $P(x_1,\ldots,x_k,y_1,\ldots,y_\ell)$, where $x_1,\ldots,x_k$ belong to $\vec{x}$ and $y_1,\ldots,y_\ell$ belong to $\vec{y}$, and we denote with $\vec{y_e}=y_e^1,\ldots y_e^h$ and $\vec{y_v}=y_v^1,\ldots,y_v^j$ the entity-variables and value-variables in $\vec{y}$, respectively. Let $\mu$ be an assignment from $\phi(\vec{x})$ to $\I_1$ such that there is no $\psi(\vec{x},\vec{y})$-compatible tgd-extension of  $\mu$ to $\I_1$. 
		We say that $r$ is \emph{applicable} to $\I_1$ with  $\mu$ (or  that $\mu$ \emph{triggers} $r$ in $\I_1$),	and	construct $\I_2$ via the following procedure:\\[0.2cm]
            \hspace*{0.3cm} \textbf{let} $\{f_e^1,\ldots,f_e^h\} \subseteq \alphaEntityNulls$ and $\{f_v^1,\ldots,f_v^j\} \subseteq \alphaValueNulls$ be\\
            \hspace*{0.5cm}  two sets of fresh nulls (i.e., not occurring in $\I_1$),\\ 
            \hspace*{0.5cm}  which are distinct from each other\\
			\hspace*{0.3cm} \textbf{put} $\I_2:=\I_1$\\
			\hspace*{0.3cm} \textbf{for each} atom $P(x_1,\ldots,x_k,y_1,\ldots,y_\ell)$ in $\psi(\vec{x},\vec{y})$ \textbf{do}\\
			\hspace*{0.6cm} $\I_2:=\I_2 \cup \{P(T_1,\ldots,T_k,U_1,\ldots,U_\ell)\}$\\[2mm]
			where, 
			\begin{itemize}
				\item for $1 \leq i \leq k$, we have
    $T_i=\mu(x_i)$, if $x_i$ is an entity-variable, or $T_i = \bigcap_{p=1}^{m}\mu(S_p^{x_i})$, if $x_i$ is value-variable and $\{S_1^{x_i},\ldots,S_m^{x_i}\}=\setvar(x_i,\trasf(\phi(\vec{x})))$;
				\item for $1 \leq i \leq \ell$, we have
    $U_i=\{f_e^s\}$ if $y_i=y_e^s$, with $1 \leq s \leq h$, or $U_i=\{f_v^q\}$ if $y_i=y_v^q$, with $1 \leq q \leq j$. (Thus, each singleton $\{f_e^s\}$ is  a new equivalence class.)
		  \end{itemize} 
	
		\item (entity-egd) Let $r$ be an entity-egd of the form (\ref{eq:egd}), and 
		 $\mu$  an assignment from $\phi(\vec{x})$ to $\I_1$ such that $\mu(y) \neq\mu(z)$. We say that $r$ is \emph{applicable} to $\I_1$ with  $\mu$ (or  that $\mu$ \emph{triggers} $r$ in $\I_1$), and we construct $\I_2$ from $\I_1$ by replacing 
			in $\I_1$ all occurrences of $\mu(y)$ and  $\mu(z)$ with $\mu(y) \cup \mu(z)$. \\
   (Thus, we merge two equivalence classes into a new one.)
		\item (value-egd) let $r$ be a value-egd of the form~(\ref{eq:egd}). 
        Let $\{S_1^y,\ldots,S_m^y\}=\setvar(y,\trasf(\phi(\vec{x})))$, $\{S_1^z,\ldots,S_k^z\}=\setvar(z,\trasf(\phi(\vec{x})))$, $1\leq i \leq m$, $1\leq j\leq k$, and $\mu$ be an assignment from $\phi(\vec{x})$ to $\I_1$ such that $\mu(S^y_i) \neq\mu(S^z_j)$.
        We say that $r$ is \emph{applicable} to $\I_1$ with  $\mu$ (or  that $\mu$ \emph{triggers} $r$ in $\I_1$), and we construct 
		$\I_2$ from $\I_1$ by replacing in the image  $\mu(\phi(\vec{x}))$ 
		each set $\mu(S_1^y),\ldots,\mu(S_m^y),\mu(S_1^z),\ldots$ and $\mu(S_k^z)$ with $\mu(S_1^y) \cup \ldots \cup \mu(S_m^y)\cup \mu(S_1^z)\cup\ldots\cup\mu(S_k^z)$.
	\end{itemize}
 \smallskip
 
\noindent	If $r$ is a tgd or egd that can be applied to $\I_1$ with  $\mu$, we say that $\I_2$ is the \emph{result of applying} $r$ to $\I_1$ with $\mu$ and we write $\I_1 \xrightarrow[]{r,\mu} \I_2$. We call $\I_1 \xrightarrow[]{r,\mu} \I_2$ a \emph{chase step}. 
\end{definition}

For both the entity-egd step and the value-egd step, the chase procedure constructs $\I_2$ by replacing some facts of $\I_1$. However, whereas for entity-egds the replacement is ``global'' (i.e., the two equivalence classes merged in the step are substituted by their union everywhere in $\I_1$), for value-egds the replacement is ``local'', in the sense that the two sets merged in the step are substituted by their union \emph{only} in facts occurring in the image $\mu(\phi(\vec{x}))$, which is a subset of $\I_1$.


\begin{example}
    \label{ex:chase-steps}
    Consider again the KB $\K=(\T,\DB)$ of Ex.~\ref{ex:TBox} and Ex.~\ref{ex:DB}. 
%
The interpretation $\I$ of Ex.~\ref{ex:interpretation} is the base instance $\I^\DB$ for $\K$. 
%
%
We depict below the application of the rules of Def.~\ref{def:chaseStep}, starting from the instance $\I=\I^\DB$.
    
\noindent   \textbf{tgd application:} $\I^\DB \xrightarrow[]{r_4,\mu_0} \I_1$, where $\mu_0$ is such that $\mu_0(\Body(r_4))= \{ d_3 \}$. The resulting instance $\I_1$ contains the facts $d_1$-$d_6$ (see.\ Ex.~\ref{ex:interpretation}), as well as the  facts:
    \begin{align*}
		\arraycolsep=3pt
        \begin{array}{ll}
            (d_7) \Emp([\Doe_3],[\textnormal{\textbf{e}}_1^\bot]), & (d_8) \CEO([\textnormal{\textbf{e}}_1^\bot], [\textnormal{\textbf{e}}_2^\bot]).
		\end{array}
	\end{align*}
    \noindent \textbf{entity-egd application:} $\I_1 \xrightarrow[]{r_1,\mu_1} \I_2$, where   the assignment $\mu_1$ is such  that 
    $\mu_1(\Body(r_1))= \{ d_1, d_2,\JaccardSim(\{\texttt{J. Doe}\},\{\texttt{John Doe}\},\{0.6\})\}$. The resulting instance $\I_2$ contains the facts $d_3$, $d_5$, $d_7$, $d_8$, as well as the facts:
    \begin{align*}
		\arraycolsep=1.0pt
        \small
		\begin{array}{ll}
			(d_1) \rightarrow (d_9)     & \CI([\Doe_1,\Doe_2], \{ \texttt{J. Doe} \}, \{ \PNumberA \} ) \\
			(d_2) \rightarrow (d_{10}) & \CI([\Doe_1,\Doe_2], \{ \texttt{John Doe} \}, \{ \PNumberB \} ) \\
            (d_4) \rightarrow (d_{11}) & \Emp([\Doe_1,\Doe_2], [\Yahoo]) \\
           ( d_6) \rightarrow (d_{12}) & \CEO([\Yahoo],[\Doe_1,\Doe_2]).\\
		\end{array}
    \normalsize
	\end{align*}
    
    \smallskip
    \noindent
    \textbf{value-egd application:} $\I_2 \xrightarrow[]{r_2,\mu_2} \I_3$, where $\mu_2$ is such that $\mu_2(\Body(r_2))=\{ d_9, d_{10} \}$. The resulting instance $\I_3$ contains the facts $d_3$, $d_{11}$, $d_5$, $d_{12}$, $d_7$, $d_8$, as well as the facts:
    \begin{align*}
		\arraycolsep=1.0pt
        \small
		\begin{array}{ll}
			(d_9) \rightarrow (d_{13})     & \CI([\Doe_1,\Doe_2], \{ \texttt{J. Doe,John Doe} \}, \{ \PNumberA \} ) \\
			(d_{10}) \rightarrow (d_{14}) & \CI([\Doe_1,\Doe_2], \{ \texttt{J. Doe,John Doe} \}, \{ \PNumberB \} ). \\
    \end{array}
    \normalsize
	\end{align*}
\end{example}

We now define the notion of a chase sequence.

\begin{definition}
	\label{def:chaseSequence}
	Let $\K=(\T,\DB)$ be a KB. A \emph{chase sequence for $\K$} is a (finite or infinite) sequence  $\sigma=\I_0,\I_1, \I_2\dots$, such that $\I_0=\I^\DB$ and $\I_i \xrightarrow[]{r_i,\mu_i} \I_{i+1}$ is a chase step,  
	for each consecutive pair ($\I_i,\I_{i+1}$) in the sequence $\sigma$. 
\end{definition}

In the Appendix, we provide an entire chase sequence for the KB $\K$ of our running example; the first three instances of this sequence have been described in Ex.~\ref{ex:chase-steps}.

We now present some basic properties of the instances in a chase sequence.

\begin{restatable}{proposition}{chaseinterpretation}
	\label{prop:chase-interpretation}
Let $\K$ be a KB.
 \begin{itemize}
 \item 
 If $\I$ is an instance for $\K$ \wrt an equivalence relation $\sim$ on $\underD_E(\I)$  and if $\I'$ is such that $\I \xrightarrow[]{r,\mu} \I'$, then there is an equivalence relation $\sim'$ 
 such that $\I'$ is an instance for $\K$ \wrt $\sim'$ on $\underD_E(\I')$. 
\item 
 Every $\I_i$ occurring in a chase sequence for a KB $\K=(\T,\DB)$ is an instance for $\K$.  
\end{itemize}
 
 \commentout{Moreover, $\sim^1 \subseteq \sim^2$ and $\underD(\I_1) \subseteq \underD(\I_2)$.}
\end{restatable}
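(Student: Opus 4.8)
The plan is to establish the first item by a case analysis on the kind of chase step $\I \xrightarrow{r,\mu} \I'$, exhibiting in each case an explicit equivalence relation $\sim'$ on $\underD_E(\I')$ and checking every fact of $\I'$ against Def.~\ref{def:instance}; the second item then follows by a short induction. For a \textbf{tgd step} the only new facts are the head-images $P(T_1,\ldots,T_k,U_1,\ldots,U_\ell)$. Each entity-typed $T_i=\mu(x_i)$ is a class of $\sim$ by Condition~1 of Def.~\ref{def:assignment-revisited}, and each entity-typed $U_i=\{f_e^s\}$ is a singleton on a fresh entity-null; so I would take $\sim'$ to be $\sim$ together with the reflexive pairs on the fresh entity-nulls. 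This is an equivalence relation on $\underD_E(\I')=\underD_E(\I)\cup\{f_e^1,\ldots,f_e^h\}$, it leaves the classes of $\sim$ intact, and it makes each $\{f_e^s\}$ a new class. The value-typed $T_i=\bigcap_p\mu(S_p^{x_i})$ are non-empty by Condition~4 and are subsets of $(\alphaValues\cap\sig(\K))\cup\alphaValueNulls$, while each value-typed $U_i=\{f_v^q\}$ is a non-empty set of value-nulls; hence every new fact is well-formed.

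For a \textbf{value-egd step} I would simply keep $\sim'=\sim$, since the step touches only value-typed arguments; thus $\underD_E(\I')=\underD_E(\I)$ and every entity class is preserved verbatim. The replaced value components are unions of the sets $\mu(S_i^y)$ and $\mu(S_j^z)$, each a non-empty subset of $(\alphaValues\cap\sig(\K))\cup\alphaValueNulls$ by Condition~3 of Def.~\ref{def:assignment-revisited}; such unions are again non-empty subsets of the same set, so Def.~\ref{def:instance} still holds.

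The \textbf{entity-egd step} is the delicate case and the main obstacle. Here $\mu(y)$ and $\mu(z)$ are \emph{distinct} classes of $\sim$ (distinctness is precisely the applicability condition), and $\I'$ replaces every occurrence of either by $\mu(y)\cup\mu(z)$. I would set $\sim'=\sim\cup(\mu(y)\times\mu(z))\cup(\mu(z)\times\mu(y))$ and invoke the standard fact that merging two classes of an equivalence relation again yields an equivalence relation, whose classes are exactly those of $\sim$ with $\mu(y)$ and $\mu(z)$ replaced by $\mu(y)\cup\mu(z)$. The crux is to check that after this \emph{global} substitution every entity-typed component of $\I'$ is a class of $\sim'$: the rewritten components equal $\mu(y)\cup\mu(z)$, which is a class of $\sim'$, while any other class of $\sim$ is disjoint from both $\mu(y)$ and $\mu(z)$ and so survives unchanged as a class of $\sim'$. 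Since $\underD_E(\I')=\underD_E(\I)$ and value arguments are left alone, $\I'$ is an instance for $\K$ \wrt $\sim'$.

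Finally, the second item follows by induction on the position $i$ in the chase sequence: the base instance $\I_0=\I^\DB$ is an instance for $\K$ \wrt the identity relation, as already observed, and each inductive step is precisely the first item applied to the chase step $\I_i\xrightarrow{r_i,\mu_i}\I_{i+1}$.
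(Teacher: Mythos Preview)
Your proposal is correct and follows essentially the same approach as the paper: a case analysis on the three kinds of chase steps, with the same choices of $\sim'$ in each case (your entity-egd relation $\sim\cup(\mu(y)\times\mu(z))\cup(\mu(z)\times\mu(y))$ coincides with the paper's $\sim\cup\{(f,g)\mid f,g\in\mu(y)\cup\mu(z)\}$ since $\mu(y)\times\mu(y)$ and $\mu(z)\times\mu(z)$ are already in $\sim$), followed by the same induction for the second item. The only cosmetic difference is that the paper spells out the transitivity check for the entity-egd case explicitly, whereas you invoke it as a standard fact about merging two classes.
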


\ifthenelse{\equal{\showproofs}{1}}{
    \begin{proof}
	We prove the thesis at the first bullet by considering the three possible cases for the rule $r$ and the corresponding chase steps of Definition~\ref{def:chaseStep}.
	
	\emph{Case of tgd chase step:} Let $r$ be a tgd of the form (\ref{eq:tgd}). Let $\S=\underD_E(\I) \cup \textsf{N}^{f_e}$, where $\textsf{N}^{f_e}=\{f_e^1,\ldots,f_e^h\}$, i.e., $\textsf{N}^{f_e}$ is the set 
    of the fresh null symbols used in Definition~\ref{def:chaseStep} for the tgd case. Let $\sim'=\sim \cup \{\tup{f,f}~|~f \in \textsf{N}^{f_e}\}$	(notice that $\textsf{N}^{f_e}$ 
		could even be empty, which implies $\sim'=\sim$). It is easy to see that $\sim'$ is an equivalence relation over $\S$, and that $\underD_E(\I)/{\sim} \subseteq \S/{\sim'}$. Then, according to Definition~\ref{def:chaseStep} and thus for each $P(\genericSet_1,\ldots,\genericSet_n) \in \I'$  and each $1 \leq i \leq n$ such that $\type(P,i)=\entity$, either $T_i \in \underD_E(\I)/{\sim}$, and thus  $\genericSet_i \in  \S/{\sim'}$, or $\genericSet_i =\{f\}$, where $f \in \textsf{N}^{f_e}$, and from the definition of $\sim'$ it obviously holds that $\{f\} \in  \S/{\sim'}$. Notice also that $\S=\underD_E(\I')$.  Moreover, for each $1 \leq i \leq n$ such that $\type(P,i)=\val$, it holds either $T_i \subseteq \underD_V(\I)$ or $T_i=\{f'\}$, where $f' \in \{f_v^1,\ldots,f_v^j\} \subseteq \alphaValueNulls$. Thus, $T_i$ is a non-empty subset of $(\alphaValues \cap \sig(\K))\cup \alphaValueNulls$. The above facts show that $\I'$ is an instance for $\K$ \wrt $\sim'$ over $\S=\underD_E(\I')$. 
		
	\emph{Case of entity-egd chase step:} 
Let $r$ be an entity-egd of the form (\ref{eq:egd}).  In this case we have $\underD_E(\I')=\underD_E(\I)$ and $\sim'=\sim \cup \sim^e$, where $\sim^e = \{\tup{f,g} \mid f,g \in \mu(y) \cup \mu(z)\}$. Obviously, $\sim^{e}$ is an equivalence relation over $\mu(y) \cup \mu(z)$. Thus $\sim'$ is the union of an equivalence relation over $\underD_E(\I)$ and an equivalence relation over $\mu(y) \cup \mu(z)$. 
As a consequence, $\sim'$ is a reflexive and symmetric  relation over $ \underD_E(\I) \cup \mu(y) \cup \mu(z)$, and since $\mu(y) \cup \mu(z) \subseteq \underD_E(\I)$, then $\sim'$ is reflexive and symmetric  over $\underD_E(\I)=\underD_E(\I')$. We have then to prove that  $\sim'$ is also transitive. 
To this aim we consider all possible cases. Let $a,b,c$ be elements in $\S$. The following situations are conceivable:
	\begin{itemize}
		\item $(a,b) \in \sim$ and $(b,c) \in \sim$, but since $\sim$ is an equivalence relation over $\underD_E(\I)$,  then  $(a,c) \in \sim$, and since $\sim \subseteq \sim'$ then $(a,c) \in \sim'$.
		\item $(a,b) \in \sim^{e}$ and $(b,c) \in \sim^{e}$, but since $\sim^{e}$ is an equivalence relation over $\mu(y) \cup \mu(z)$, then $(a,c) \in \sim^{e}$ and since $ \sim^{e}\subseteq \sim'$ then $(a,c) \in \sim'$.
		\item $(a,b) \in \sim$ and $(b,c) \in \sim^{e}$.  Since  $(b,c) \in \sim^{e}$, by definition of chase step, $b,c \in \mu(y) \cup \mu(z)$, and also $[b]_{\sim}\subseteq \mu(y) \cup \mu(z)$. Since  $(a,b) \in \sim$, it also holds that  $a \in [b]_{\sim}$, and thus  $a\in \mu(y) \cup \mu(z)$. This means that $(a,b) \in \sim^{e}$ as well, which coincides with the previous case (for which we have already shown that transitivity holds).
	\end{itemize}
	We can then prove that each fact $P(T_1,\ldots,T_n)$ of $\I'$ satisfies the conditions of Definition~\ref{def:instance} with an argument similar to that used in the last part of the proof for the tgd case. The only difference is that if some $T_i$ does not occur also in some fact of $\I$, then it must be $T_i=\mu(z) \cup \mu(y)$,  
	which is indeed an equivalence class of $\underD_E(\I')/\sim'$. 
	
	\emph{Case of value-egd chase step:} 
Let $r$ be a value-egd of the form (\ref{eq:egd}). In this case $\underD_E(\I')=\underD_E(\I)$ and $\sim'=\sim$. 
	The rest of the proof proceeds as for the previous cases. We however note that in this case each $T_i$ that does not occur in a fact of $\I$ must be of the form $\mu(S_1^y) \cup \ldots \cup \mu(S_m^y)\cup \mu(S_1^z)\cup\ldots\cup\mu(S_k^z)$, which is clearly a non-empty subset of $(\alphaValues \cap \sig(\K))\cup \alphaValueNulls$. 

\smallskip

The property at the second bullet easily follows from the property at the first bullet and the fact that in every chase sequence $\I_0 = \I^\DB$ and $\I^\DB$ is an instance for $\K$ by construction.
 
\end{proof}
}


\commentout{A chain of chase steps constitutes a chase sequence, as formally defined below.}



\commentout{

The following proposition easily follows from Prop.~\ref{prop:chase-interpretation} and the fact that in every chase sequence $\I_0=\I^\DB$ and  $\I^\DB$ is an instance for $\K$ by construction. 

\begin{proposition}
	\label{pro:chase-step-interpretation}
	Every $\I_i$ occurring in a chase sequence for a KB $\K=(\T,\DB)$ is an instance for $\K$.   
\end{proposition}
}


\begin{restatable}{proposition}{containment}
	\label{pro:containment}
	Let 
 $\K$ be a KB, let $\sigma$ be a chase sequence for $\K$, and let $\I_i$ and $\I_j$, with $i\leq j$, be two instances for $\K$ \wrt equivalence relations $\sim^i$ and $\sim^j$, respectively, such that $\I_i$ and $\I_j$  belong to $\sigma$. Then the following holds:
 \begin{itemize}
     \item  $\sim^i \subseteq \sim^j$
     and  $\underD(\I_i) \subseteq \underD(\I_j)$;
 \item If  $P(\vec{\genericSet})$ is a fact in  ${\I_i}$, then there is
	a fact  $P(\vec{\genericSet'})$ in $\I_j$  such that $P(\vec{\genericSet}) \leq P(\vec{\genericSet}')$.
 \end{itemize}
\end{restatable}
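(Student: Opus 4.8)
The plan is to reduce the statement to a single chase step and then chain the resulting facts along the subsequence $\I_i,\I_{i+1},\dots,\I_j$. Concretely, I would first establish the three claims for one step $\I_k \xrightarrow[]{r,\mu} \I_{k+1}$, namely that $\sim^k \subseteq \sim^{k+1}$, that $\underD(\I_k) \subseteq \underD(\I_{k+1})$, and that every fact $P(\vec{\genericSet})$ of $\I_k$ is dominated by some fact $P(\vec{\genericSet'})$ of $\I_{k+1}$. The global statement then follows by induction on $j-i$. The base case $i=j$ is immediate, since $\leq$ is reflexive and containment is trivial. For the inductive step, containment of relations and of underlying domains is transitive, and the domination order $\leq$ of Def.~\ref{def:contain} is transitive (it is componentwise $\subseteq$), so the dominating witnesses produced at consecutive steps compose: if $P(\vec{\genericSet})$ in $\I_i$ is dominated by $P(\vec{\genericSet'})$ in $\I_{i+1}$, and the latter is dominated by some fact in $\I_{i+2}$, then $P(\vec{\genericSet})$ is dominated by that fact as well, and so on up to $\I_j$.

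For the single step I would case-split on the three forms of chase step in Def.~\ref{def:chaseStep}, reusing the bookkeeping already carried out in the proof of Prop.~\ref{prop:chase-interpretation}. In the \textbf{tgd case}, no existing fact is touched, so $\I_k \subseteq \I_{k+1}$ and each old fact dominates itself; the relation grows only by the reflexive pairs on the freshly introduced entity-nulls, so $\sim^k \subseteq \sim^{k+1}$, and the underlying domain grows only by the fresh nulls. In the \textbf{entity-egd case}, the set of entities and all value sets are unchanged, so $\underD(\I_k) = \underD(\I_{k+1})$, while $\sim^{k+1} = \sim^k \cup \{\langle f,g\rangle \mid f,g \in \mu(y)\cup\mu(z)\} \supseteq \sim^k$; every fact is either left intact or has an entity-component $\mu(y)$ or $\mu(z)$ replaced by the larger class $\mu(y)\cup\mu(z)$, hence is dominated by its image. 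In the \textbf{value-egd case}, $\sim^{k+1} = \sim^k$ and no new value is introduced, so the domain is again unchanged; each fact of $\I_k$ is either outside $\mu(\phi(\vec{x}))$ and untouched, or has some value-components replaced by the union of the relevant sets $\mu(S^y_i),\mu(S^z_j)$, which is a superset of each of them, so the resulting fact dominates the original. In all three cases the three single-step claims hold.

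The only point requiring genuine care is the domination claim in the two egd cases, where facts are \emph{modified} rather than merely \emph{added}: I must verify, against Def.~\ref{def:chaseStep}, that every such modification can only enlarge a component (replacing an equivalence class or a value set by a superset of it), never shrink or delete one, and that the modified fact indeed lies in $\I_{k+1}$. This is the crux, because it is what makes the per-step map $P(\vec{\genericSet}) \mapsto P(\vec{\genericSet'})$ monotone and hence composable. Once this is checked, the chaining is routine, since $\leq$ is transitive and the witness produced at step $k$ is precisely the fact to which the step-$(k+1)$ claim is applied, yielding the desired dominating fact in $\I_j$ together with $\sim^i \subseteq \sim^j$ and $\underD(\I_i) \subseteq \underD(\I_j)$.
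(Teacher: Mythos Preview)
Your proposal is correct and follows essentially the same approach as the paper: reduce to a single chase step, case-split on tgd / entity-egd / value-egd using the bookkeeping from Prop.~\ref{prop:chase-interpretation}, observe that each step can only enlarge components (never shrink or delete), and then chain by induction using transitivity of $\subseteq$ and of $\leq$. The only cosmetic difference is that you induct on $j-i$ with base case $i=j$, whereas the paper inducts on $j$ with base case $i=0,\,j=1$; the content is the same.
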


\ifthenelse{\equal{\showproofs}{1}}{
    \begin{proof}
	We recall that $\K=(\T,\DB)$, where $\T$ is the TBox of $\K$ and $\DB$ the database of $\K$. The properties can be easily proved by induction.
	\emph{Base case.} Let be $i=0$ and $j=1$, that is, consider $\I_0$ and $\I_1$, where $\I_0$ is the base instance $\I^\DB$ of $\K$ and $\I_1$ is the instance produced by the application of a chase step to $\I^\DB$. The inclusions $\sim^0 \subseteq \sim^1$ and $\underD(\I_0) \subseteq \underD(\I_1)$ can be proved by using the same construction adopted in the proof of Proposition~\ref{prop:chase-interpretation}, where $\I=\I_0$ and $\I'=\I_1$. In particular, for the case in which $r$ is a tgd, by construction, we have that $\sim_0=\sim \subseteq \sim'=\sim_1$. Moreover, $\underD_V(\I')=\underD_V(\I) \cup \{f_v^1,\ldots,f_v^j\}$ and thus $\underD(\I) \subseteq \underD(\I')$ (that is, $\underD(\I_0) \subseteq \underD(\I_1)$. The cases in which $r$ is a value-egd or an entity-egd are even more straightforward.
 
 It remains to prove that for each fact $P(\vec{\genericSet}) \in \I_0$ there exists  $P(\vec{\genericSet'}) \in \I_1$  such that $\vec{\genericSet} \leq \vec{\genericSet'}$. Let us consider the three possible chase steps. \emph{(tgd)}: this step does not modify facts in $\I_0$, thus $\vec{\genericSet}=\vec{\genericSet'}$. \emph{(entity-egd)}: every equivalence class $E$ in $\activeD(\I_0)$, i.e., the active domain of $\I_0$, remains as is or it is substituted by a class $E'$ such that $E\subseteq E'$. Moreover, $\activeD_V(\I_0)=\activeD_V(\I_1)$. These facts easily show the thesis in this case. \emph{(value-egd)}: the argument is similar as the previous case, that is, every set of values $\valueSet$ in $\activeD_V(\I_0)$ remains as is or it is substituted by a set of values $\valueSet'$ such that $\valueSet\subseteq \valueSet'$. Moreover, $\activeD_E(\I_0)=\activeD_E(\I_1)$. 

	\emph{Inductive case.} we assume that the thesis holds for a certain integer $\ell$, i.e., that properties $(i)$, $(ii)$, and $(iii)$ hold for each $i,j \leq \ell$ such that $i \leq j$, and prove that they hold also for each $i,j \leq \ell+1$ such that $i \leq j$. The only cases that need to be shown are those in which $j=\ell+1$, because the other cases already hold by the inductive hypothesis.
	Of course, if $i =\ell+1$ the condition is trivially satisfied. For $i=\ell$ the three properties can be proved exactly as done for the base case. Then, all cases in which $i<\ell$ follow by the previous case and the inductive hypothesis.
\end{proof}
}

 Prop.~\ref{pro:containment} implies that for every $i\leq j$ and every equivalence class $E$ in $\underD_E(\I_i)/{\sim^i}$, there exists $E'$ in  $\underD_E(\I_j)/\hspace*{-3pt}\sim^j$ such that $E \subseteq E'$;  in particular, we have that $[e]_{\sim^i} \subseteq [e]_{\sim^j}$, for each entity $e \in \alphaEntities \cap \sig(\K)$.

When a chase sequence for a KB is infinite, there might be rules applicable to some instance $\I_i$ in the sequence that are never applied, even though they remain applicable in subsequent chase steps. It is however always possible to establish a suitable order in the application of the rules so that the above situation never occurs. Chase sequences enjoying this property are called \emph{fair}. In what follows, we assume that all chase sequences considered are fair. 

We now investigate the possible outcomes of the chase for a KB $\K$. 
\commentout{We first deal with the case in which the chase procedure terminates and then consider the case in which it is constituted by infinitely many steps.}
We first give the definition of the result of the chase for a finite chase sequence.


\begin{definition}
	\label{def:chase-finite-case}
	Let $\K=(\T,\DB)$ be a KB. 
 \begin{itemize}
\item  A \emph{finite chase of $\K$} is a finite chase sequence $\sigma=\I_0,\I_1,\ldots,\I_m$, 
 for which  there is no rule $r$ in $\T$ and no assignment $\mu$ such that $r$ can be applied to $\I_m$ with $\mu$. 
 
 \item We say that $\I_m$ is the \emph{result of the finite chase  $\sigma$ for $\K$}, and denote it by $\chase(\K,\sigma)$.
 \end{itemize}
\end{definition}

Prop.~\ref{prop:chase-interpretation} implies  that $\chase(\K,\sigma)$ is an instance for $\K$. Actually, it turns out to be a solution for $\K$.

\begin{restatable}{lemma}{chasefinitecasemodel}
	\label{lem:chase-finite-case-model}
	Let $\K$ be a KB and let $\sigma=\I_0,\I_1,\ldots,\I_m$ be a finite chase for $\K$. Then $\chase(\K,\sigma)$ is a solution for $\K$.
\end{restatable}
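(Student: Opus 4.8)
The plan is to verify the two requirements of Definition~\ref{def:solution} for $\I_m = \chase(\K,\sigma)$: that it satisfies the database $\DB$ and that it satisfies the TBox $\T$. That $\I_m$ is a genuine instance for $\K$ is already guaranteed by Proposition~\ref{prop:chase-interpretation}, so this does not need to be reproved.

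For satisfaction of $\DB$, I would argue by monotonicity. The base instance $\I_0 = \I^\DB$ contains, for each ground atom $P(c_1,\ldots,c_n) \in \DB$, the fact $P(\{c_1\},\ldots,\{c_n\})$, in which trivially $c_i \in \{c_i\}$. Since $\I_0$ and $\I_m$ both occur in $\sigma$ with $0 \leq m$, the second bullet of Proposition~\ref{pro:containment} yields a fact $P(\genericSet_1,\ldots,\genericSet_n) \in \I_m$ dominating $P(\{c_1\},\ldots,\{c_n\})$, i.e., $\{c_i\} \subseteq \genericSet_i$ and hence $c_i \in \genericSet_i$ for every $i$. By Definition~\ref{def:solution} this is exactly what it means for $\I_m$ to satisfy the ground atom, so $\I_m$ satisfies $\DB$.

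For satisfaction of $\T$, the key observation is that, because $\sigma$ is a finite chase, Definition~\ref{def:chase-finite-case} guarantees that no rule $r \in \T$ is applicable to $\I_m$ under any assignment $\mu$. I would then show, rule type by rule type, that non-applicability is precisely the satisfaction condition of Definition~\ref{def:rule-satisfaction}. For a tgd, ``not applicable with any $\mu$'' unwinds to: for every assignment $\mu$ from $\phi(\vec{x})$ to $\I_m$ there is a head-compatible tgd-extension of $\mu$ to $\I_m$ --- which is verbatim the definition of tgd satisfaction. For an entity-egd, non-applicability says that every assignment $\mu$ has $\mu(y) = \mu(z)$, again matching satisfaction directly.

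The one case needing a small extra argument is the value-egd, where applicability is phrased as the existence of set-variables $S_i^y, S_j^z$ with $\mu(S_i^y) \neq \mu(S_j^z)$, whereas satisfaction demands equality of $\mu$ across all pairs drawn from $\setvar(y,\trasf(\phi(\vec{x}))) \cup \setvar(z,\trasf(\phi(\vec{x})))$, including same-side pairs. Since $y$ and $z$ both occur in $\phi(\vec{x})$, both set-variable families are nonempty; non-applicability forces every cross pair to be equal, and a short transitivity argument (fix one set-variable and route every other through it) collapses the entire union to a single common set, recovering the satisfaction condition. Concluding, $\I_m$ satisfies every tgd and egd of $\T$, hence $\I_m \models \K$. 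I expect this value-egd bookkeeping to be the only mildly delicate point; everything else is a direct reading-off of the definitions against the applicability conditions.
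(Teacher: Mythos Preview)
Your proposal is correct and follows essentially the same approach as the paper: use Proposition~\ref{pro:containment} together with $\I_0=\I^\DB$ for satisfaction of $\DB$, and argue that non-applicability of any rule to $\I_m$ is exactly the satisfaction condition of Definition~\ref{def:rule-satisfaction}. The paper's own proof is actually terser than yours---it handles the egd cases by saying they ``can be proved as done in (ii)''---so your extra bookkeeping for the value-egd (the transitivity step collapsing all set-variable images to a single set) is more careful than what the paper writes, but not a different argument.
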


\ifthenelse{\equal{\showproofs}{1}}{
    \begin{proof}
	We need to prove that $\chase(\K,\sigma)$ satisfies all facts in $\DB$, all tgds, entity-egds and value-egds in $\T$. We show each point separately:
	\begin{itemize}
		\item[(i)] The property that $\chase(\K,\sigma)$ satisfies all facts in $\DB$ follows from $\I_0=\I^\DB$ and from Proposition~\ref{pro:containment}. Indeed, $\I^\DB$ satisfies all facts in $\DB$, by definition of base instance, which means that for each ground atom $P(c_1,\ldots,c_n)$ in $\DB$ there is a fact $P(\genericSet_1,\ldots,\genericSet_n)$ in $\I_0$ such that $c_i \in T_i$ for $1 \leq i\leq n$. By Proposition~\ref{pro:containment} it follows that in $\I_m$ there is a tuple $\tup{T_1',\ldots,T_n'}$ such that $T_1 \subseteq T_1',\ldots,T_n \subseteq T_n'$, and thus $P(c_1,\ldots,c_n)$ is satisfied also in $\I_m=\chase(\K,\sigma)$.
		\item[(ii)] To prove that $\chase(\K,\sigma)$ satisfies all tgds in $\T$, let us assume by contradiction that there is a tgd $d = \forall \vec{x}(\phi(\vec{x}) \rightarrow \exists \vec{y} \psi(\vec{x}, \vec{y}))$ that is not satisfied by $\chase(\K,\sigma)$. This means that there is an assignment $\mu$ from $\phi(\vec{x})$ to $\chase(\K,\sigma)$ such that there is no $\psi(\vec{x},\vec{y})$-compatible tgd-extension $\mu'$ of $\mu$ to $\chase(\K,\sigma)$. But thus $d$ would be applicable to $\chase(\K,\sigma)$ with assignment $\mu$, which contradicts the hypothesis that $\sigma$ is a finite chase.
		\item[(iii)] The fact that $\chase(\K,\sigma)$ satisfies all the entity-egds and value-egds can be proved as done in (ii) for satisfaction of tgds in $\T$.
	\end{itemize}
\end{proof}
}

The following lemma is used to prove that if $\sigma$ is a finite chase, then $\chase(\K,\sigma)$ is not just a solution for $\K$, but also a universal solution for $\K$. A similar result, known as the Triangle Lemma, was used in~\cite{FKMP05} in the context of data exchange.


\begin{restatable}{lemma}{trianglefinitecase}
	\label{lem:triangle-finite-case}
	Let $\I_{1} \xrightarrow[]{r,\mu} \I_{2}$ be a chase step.
	Let $\I$ be an instance for $\K$ such that  $\I$ satisfies $r$ and  there exists a homomorphism $h_1: \I_1\rightarrow \I$. Then there is a homomorphism $h_2: \I_2\rightarrow \I$ such that $h_2$ extends $h_1$.
\end{restatable}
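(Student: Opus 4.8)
The plan is to prove the statement by a case analysis on the three possible shapes of the applied rule $r$ (tgd, entity-egd, value-egd), following the corresponding chase steps of Def.~\ref{def:chaseStep}. The common engine in all three cases is to \emph{push $\mu$ forward through $h_1$}: from the assignment $\mu$ from $\Body(r)$ to $\I_1$ and the homomorphism $h_1$ I would build an assignment $\nu$ from $\Body(r)$ to $\I$. Concretely, for each atom of $\trasf(\phi(\vec{x}))$ I take the fact of $\I$ that dominates the $h_1$-image of its $\mu$-image (it exists since $h_1$ is a homomorphism) and read off $\nu$ from it: every entity-variable $x$ is sent to the unique equivalence class of $\I$ containing the non-empty set $h_1(\mu(x))$ (uniqueness holds because equivalence classes are pairwise disjoint), and every set-variable $S$ is sent to the set of values sitting at its position in that dominating fact. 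This is exactly the computation already carried out in the proof of Prop.~\ref{pro:query-under-homo}; in particular the non-empty-intersection requirement (Condition~4 of Def.~\ref{def:assignment-revisited}) survives, since any $w \in \bigcap_i \mu(S_i^x)$ yields $h_1(w) \in \bigcap_i \nu(S_i^x)$. Because $\I$ satisfies $r$, I can then feed $\nu$ into Def.~\ref{def:rule-satisfaction} and exploit the conclusion it provides.

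For the tgd case the set of underlying elements grows by the fresh nulls $\{f_e^1,\dots,f_e^h\}$ and $\{f_v^1,\dots,f_v^j\}$, so $h_2$ must genuinely extend $h_1$. Since $\I \models r$ and $\nu$ is an assignment from $\phi(\vec{x})$ to $\I$, there is a head-compatible tgd-extension $\nu'$ of $\nu$ to $\I$. I would define $h_2$ to agree with $h_1$ on $\underD(\I_1)$, to send each $f_e^s$ to an arbitrary member of the equivalence class $\nu'(y_e^s)$, and to send each $f_v^q$ to an arbitrary member of $\bigcap \nu'(R^{y_v^q})$ (non-empty because $\nu'$ is an assignment); these targets lie in $\underD_E(\I)$ and $\underD_V(\I)$ respectively, so Conditions~1--4 of Def.~\ref{def:homomorphism} hold. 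It then remains to check Condition~5 for each new head fact $P(T_1,\dots,T_k,U_1,\dots,U_\ell)$ against the $\nu'$-image of the corresponding head atom: at a frontier entity position $h_1(\mu(x)) \subseteq \nu(x)=\nu'(x)$; at a frontier value position $h_1(\bigcap_p \mu(S_p^x)) \subseteq \bigcap_p \nu(S_p^x) \subseteq \bigcap_q \nu'(R_q^x)$, where the last inclusion is precisely the head-compatibility clause of Def.~\ref{def:rule-satisfaction}; and at existential positions domination holds by the choice of $h_2(f_e^s)$ and $h_2(f_v^q)$.

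For the two egd cases no fresh elements are introduced, so I would simply set $h_2=h_1$; since $\underD(\I_2)=\underD(\I_1)$, Conditions~1--4 are inherited and only Condition~5 for the rewritten facts needs attention. For an entity-egd, satisfaction of $r$ by $\I$ gives $\nu(y)=\nu(z)$, hence $h_1(\mu(y))\cup h_1(\mu(z)) \subseteq \nu(y)$ lies in a single class; a fact of $\I_2$ carrying $\mu(y)\cup\mu(z)$ in an entity slot arises by the \emph{global} rewriting of some fact of $\I_1$ whose dominating fact in $\I$ must already have $\nu(y)$ in that slot (its class contains the non-empty set $h_1(\mu(y))$), so the same fact of $\I$ keeps dominating. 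For a value-egd, satisfaction of $r$ by $\I$ forces all the sets $\nu(S_i^y),\nu(S_j^z)$ to coincide with a single set $\widehat{W}$, whence $h_1$ of the union $\mu(S_1^y)\cup\dots\cup\mu(S_k^z)$ is contained in $\widehat{W}$; since the rewriting here is \emph{local} to $\mu(\phi(\vec{x}))$, every rewritten fact is dominated by the $\nu$-image of the atom it came from.

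I expect the main obstacle to be the tgd case, and within it the treatment of value-variables: one must keep straight the bookkeeping of the set-variables produced by $\trasf$ in the body versus the head, and verify that the intersection of the body sets is correctly ``propagated'' into the head through the head-compatibility clause, both when invoking $\nu'$ and when choosing the images of the fresh value-nulls. A secondary point to handle with care is that pushing $\mu$ forward really does yield an assignment into $\I$: this is immediate for variables, and for value constants it relies on choosing the dominating facts coherently, so that Condition~5 of Def.~\ref{def:assignment-revisited} is met on the nose.
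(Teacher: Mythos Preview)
Your proposal is correct and follows essentially the same approach as the paper: a case analysis on the three chase-step types, in each case pushing $\mu$ forward through $h_1$ to obtain an assignment into $\I$ (the paper calls it $\hat{\mu}$), then for tgds extending $h_1$ on the fresh nulls via a head-compatible extension, and for both egds taking $h_2=h_1$ and checking Condition~5 on the rewritten facts. The paper works through a representative concrete tgd/egd shape rather than stating the argument abstractly, but the content and the delicate points you flag (intersection propagation for frontier value-variables, well-definedness of the pushed-forward assignment) are exactly those handled there.
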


\ifthenelse{\equal{\showproofs}{1}}{
    \begin{proof}
	
	The proof follows the line of the proof of Lemma 3.4 in \cite{FKMP05}. However, since our structures are based on equivalence classes and sets of values, the various steps of the proof are more involved. In particular, differently from~\cite{FKMP05}, in our framework the composition of an assignment with an homomorphism does not directly yield a new assignment. Nonetheless, an assignment can be constructed on the basis of such composition.
	
	We prove the thesis by considering the three possible cases for the rule $d$ and the corresponding chase steps of Definition~\ref{def:chaseStep}.
	
	\emph{Case of tgd chase step:} Let $r$ be a tgd of the form (\ref{eq:tgd}). For ease of exposition, we  assume $\phi(\vec{x}) = P_1(x_1,x_2) \land P_2(x_1,x_2)$ and $\psi(\vec{x},\vec{y})=P_3(x_1,x_2,y_1,y_2) \land P_4(x_2,y_1,y_2)$, where $x_1$ and $y_2$ are entity variables and $x_2$ and $y_1$ are value-variables. 
 The case of generic tgds can be managed in a similar way.
	If we apply the operator $\tau$ to both the body and the head of the tgd, we obtain the following formulas:\\
	$\tau(\phi(\vec{x}))=P_1(x_1,S_1^{x_2})\land P_2(x_1,S_2^{x_2})$\\
	$\tau(\psi(\vec{x},\vec{y}))=P_3(x_1,R_1^{x_2},R_1^{y_1},y_2)\land P_4(R_2^{x_2},R_2^{y_1},y_2)$\\
	We first construct an assignment $\hat{\mu}$ from $\phi(\vec{x})$ to $\I$.   By the definition of chase step, $\mu$ is an assignment from $\phi(\vec{x})$ to $\I_1$. Then, there exist an equivalence class $E_1 \in \activeD_E(\I_1)$ and two sets of values $V_1,V_2 \in \activeD_V(\I_1)$ such that $V_1 \cap V_2 \neq \emptyset$, $\mu(x_1)=E_1$, $\mu(S_1^{x_2})=V_1$, $\mu(S_2^{x_2})=V_2$, $P_1(E_1,V_1)$ and $P_2(E_1,V_2)$ are facts in $\I_1$.
	Since $h_1$ is homomorphism from $\I_1$ to $\I$, there exist facts $P_1(E_1',V_1')$, $P_2(E_2',V_2')$ in $\I$, 
	such that 
    $h_1(P_1(E_1,V_1)) \leq P_1(E_1',V_1')$ and $h_1(P_2(E_1,V_2)) \leq P_2(E_1',V_2')$, that is,
    $h_1(E_1)\subseteq E_1'$, $h_1(E_1)\subseteq E_2'$, $h_1(V_1)\subseteq V_1'$, $h_1(V_2)\subseteq V_2'$. 
	Since $V_1 \cap V_2 \neq \emptyset$, then also $V_1' \cap V2' \neq \emptyset$. Moreover, since $E_1'$ and $E_2'$ are equivalence classes containing the same non-empty subset, they must be the same equivalence class, i.e., $E_1'=E_2'$. We then define the mapping $\hat{\mu}$  as follows: $\hat{\mu}(x_1)=E_1'$, $\hat{\mu}(S_1^{x_2})=V_1'$, and $\hat{\mu}(S_2^{x_2})=V_2'$. It is easy to see that $\hat{\mu}$ satisfies the conditions in Definition~\ref{def:assignment-revisited}, and thus it  is an assignment from $\phi(\vec{x})$ to $\I$.
	
	Since $\I$ satisfies $r$, there is an assignment $\mu'$ that is a $\psi(\vec{x},\vec{y})$-compatible extension of $\hat{\mu}$ to $\I$. By definition we have that: 
	\begin{itemize}
		\item $\mu'(x_1)=\hat{\mu}(x_1)=E_1'$
		\item $\mu'(R_1^{x_2})=V_3'$, $\mu'(R_2^{x_2})=V_4'$, where $V_3',V_4' \in \activeD_V(\I)$ are such that $V_1' \cap V_2' \subseteq V_3' \cap V_4'$,
		\item $\mu'(R_1^{y_1})=V_5'$, $\mu'(R_2^{y_1})=V_6'$, where $V_5',V_6' \in \activeD_V(\I)$ are such that $V_5' \cap V_6' \neq \emptyset$,
		\item $\mu'(y_2)=E_2'$, where $E_2' \in \activeD_E(\I)$,
		\item $\I$ contains the facts $P_3(E_1',V_3',V_5',E_2')$ and $P_4(V_4',V_6',E_2')$.
	\end{itemize}

	Then, by definition of tgd chase step it follows that there is also an assignment $\mu_2$ from $\psi(\vec{x},\vec{y})$ to $\I_2$ defined as follows
	\begin{itemize}
		\item $\mu_2(x_1)=\mu(x_1)=E_1$
		\item $\mu_2(R_1^{x_2})=\mu_2(R_2^{x_2})=V_1 \cap V_2$
		\item $\mu_2(R_1^{y_1})=\mu_2(R_2^{y_1})=\{f_v^{y_1}\}$
		\item $\mu_2(y_2)=[f_e^{y_2}]$
	\end{itemize}
	where $f_v^{y_1}$ and $f_e^{y_2}$ are the fresh value-null and entity-null, respectively, introduced by the chase step. We remark that $\mu_2$ is a $\psi(\vec{x},\vec{y})$-compatible extension of $\mu$ to $\I_2$, and that $\underD(\I_2)=\underD(\I_1) \cup \{f_v^{y_1},f_e^{y_2}\}$. Moreover, $\I_2= \I_1 \cup \{P_3(E_1,V_1 \cap V_2, \{f_v^{y_1}\},[f_e^{y_2}]),P_4(V_1 \cap V_2, \{f_v^{y_1}\},[f_e^{y_2}])\}$.

	We now construct a mapping $h_2$ 
	and show that it is an homomorphism from $\I_2$ to $\I$.
	We define $h_2$ as follows:
	\begin{itemize}
		\item $h_2(o)=h_1(o)$ for each $o \in \underD(\I_1)$;
		\item $h_2(f_v^{y_1})=v$, where $v$ is any element in $V_5' \cap V_6'$;
		\item $h_2(f_e^{y_2})=e$, where $e$ is any element in $E_2'$.
	\end{itemize}
	Note that, as stated before, $V_5' \cap V_6' \neq \emptyset$. Moreover, $E_2' \neq \emptyset$, since $E_2'$ is an equivalence class in $\activeD(\I)$. Thus $v$ and $e$ above always exist. 
	Note also that $h_2$ extends $h_1$.
	
	By construction, $h_2$ is a mapping from the elements of $\underD(\I_2)$ to the elements of $\underD(\I)$. Moreover, Conditions~1-4 of Definition~\ref{def:homomorphism} (homomorphism) are trivially satisfied.
	To prove that $h_2$ is homomorphism from $\I_2$ to $\I$ it remains to prove that also Condition~5 of Definition~\ref{def:homomorphism} holds. 
	Such condition is obviously satisfied by all facts in $\I_1$ (which is contained in $\I_2$), since $h_2$ over such facts behaves as $h_1$, and $h_1$ is an homomorphism to $\I$. 
	Consider now the fact $P_3(E_1,V_1 \cap V_2, \{f_v^{y_1}\},[f_e^{y_2}])$, which is in $\I_2\setminus \I_1$. We have to show that there exists a fact in $\I$ of the form $P_3(\hat{E_1},\hat{V_1},\hat{V_2},\hat{E_2})$, where $\hat{E_1},\hat{E_2} \in \activeD_E(\I)$ and $\hat{V_1},\hat{V_2} \in \activeD_V(\I)$, 
	such that 
    $h_2(P_3(E_1,V_1 \cap V_2, \{f_v^{y_1}\},[f_e^{y_2}])) \leq P_3(\hat{E_1},\hat{V_1},\hat{V_2},\hat{E_2})$, i.e.,
    $h_2(E_1) \subseteq\hat{E_1}$, $h_2(V_1 \cap V_2)\subseteq \hat{V_1}$, $h_2( \{f_v^{y_1}\})\subseteq \hat{V_2}$,  $h_2( [f_e^{y_2}])\subseteq \hat{E_2}$.
	To this aim we set $\hat{E_1}=E_1'$, $\hat{V_1}=V_3'$, $\hat{V_2}=V_5'$, $\hat{E_2}=E_2'$, and show that the above mentioned containments do indeed hold.
	\begin{itemize}
		\item $h_2(E_1) \subseteq E_1'$ follows from $h_2(E_1)=h_1(E_1)$ and $h_1(E_1) \subseteq E_1'$;
		\item $h_2(V_1 \cap V_2) \subseteq V_3'$ is a consequence of $h_2(V_1 \cap V_2)=h_1(V_1 \cap V_2) \subseteq h_1(V_1) \cap h_1(V_2) \subseteq V_1' \cap V_2' \subseteq V_3' \cap V_4' \subseteq V_3'$. 
		\item $h_2(\{f_v^{y_1}\}) \subseteq V_5'$ holds because $h_2(f_v^{y_1})=v$ and $v \in V_5' \cap V_6'$, and thus $h_2(\{f_v^{y_1}\})=\{v\} \subseteq V_5'$.
		\item $h_2([f_e^{y_2}]) \subseteq E_2'$ holds because $h_2(f_v^{y_1})=e$ and $e \in E_2'$, and thus $h_2([f_e^{y_2}])=[e] \subseteq E_2'$.
	\end{itemize}
	We can proceed in analogously to show that Condition~5 of Definition~\ref{def:homomorphism} is satisfied also for the other fact, $P_4(V_1 \cap V_2, \{f_v^{y_1}\},[f_e^{y_2}])$, which is contained in $\I_2 \setminus \I_1$.
	
	\emph{Case of entity-egd chase step:} 
	Let $r$ be an entity-egd of the form (\ref{eq:egd}). 
	By the definition of the chase step, $\mu$ is an assignment from $\phi(\vec{x})$ to $\I_1$, and in particular there exist equivalence classes $E_1,E_2 \in \activeD_E(\I_1)$ such that $\mu(y)=E_1$ and  $\mu(z)=E_2$.
	It is then possible to construct an assignment $\hat{\mu}$ from $\phi(\vec{x})$ to $\I$ such that $\hat{\mu}(y)=E_1'$, $\hat{\mu}(z)=E_2'$, $E_1',E_2' \in \activeD_E(\I_1)$, $h_1(E_1) \subseteq E_1'$ and $h_1(E_2) \subseteq E_2'$.
	The construction of $\hat{\mu}$ is similar to the analogous construction described for the case of tgd chase step. For the sake of completeness we describe this procedure also for the entity-egd chase step (note that in this case we need to also deal with entities and values possibly occurring in the rule body $\phi(\vec{x})$). For ease of exposition, and without loss of generality, we assume that $\phi(\vec{x})=P_1(x,y,e,v) \land P_2(x,y,z)$, where $x$ is a value-variable, $y,z$ are entity-variables, $e \in \alphaEntities$, and $v \in \alphaValues$. By applying the operator $\tau$ to $\phi(\vec{x})$ we obtain
	$\tau(\phi(\vec{x}))=P_1(S_1^{x},y,e,v)\land P_2(S_2^{x},y,z)$. Since $\mu$ is an assignment from $\phi(\vec{x})$ to $\I_1$, there exist $E,E_1,E_2 \in \activeD_E(\I_1)$  and $V,V_1,V_2 \in \activeD_V(\I_1)$ such that $e \in E$, $V_1 \cap V_2 \neq \emptyset$, $v \in V$, $P_1(V_1,E_1,E,V)$ and $P_2(V_2,E_1,E_2)$ are two facts in $\I_1$, and $\mu(S_1^x)=V_1$, $\mu(y)=E_1$, $\mu(v)=V$, $\mu(S_2^x)=V_2$, and $\mu(z)=E_2$.
	Since $h_1$ is homomorphism from $\I_1$ to $\I$, there exist 
    $P_1(V_1',E_1',E',V')$ and $P_2(V_2',E_1',E_2')$ in $\I$ such that $h_1(P_1(V_1,E_1,E,V))\leq P_1(V_1',E_1',E',V')$ and $h_1(P_2(V_2,E_1,E_2)) \leq P_2(V_2',E_1',E_2')$, i.e.,
    $h_1(E_1) \subseteq E_1'$, $h_1(V_1) \subseteq V_1'$, $h_1(E) \subseteq E'$, $h_1(V) \subseteq V'$, $h_1(V_2) \subseteq V_2'$, $h_1(E_2) \subseteq E_2'$ and . We then define $\hat{\mu}$ as follows: $\hat{\mu}(S_1^x)=V_1'$, $\hat{\mu}(y)=E_1'$, $\hat{\mu}(v)=V$, $\hat{\mu}(S_2^x)=V_2'$, $\hat{\mu}(z)=E_2'$.
	It is easy to see that $\hat{\mu}$ is an assignment from $\phi(\vec{x})$ to $\I$. In particular, since $e \in E$ and $h(e)=e$ it obviously holds $e \in h_1(E) \subseteq E'$, which implies  $e \in E'$. Similarly, since $v \in \mu(v)=V$ and $v=h_1(v)$ we have that $v \in h_1(V) \subseteq V'$, which implies  $v \in V'$.

	We then take $h_2$ to be $h_1$ and show that $h_1$ is still a homomorphism when considered from $\I_2$ to $\I$. 
	We soon note that $\underD(\I_1)=\underD(\I_2)$, and thus $h_1$ is a mapping from $\underD(\I_2)$ as well. Moreover, since $h_1$ is an homomorphism, then Conditions~1-4 of Definition~\ref{def:homomorphism} are obviously satisfied. It remains to prove that Condition~5 of Definition~\ref{def:homomorphism} is also satisfied, i.e., we have to prove that for each fact  $P(\genericSet_1,\ldots,\genericSet_n)$ in $\I_2$ there exists a fact $P(\genericSettwo_1,\ldots,\genericSettwo_n)$ in $\I$ such that  $h_1(\genericSet_i) \subseteq \genericSettwo_i$, for $1 \leq i \leq n$.
	We recall that $\I_2$ is obtained by replacing in $\I_1$ every occurrence of $E_1$ and every occurrence of $E_2$ with $E_1 \cup E_2$. Therefore, two cases are possible. $(a)$ $\genericSet_i \neq E_1 \cup E_2$  for $1 \leq i \leq n$. In this case, the thesis holds since $P(\genericSet_1,\ldots,\genericSet_n)$ is also in $\I_1$ and $h_1$ is an homomorphism from $\I_1$ to $\I$. $(b)$ there is $j$ such that $1 \leq j \leq n$ and $\genericSet_j=E_1 \cup E_2$. Without loss of generality we can also assume that $\genericSet_i \neq E_1 \cup E_2$ for $1 \leq i \leq n$ and $i \neq j$. Since two equivalence classes are either disjoint or are the same class, we have that each $T_i$ such that $i\neq j$ is either a set of values or an equivalence class disjoint from $E_1 \cup E_2$.
	Then, there is a fact $P(\genericSet_1,\ldots,\genericSet_{j-1},\genericSet'_j,\genericSet_{j+1}\ldots,\genericSet_n)$ in $\I_1$ where it is either $\genericSet'_j=E_1$ or $\genericSet'_j=E_2$, and, for every $i\neq j$, $T_i \neq E_1$ and $T_i \neq E_2$ . Let us consider  $\genericSet'_j=E_1$ (the other choice can be treated analogously). Then, there must exist a fact $P(\genericSettwo_1,\ldots,\genericSettwo_{j-1},\genericSettwo_j,\genericSettwo_{j+1}\ldots,\genericSettwo_n)$ in $\I$ such that $h_1(T_i) \subseteq U_i$ for $1 \leq i \leq n$. In particular, this means that $h_1(E_1) \subseteq \genericSettwo_j$, with $\genericSettwo_j$ an equivalence class in $\activeD_E(\I)$. 
	Since $h_1(E_1) \subseteq E_1'$, it must be $\genericSettwo_j =E_1'$. Indeed, for the two equivalence classes $\genericSettwo_j$  and $E_1'$ it holds either $\genericSettwo_j =E_1'$ or $\genericSettwo_j  \cap E_1'=\emptyset$, but the second option is impossible, since  $h_1(E_1)$ is a common non-empty subset of $\genericSettwo_j$  and $E_1'$. Then, since $\I$ satisfies the entity-egd $d$, which in $d$ is triggered by $\hat{\mu}$, we have that $E_1'=E_2'$ and thus from $h_1(E_1)\subseteq E_1'$ and $h_1(E_2)\subseteq E_2'$ it follows  $h_1(E_1)\cup h_1(E_2)\subseteq E_1'$ and thus $h_1(E_1\cup E_2)\subseteq E_1'$, that is $h_1(\genericSet_j)\subseteq E_1'=U_j$. This shows that $P(\genericSettwo_1,\ldots,\genericSettwo_{j-1},\genericSettwo_j,\genericSettwo_{j+1}\ldots,\genericSettwo_n)$ is the fact in $\I$ we were looking for, thus proving Condition~5 of Definition~\ref{def:homomorphism}.
	
	\emph{Case of value-egd chase step:} The proof is similar to the one used for the entity-egd chase step. 
	Let $r$ be an entity-egd of the form (\ref{eq:egd}), and, Without loss of generality, assume that $\phi(\vec{x})=P_1(y,z)\land P_2(z)$, where $y,z$ are value-variables (the possible presence of entity-variables, entities and values can be dealt with as shown in the previous cases of this proof). Then, $\tau(\phi(\vec{x}))=P_1(S_1^y,S_1^z)\land P_2(S_2^z)$. Since $\mu$ is assignment from $\phi(\vec{x})$ to $\I_1$, then there exist $V_1,V_2,V_3 \in \activeD_V(\I_1) $ such that $\mu(S_1^y)=V_1$, $\mu(S_1^z)=V_2$, $\mu(S_2^z)=V_3$, $V_2 \cap V_3 \neq \emptyset$, $P_1(V_1,V_2)$ and $P_2(V_3)$ are facts in $\I_1$.
	Since $h_1$ is homomorphism from $\I_1$ to $\I$, then there are facts in $P_1(V_1',V_2')$ and $P_2(V_3')$ in $\I$ such that 
    $h_1(P_1(V_1,V_2)) \leq P_1(V_1',V_2')$ and $h_1(P_2(V_3)) \leq P_2(V_3')$, i.e.,
    $h_1(V_1) \subseteq V_1'$, $h_1(V_2) \subseteq V_2'$, $h_1(V_3) \subseteq V_3'$. Since $V_2 \cap V_3 \neq \emptyset$, then $V_2' \cap V_3' \neq \emptyset$. We define a mapping $\hat{\mu}$ such that $\hat{\mu}(S_1^y)=V_1'$, $\hat{\mu}(S_1^z)=V_2'$, $\hat{\mu}(S_2^z)=V_3'$. It is easy to see that $\hat{\mu}$ is an assignment from $\phi(\vec{x})$ to $\I$.
	
	We then take $h_2$ to be $h_1$ and show that $h_1$ is still a homomorphism when considered from $\I_2$ to $\I$. 
	We soon note that $\underD(\I_1)=\underD(\I_2)$, and thus $h_1$ is a mapping from $\underD(\I_2)$ as well. Moreover, since $h_1$ is an homomorphism, then condition 1-4 of Definition~\ref{def:homomorphism} are obviously satisfied.
	Condition~5 of Definition~\ref{def:homomorphism} is also already satisfied for all facts in $\I_2$ that are also in $\I_1$. It remains to prove that such condition holds also for facts in  $\I_2 \setminus \I_1$. The only two such facts are $P_1(V_1 \cup V_2 \cup V_3, V_1 \cup V_2 \cup V_3)$ and $P_2(V_1 \cup V_2 \cup V_3)$ (see the value-egd chase step in Definition~\ref{def:chaseStep}). We have to show that there are facts $P_1(\hat{V_1},\hat{V_2})$ and $P_2(\hat{V_3})$ in $\I$ such that 
    $h_1(P_1(V_1 \cup V_2 \cup V_3, V_1 \cup V_2 \cup V_3)) \leq P_1(\hat{V_1},\hat{V_2})$ and $h_1(P_2(V_1 \cup V_2 \cup V_3)) \leq P_2(\hat{V_3})$, i.e.,
    $h_1(V_1 \cup V_2 \cup V_3) \subseteq \hat{V_1}$, $h_1(V_1 \cup V_2 \cup V_3) \subseteq \hat{V_2}$, and $h_1(V_1 \cup V_2 \cup V_3) \subseteq \hat{V_3}$.
	It is not difficult to see that such facts are $P_1(V_1',V_2')$ and $P_2(V_3')$. Indeed, we know that $h_1(V_1) \subseteq V_1'$ , $h_1(V_2) \subseteq V_2'$ and $h_1(V_3) \subseteq V_3'$. Then, since $\I$ satisfies $d$, which is triggered in $\I$ by $\hat{\mu}$,  it must be $V_1'=V_2'=V_3'$. Thus, we have that $h_1(V_1) \cup h_1(V_2) \cup h_1(V_3) \subseteq V_1'=V_2'=V_3'$, and thus $h_1(V_1 \cup V_2 \cup V_3) \subseteq V_1'=V_2'=V_3'$, which shows the thesis.
\end{proof}
}

The following theorem is the main result of this section.

\begin{restatable}{theorem}{chaseunivmodel}
	\label{thm:chase-univ-model}
	If $\K$ is a KB and  $\sigma=\I_0,\I_1,\ldots,\I_m$ is a finite chase for $\K$, then $\chase(\K,\sigma)$ is a universal solution for $\K$. 
\end{restatable}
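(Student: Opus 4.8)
The plan is to combine Lemma~\ref{lem:chase-finite-case-model} with the Triangle Lemma (Lemma~\ref{lem:triangle-finite-case}) via a straightforward induction along the chase sequence. By Lemma~\ref{lem:chase-finite-case-model} we already know that $\chase(\K,\sigma)=\I_m$ is a \emph{solution} for $\K$, so the only thing left to establish is \emph{universality}: for an arbitrary solution $\I \in \Mod(\K)$, I must exhibit a homomorphism $h\colon \I_m \rightarrow \I$. Fixing such an $\I$, I would construct a chain of homomorphisms $h_0, h_1, \ldots, h_m$ with $h_i\colon \I_i \rightarrow \I$, where each $h_{i+1}$ extends $h_i$, and then take $h = h_m$.

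First I would handle the base case $h_0\colon \I_0 \rightarrow \I$, recalling that $\I_0=\I^\DB=\{P(\{c_1\},\ldots,\{c_n\}) \mid P(c_1,\ldots,c_n)\in\DB\}$. Since $\I^\DB$ contains no nulls, I define $h_0$ to be the identity on $\underD(\I^\DB)$, which automatically satisfies Conditions~1 and~3 of Definition~\ref{def:homomorphism} (and Conditions~2 and~4 hold vacuously). For Condition~5, note that because $\I$ is a solution it satisfies $\DB$; hence for each ground atom $P(c_1,\ldots,c_n)\in\DB$ there is a fact $P(\genericSettwo_1,\ldots,\genericSettwo_n)\in\I$ with $c_i\in\genericSettwo_i$ for all $i$. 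Since $h_0(\{c_i\})=\{c_i\}\subseteq\genericSettwo_i$, this gives $P(h_0(\{c_1\}),\ldots,h_0(\{c_n\}))\leq P(\genericSettwo_1,\ldots,\genericSettwo_n)$, so $h_0$ is a homomorphism from $\I^\DB$ to $\I$.

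For the inductive step, suppose $h_i\colon \I_i \rightarrow \I$ is a homomorphism, and consider the chase step $\I_i \xrightarrow{r_i,\mu_i} \I_{i+1}$. The rule $r_i$ belongs to $\T$, and since $\I$ is a solution it satisfies every tgd and egd in $\T$; in particular $\I$ satisfies $r_i$. Thus the hypotheses of the Triangle Lemma are met with $\I_1:=\I_i$, $\I_2:=\I_{i+1}$, $r:=r_i$, $\mu:=\mu_i$, and $h_1:=h_i$, and the lemma yields a homomorphism $h_{i+1}\colon \I_{i+1} \rightarrow \I$ extending $h_i$. Iterating this step $m$ times produces $h_m\colon \I_m \rightarrow \I$. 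Since $\I$ was an arbitrary solution, $\chase(\K,\sigma)=\I_m$ admits a homomorphism into every solution and is therefore universal; together with Lemma~\ref{lem:chase-finite-case-model} this shows it is a universal solution for $\K$.

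I expect essentially all the genuine difficulty to have already been absorbed into the Triangle Lemma, whose proof must cope with the fact that in our framework the composition of an assignment with a homomorphism is not itself an assignment and that equivalence classes and sets of values grow along the chase. The only care required at the level of this theorem is bookkeeping: confirming that each $h_{i+1}$ genuinely extends $h_i$ (so that the chain is well defined rather than merely a sequence of unrelated maps), and confirming that the rule applied at each step indeed lies in $\T$, which is immediate from Definition~\ref{def:chaseStep} and Definition~\ref{def:chaseSequence}.
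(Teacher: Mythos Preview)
Your proposal is correct and follows essentially the same approach as the paper: use Lemma~\ref{lem:chase-finite-case-model} for solution-hood, observe that the identity on $\underD(\I^\DB)$ is a homomorphism from $\I_0$ into any solution (you spell out the verification of Condition~5 of Definition~\ref{def:homomorphism} more explicitly than the paper does), and then iterate the Triangle Lemma along the chase sequence to propagate the homomorphism to $\I_m$.
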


\ifthenelse{\equal{\showproofs}{1}}{
\begin{proof}
	From Lemma \ref{lem:chase-finite-case-model} it follows that $\chase(\K,\sigma)$ is a model for $\K$. It remains to prove that it is universal, i.e., that there exists an homomorphism from $\chase(\K,\sigma)=\I_m$ to every solution for $\K$. The proof is based on Lemma~\ref{lem:triangle-finite-case} and on the observation that the identity mapping $\id$ is a homomorphism from $\I_0$ to every $\I \in \Mod(\K)$. The thesis can be thus proved by applying Lemma~\ref{lem:triangle-finite-case} to each $\I_i,\I_{i+1}$ in $\sigma$, starting from $\I_0,\I_1$, with $h_1=\id$ as homomorphism from $\I_0$ to a solution $\I$ for $\K$. 
\end{proof}
}

Theorem~\ref{thm:chase-univ-model} implies  that if both $\sigma$ and $\sigma'$ are finite chases for $\K$, then $\chase(\K,\sigma)$ and
$\chase(\K,\sigma')$ are homomorphically equivalent. Thus, the result of a finite chase for $\K$ is unique up to homomorphic equivalence.

It is easy to verify that if $\K=(\T,\DB)$ is a KB in which every tgd in $\T$ is full (i.e., there are no existential quantifiers in the heads of the tgds in $\T$), then every chase sequence for $\K$
is finite. In particular, this holds true if $\T$ consists of egds only, which covers all  entity resolution settings.


For infinite chase sequences, the definition of the result of the chase requires some care. The typical approach adopted when only  tgds are present~\cite{KrMR19,GoOn19}, or for  settings involving \emph{separable} tgds and egds~\cite{JoKl84,CDLLR07,CaGK13}, is to define the result of an infinite chase sequence as 
the union of all facts 
generated in the various chase steps. This approach does not work for arbitrary tgds and egds since the resulting 
instance needs not satisfy all the rules and thus needs not be  a solution in our terminology. Moreover,  in our framework the union of all instances in an infinite chase sequence is not even an instance for the KB at hand, because the sets of entities in the result does not correspond to equivalence classes with respect to an equivalence relation. 

An alternative approach, which might be better suited for a setting with arbitrary tgds and egds, is to define the result of an infinite chase sequence as the instance 
containing all \emph{persistent} facts, i.e.,
all facts that are introduced in some  step in the chase sequence and are never modified in subsequent chase steps. This notion was introduced in~\cite{BeVa84}. In our setting, given a KB $\K$ and an infinite chase sequence $\sigma= \I_0,\I_1,\ldots$, this  means that we define the result $\chase(\K,\sigma)$ of the infinite chase $\sigma$ of $\K$ as follows:
\begin{multline} 
\chase(\K,\sigma)=\{f \mid \textrm{there is some } i\geq 0 \textrm{ such that } \\ f \in \I_j  \textrm{ for each } j \geq i\}. \label{bv}
\end{multline}

The following example shows that this definition does not work in our framework, because the above set might be empty (even if the database $\DB$ is non-empty). 

\begin{example}
	\label{exa:phokion}
	Let $\K=\tup{\T, \DB}$, where $\DB=\{P(1,2)\}$ and $\T$ consists of the  two rules
	\small $$(r_1) ~ P(x,y)\rightarrow P(y,z);~~ r_2)~ P(x,y) \land P(y,z) \ra y=z,$$ \normalsize
			\commentout{
			\[
			\begin{array}{ll}
				(r_1) & P(x,y) \ra P(y,z)\\
				(r_2) &	P(x,y) \land P(y,z) \ra y=z
			\end{array}	
			\]}
			with  $\type(P)=\tup{\val,\val}$. We construct an infinite chase sequence  starting with $\I_0=\{P(\{1\},\{2\})\}$ 	and by repeatedly applying  the above rules with suitable assignments in the following order: 
			$r_1,r_1,r_2,r_1,r_2,r_1,r_2,r_1\ldots$.
			
			We obtain the  infinite chase sequence $\sigma=\I_0,\I_1,\ldots$ 
			\[
                \small
			\begin{array}{lcl}
				\I_0&=&\{P(\{1\},\{2\})\}\\
				\I_1&=&\{P(\{1\},\{2\}),P(\{2\},\{v_\bot^1\})\}\\
				\I_2&=&\{P(\{1\},\{2\}),P(\{2\},\{v_\bot^1\}), P(\{v_\bot^1\},\{v_\bot^2\})\}\\
				\I_3&=&\{P(\{1\},\{2,v_\bot^1\}),P(\{2,v_\bot^1\},\{2,v_\bot^1\}), \\
                    && P(\{v_\bot^1\},\{v_\bot^2\})\}\\
	\ldots \\			
				\I_7&=&\{P(\{1\},\{2,v_\bot^1,v_\bot^2\}),P(\{2,v_\bot^1\},\{2,v_\bot^1,v_\bot^2\}), \\ 
                    && P(\{2,v_\bot^1,v_\bot^2\},\{2,v_\bot^1,v_\bot^2\}),P(\{v_\bot^2\},\{v_\bot^3\}),\\ 
                    && P(\{v_\bot^3\},\{v_\bot^4\})\}\\
				\ldots
                \normalsize
			\end{array}
			\]
\commentout{			
			We point out that in the step $\I_2 \xrightarrow[]{r_2,\mu_2} \I_3$, the image of $\mu_2$ is given by the facts $\{P(\{1\},\{2\}),P(\{2\},\{v_\bot^1\})$; in the step $\I_4 \xrightarrow[]{r_2,\mu_4} \I_5$, the image of $\mu_4$ is given by the facts $\{P(\{1\},\{2,v_\bot^1\}),P(\{v_\bot^1\},\{v_\bot^2\})$; in the step $\I_6 \xrightarrow[]{r_2,\mu_6} \I_7$, the image of $\mu_6$ is given by the facts $P(\{2,v_\bot^1\},\{2,v_\bot^1\}),P(\{2,v_\bot^1,v_\bot^2\},\{2,v_\bot^1,v_\bot^2\})$.\nb{To better describe the example, we should indicate a policy for the selection of the assignment triggering $r_2$, because everytime there are various assignments (I thought it was the assignment containing the fact that has been modified or introduced less recently, but typically there is more than one assignment with this characteristic. }}
			
			It is not difficult to see that 
			the set $\chase(\K,\sigma)$  defined in (\ref{bv}) above is empty, i.e., \emph{no} persistent facts occur in $\sigma$.
			\qed
		\end{example}
The infinite chase sequence $\sigma$  in Ex.~\ref{exa:phokion} is fair. At the same time, there are finite fair sequences for the KB $\K$ in this example. Indeed, if we  apply rule $r_1$ and then  rule $r_2$,
 we get a finite chase sequence $\sigma' = \I'_0,\I'_1,\I'_2$, where
\[	
\small
\begin{array}{lcl}
				\I'_0&=&\{P(\{1\},\{2\})\}\\
				\I'_1&=&\{P(\{1\},\{2\}),P(\{2\},\{v_\bot^1\})\}\\
				\I'_2 & = & 
			\{P(\{1\},\{2,v_\bot\}),P(\{2,v_\bot\},\{2,v_\bot\})\}.
				\end{array}
    \normalsize
				\]
Consequently, $\chase(\K,\sigma')=\I'_2$.
Thus, further investigation is needed when both finite and infinite chase sequences exist for a KB. We leave this as a topic for future work.

\section{Related Work}
\label{sec:related}

From the extensive literature on entity resolution\ifthenelse{\equal{\showproofs}{0}}{ and due to space limitations}, we comment briefly on only a small subset of earlier work that is related to ours.

Swoosh~\cite{BGMS*09} is  a generic approach to entity resolution in which the functions used to compare and merge records are ``black boxes". 
In our framework,  the match function is determined by entity-egds and value-egds, whereas the merge function is implemented via the union operation - an important special case of merge functions in the Swoosh approach. In this sense, our framework is less general than Swoosh. In another sense, our framework is more general as it supports tgds, differentiates between entities and values, and incorporates query answering.

As in the work  on matching dependencies (MDs)~\cite{Fan08,BeKL13}, we consider variables in the head of egds to be matched and merged rather than to be made equal. 
In ~\cite{BeKL13},  generic functions obeying some natural properties are used to  compute the result of a match, 
while we use the union of the sets of values or sets of entities involved in the match. In the 
MDs framework,
a version of the chase procedure is used to resolve dependency violations.   
This chase 
acts locally on a pair of tuples at each step, whereas our chase procedure  matches entities in a global fashion and values in a local fashion. Furthermore, we support tgds, which are not in the framework of MDs.

There is a body of work on declarative entity resolution and its variants that is related to our framework; relevant references include the Dedupalog framework~\cite{ArRS09}, the declarative framework for entity linking in \cite{DBLP:journals/tods/BurdickFKPT16,DBLP:journals/jcss/BurdickFKPT19}, and the more recent LACE framework~\cite{BiCG22}. In both Dedupalog and LACE, there is a distinction between hard and soft entity resolution rules.  LACE supports global merges and creates equivalence classes  as we do,  but it does not support  local merges; also, LACE combines entity resolution rules with denial constraints, while our framework combines entity resolution rules with tgds.
The declarative framework for entity linking in \cite{DBLP:journals/tods/BurdickFKPT16,DBLP:journals/jcss/BurdickFKPT19} uses key dependencies and disjunctive constraints with ``weighted" semantics that measure the strength of the links.
The key dependencies are interpreted as hard rules that the solutions must satisfy  in the standard sense. Consequently, two conflicting links cannot co-exist in the same solution, hence that approach uses repairs to define the notion of the certain links. In contrast, in our framework we carry along via the chase all the alternatives as either sets of values or equivalence classes of entities. Another feature of our framework is the focus on the certain answers of queries.
Finally, the aforementioned declarative framework for entity linking uses more general link relations without rules for an equivalence relation of entities; thus, it is less focused on entity resolution. 



\commentout{


We discuss below works that are close to our research. 

\noindent

\emph{Swoosh}~\cite{BGMS*09} proposes a generic approach to entity resolution in which functions used to compare and merge records are "black boxes". Swoosh defines four desirable properties for such functions (called ICAR) and experimentally shows that entity resolution is less expensive (in the number of executions of such functions) when ICAR properties are respected. In our framework matches are determined by entity-egds (and by their interaction with value-egds and tgds), whereas merges are through a union-based function. We are thus less general than Swoosh. However, differently from Swoosh, we provide a formal semantics for our framework, and study query answering, which is not in the scope of Swoosh. We also notice that the ICAR properties are satisfied in our approach.\nb{Not completely sure about this because it is not obvious how to verify these properties, since we do not have only two records matching, but our rules may involve several facts. So I have just an intuition rather than a clear proof of this}

\noindent
Similar to \emph{LACE}~\cite{BiCG22}, a framework inspired by Dedupalog~\cite{ArRS09}, we pursue \emph{collective} and \emph{declarative} 
entity resolution, through a semantics that uses equivalence classes of entities\nb{We should probably explain what collective means, but this notion is not completely clear to me}. Differently from our work, LACE distinguishes between hard and soft entity resolutions rules, and studies computational complexity of the problem of computing certain and possible merges, as well as certain and possible answers to queries.\nb{this is a bit vague. Probably need some deepening.} Whereas however \emph{LACE} only considers 
global merges of entities, we also allow for local merges of values. Moreover, we combine entity resolution rules with tgds, whereas LACE studies interaction with denial constraints. How to extend LACE with local merges and ontological axioms is left as future work in~\cite{BiCG22}. 


Like in works on \emph{Matching Dependencies} (MDs)~\cite{Fan08,BeKL13}, we consider variables in the head of egds to be matched rather than to be made equal. To compute the result of a match 
we use the union of the values or entities involved in the match, whereas~\cite{BeKL13} considers generic functions respecting some natural properties. All these properties are satisfied by the union function we adopt, together with the fact that joins are satisfied by sets with non-empty intersection. Works on MDs solve dependency violations through a chase-based procedure, as we do. However, MDs consider only matching of values and thus the chase acts locally on a pairs of tuples at each step, whereas our chase also match entities in a global fashion.\nb{\cite{BeKL13} also studies query answering, but the notion is quite different from ours. The relationship between the two notions is not completely clear to me at the moment.} 

\emph{Consistent Query Answering (CQA)}~\cite{Bert11,BiBo16} The semantics we propose allows reasoning not to be trivialized when data violate egds. 
CQA shares a similar objective, though it does not consider entity resolution. However, whereas CQA approaches reason on the various possible ways of repairing the database to return only consistent data, 
our certain answers contain all alternatives found in the data for an egd violation, so that cleaning can be deferred to a next phase. On the other hand, CQA approaches have considered also forms of dependencies that go beyond tgds and egds.\nb{We probably should mention also works on entity linking. Also, I have the feeling we are missing some important related work.}
}

\section{Conclusions and Future Work}
\label{sec:conclusions}


The main contribution of this paper is the development of a new declarative framework that combines entity resolution and query answering in knowledge bases. This is largely a conceptual contribution because the development of the new declarative framework entailed rethinking from first principles  the definitions of such central notions as assignment, homomorphism, satisfaction of tgds and egds in a model, and universal solution. At the technical level, we designed a chase procedure that never fails, and showed that, when it terminates, it produces a universal solution that, in turn, can be used in query answering.

As regards future  directions, perhaps the most pressing issue is to identify a ``good" notion of the result of the chase when the chase procedure does not terminate. This may lead to extending the framework presented here to settings where all universal solutions are infinite. 
While infinite universal solutions cannot be materialized,
such solutions have been used to obtain rewritability results~\cite{CDLLR07},  occasionally combined with partial materialization of the result of an infinite chase~\cite{LuTW09}. In parallel, it is important to identify structural conditions on the tgds and the egds of the TBox that guarantee termination of the chase procedure in polynomial time and, thus, yield tractable conjunctive query answering. It is also worthwhile exploring whether other variants of the chase procedure, such as  the semi-oblivious  (a.k.a.\ Skolem) chase~\cite{Marn09,CaPi21} or the core chase~\cite{DeNR08}, can be suitably adapted to our framework so that their desirable properties carry over.

  
Finally, we note that there are several different areas, including  data exchange~\cite{FKMP05}, data integration~\cite{Lenz02,DoHI12}, ontology-mediated query answering~\cite{BiOr15}, and ontology-based data access~\cite{CDLLR18}, 
in which tgds and egds play a crucial role. We believe that the framework presented here makes it possible to infuse entity resolution into these areas in a principled way.

\commentout{
In this paper we proposed a principled approach to query answering and entity resolution that may be relevant to several areas, like Data Exchange~\cite{FKMP05} and Integration~\cite{DoHI12}, Ontology-mediated Query Answering~\cite{BiOr15}, or Ontology-based Data Access~\cite{CDLLR18}, considered the central role played by tgds and egds 
in those contexts.
Our investigation is however still initial and several aspects need further study.
%
Namely, we left open how to obtain a universal solution from a non-terminating chase sequence. Solving this case would extend our approach even to settings where all universal solutions are infinite. Though not fully materializable, the infinite chase is generally considered an important theoretical tool to show properties of query answering, such as rewritability~\cite{CDLLR07}, possibly combined with partial materialization of the chase result~\cite{LuTW09}.
Our framework intentionally considers expressive tgds and egds, in order to encompass several popular ontology languages. It is however well-known that without suitable restrictions query answering (and reasoning in general) is undecidable~\cite{BeVa81,CaLR03}. It is thus crucial to identify conditions ensuring decidability, and possibly tractability, of query answering.
We mentioned some decidable settings in Section~\ref{sec:computing-Univ-Model}.
Connected to the above aspects, we believe that it is worthwhile to adapt to our framework other notions of chase, as the semi-oblivious (a.k.a.\ skolem)~\cite{Marn09} or the core~\cite{DeNR08}.\nb{Not sure about the core. Andreas, for instance, believe that the core is something that no-one would never really use} In particular, in the absence of egds, if the standard semi-oblivious chase terminates, then it terminates after polynomially many steps in the size of the database. The challenge is thus trying to extended this result to our framework (i.e., even in the presence of egds).
Finally, it would be important to consider other forms of dependencies, e.g., denial constraints, as in~\cite{BiCG22}. 
}

\clearpage

\bibliographystyle{kr}
\bibliography{Bibliography/medium-string,Bibliography/krdb,Bibliography/bibliography,Bibliography/w3c}

\clearpage

\begin{appendices}



\ifthenelse{\equal{\showproofs}{0}}{

\section*{PROOFS OF SECTION 4}

\queryunderhomo*

\partitionofcertainanswers*

%

The following property easily follows from the above proposition.
\begin{corollary}
	\label{cor:partition-of-certain-answers}
	Let $\K$ be a KB, $q$ be a CQ, and 
	$\langle T_1, \dots ,T_n \rangle$ be 
	a certain answer to $q$ with respect to $\K$. If $T_i$ and $T_j$ are sets of entities 
	such that $i \neq j$, then either $T_i = T_j$ or $T_i \cap T_j = \emptyset$.
\end{corollary}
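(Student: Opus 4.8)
The plan is to derive this directly from Proposition~\ref{pro:partition-of-certain-answers}, which already establishes the analogous ``equal-or-disjoint'' property for sets of entities drawn from \emph{two} certain answers. The key observation is that a single certain answer can be viewed as a degenerate instance of that proposition in which both certain answers coincide, so essentially no new work is required.

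Concretely, I would apply Proposition~\ref{pro:partition-of-certain-answers} by taking $\vec{\genericSet} = \langle T_1, \dots, T_n \rangle$ and $\vec{\genericSet'} = \langle T_1, \dots, T_n \rangle$; that is, setting $T'_k = T_k$ for every $1 \leq k \leq n$. Since $\langle T_1, \dots, T_n \rangle$ is by hypothesis a certain answer to $q$ with respect to $\K$, both $\vec{\genericSet}$ and $\vec{\genericSet'}$ meet the premises of the proposition. The proposition then guarantees that whenever $T_i$ and $T'_j$ are sets of entities, either $T_i = T'_j$ or $T_i \cap T'_j = \emptyset$. Substituting $T'_j = T_j$ back in yields exactly the desired conclusion: if $T_i$ and $T_j$ are sets of entities, then either $T_i = T_j$ or $T_i \cap T_j = \emptyset$. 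The restriction $i \neq j$ in the corollary simply excludes the trivial diagonal case $i = j$ (for which $T_i = T_j$ holds automatically), so no additional argument is needed there.

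I do not anticipate any genuine obstacle, since all the substantive reasoning—tracing the equal-or-disjoint behaviour back through the maximality condition of certain answers (Condition~2 of Def.~\ref{def:certain-answer}) and the fact that equivalence classes in any solution are pairwise disjoint or identical—is already carried out in the proof of Proposition~\ref{pro:partition-of-certain-answers}. The only point worth double-checking is that the proposition is stated for arbitrary pairs of certain answers and arbitrary index pairs $i,j$, so that instantiating it on a single answer with $\vec{\genericSet} = \vec{\genericSet'}$ is legitimate; this is immediate from its statement.
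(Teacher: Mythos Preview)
Your proposal is correct and matches the paper's own proof exactly: the paper simply states that the corollary follows from Proposition~\ref{pro:partition-of-certain-answers} by taking $\vec{T}=\vec{T'}$. Your additional remarks about the $i\neq j$ restriction and the legitimacy of instantiating the proposition with a single certain answer are fine elaborations but not strictly needed.
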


The following property is needed to prove the last theorem of this section. It easily follows from Proposition~\ref{pro:query-under-homo} and Definition~\ref{def:homomorphism}.

\begin{restatable}{corollary}{queryoverunivsolution}
\label{cor:query-over-univ-solution}
Let $\K$ be a KB, $q$ be a CQ, $\U$ be a universal solution for $\K$, and $\vec{\genericSet} \in q^{\U}$. Then, for every $\I \in \Mod(\K)$ there is a tuple $\vec{\genericSet'} \in q^{\I}$ such that  $\vec{\genericSet}\downarrow \leq \vec{\genericSet'}$. 
\end{restatable}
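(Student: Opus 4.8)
The plan is to chain together three facts: the universality of $\U$, the homomorphism-preservation of CQ answers (Proposition~\ref{pro:query-under-homo}), and the fact that homomorphisms leave constants fixed. First I would fix an arbitrary solution $\I \in \Mod(\K)$. By Definition~\ref{def:universal}, since $\U$ is a universal solution, there exists a homomorphism $h: \U \rightarrow \I$.

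Next, since $\vec{\genericSet} \in q^{\U}$ by hypothesis, I would invoke Proposition~\ref{pro:query-under-homo} on the homomorphism $h: \U \rightarrow \I$. This immediately yields a tuple $\vec{\genericSet'} \in q^{\I}$ such that $h(\vec{\genericSet}) \leq \vec{\genericSet'}$, where $h(\vec{\genericSet})$ is the componentwise image tuple $\langle h(\genericSet_1), \dots, h(\genericSet_n)\rangle$.

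It then remains to establish $\vec{\genericSet}{\downarrow} \leq h(\vec{\genericSet})$, after which transitivity of the tuple-dominance order $\leq$ of Definition~\ref{def:contain} (which is just componentwise set inclusion) delivers $\vec{\genericSet}{\downarrow} \leq \vec{\genericSet'}$ and completes the argument. The key observation is that a homomorphism fixes every constant: by Conditions~1 and~3 of Definition~\ref{def:homomorphism}, $h(e)=e$ for every entity $e \in \alphaEntities$ and $h(v)=v$ for every value $v \in \alphaValues$. Hence, for each component $\genericSet_i$, every non-null element $c \in \genericSet_i$ satisfies $c = h(c) \in h(\genericSet_i)$; thus $\genericSet_i{\downarrow} \subseteq h(\genericSet_i)$, and since this holds for all $i$ we obtain $\vec{\genericSet}{\downarrow} \leq h(\vec{\genericSet})$.

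I expect no real obstacle here; the only point demanding care is purely bookkeeping. The operator $\downarrow$ discards exactly the nulls, whereas $h$ may relocate those nulls arbitrarily within $\underD(\I)$, so domination by $h(\vec{\genericSet})$ can be claimed only after projecting the nulls away with $\downarrow$. This is precisely why the statement concerns $\vec{\genericSet}{\downarrow}$ rather than $\vec{\genericSet}$ itself, and it is the reason the corollary, though short, is the right lemma to feed into the subsequent proof that $\cert(q,\K) = q^{\U}{\dareduced}$.
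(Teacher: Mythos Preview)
Your proof is correct and follows essentially the same approach as the paper: obtain a homomorphism $h:\U\to\I$ from universality, apply Proposition~\ref{pro:query-under-homo} to get $\vec{\genericSet'}\in q^{\I}$ with $h(\vec{\genericSet})\leq\vec{\genericSet'}$, and then use the constant-fixing clauses of Definition~\ref{def:homomorphism} to conclude $\vec{\genericSet}{\downarrow}\leq h(\vec{\genericSet})$. In fact, you spell out the last step in more detail than the paper does.
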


\thmqueryanswering*

\section*{PROOFS OF SECTION 5}

In the sequel, we may write $h(P(\genericSet_1,\ldots,\genericSet_n))$ instead of $P(h(\genericSet_1),\ldots,h(\genericSet_n))$, and, as done in the main body of the paper, may use $h(\tup{\genericSet_1,\ldots,\genericSet_n})$ to denote the tuple $\tup{h(\genericSet_1),\ldots,h(\genericSet_n)}$.

\chaseinterpretation*

\containment*

\chasefinitecasemodel*

\trianglefinitecase*

\chaseunivmodel*

}

\section*{EXAMPLE OF A TERMINATING CHASE SEQUENCE}

We now show a complete execution of the chase procedure in our ongoing example. 

\begin{example}
    \label{ex:chase_full_example}
    Consider the KB $\K=(\T,\DB)$ of Examples~\ref{ex:TBox} and~\ref{ex:DB}, which we report below for the the sake of presentation:
 \begin{align*}
        \small
		\arraycolsep=1.5pt
		\begin{array}{ll}
			(r_1) & \CI(p_1,\name_1,\phone_1) \wedge \CI(p_2,\name_2,\phone_2) \wedge\\
		      &   \JaccardSim(\name_1,\name_2,0.6) \rightarrow p_1 = p_2 \\
			(r_2) & \CI(p, \name_1, \phone_1) \wedge \CI(p, \name_2, \phone_2)\\ 
                 &   \rightarrow \name_1 = \name_2 \\
            (r_3) & \CI(p,\name_1,\phone_1)\wedge \CI(p,\name_2,\phone_2)\\ 
                 &  \rightarrow \phone_1=\phone_2 \\
            (r_4) & \CI(p,\name,\phone) \rightarrow \Emp(p,\comp) \wedge \CEO(\comp,\director)\\
            (r_5) & \Emp(p,\comp_1) \wedge \Emp(p,\comp_2) \rightarrow \comp_1 = \comp_2\\
            (r_6) & \CI(p_1,\name_1,\phone) \wedge \CI(p_2,\name_2,\phone)\\ 
                &  \rightarrow \SameHouse(p_1,p_2,\phone)
		\end{array}
        \normalsize
	\end{align*}
	\begin{align*}
        \small
		\arraycolsep=1.2pt
        \begin{array}{llll}
			(g_1) & \CI(\Doe_1, \texttt{J. Doe}, \PNumberA) &~~~(g_4) & \Emp(\Doe_2, \Yahoo)  \\
			(g_2) & \CI(\Doe_2, \texttt{John Doe}, \PNumberB) &~~~(g_5) & \Emp(\Doe_3, \IBM) \\
			(g_3) & \CI(\Doe_3, \texttt{Mary Doe}, \PNumberA)  &~~~(g_6) & \CEO(\Yahoo, \Doe_1) 
		\end{array}
	\end{align*}

As said, interpretation $\I$ of Example~\ref{ex:interpretation} is the base instance $\I^\DB$ for $\K$. Again, we report it below for convenience.

\small
\[
\arraycolsep=1pt
\begin{array}{ll}
\hspace*{-1pt}(d_1)\CI([\Doe_1], \{ \texttt{J. Doe} \}, \{ \PNumberA \} ) & (d_4)\Emp([\Doe_2], [\Yahoo])\\ 
\hspace*{-1pt}(d_2)\CI([\Doe_2], \{ \texttt{John Doe} \}, \{ \PNumberB \} ) & (d_5)\Emp([\Doe_3], [\IBM])\\
\hspace*{-1pt}(d_3)\CI([\Doe_3], \{ \texttt{Mary Doe} \}, \{ \PNumberA \}) & (d_6)\CEO([\Yahoo],[\Doe_1])
\end{array}
\]
\normalsize
 
    A possible chase sequence is the one composed by the following chase steps:
   
    \smallskip
    \noindent
    \textbf{Step 1:} $\I^\DB \xrightarrow[]{r_4,\mu_0} \I_1$, where $\mu_0$ is such that $\mu_0(\Body(r_4))= \{ d_3 \}$. The resulting instance $\I_1$ contains the facts $d_1$-$d_6$, as well as following facts:
    \begin{align*}
		\arraycolsep=3pt
        \begin{array}{ll}
            (d_7) \Emp([\Doe_3],[\textnormal{\textbf{e}}_1^\bot]), & (d_8) \CEO([\textnormal{\textbf{e}}_1^\bot], [\textnormal{\textbf{e}}_2^\bot]).
		\end{array}
	\end{align*}
    \noindent \textbf{Step 2:} $\I_1 \xrightarrow[]{r_1,\mu_1} \I_2$, where $\mu_1$ is such that $\mu_1(\Body(r_1))= \{ d_1, d_2,\JaccardSim(\{\texttt{J. Doe}\},\{\texttt{John Doe}\},\{0.6\})$. The resulting instance $\I_2$ contains the facts $d_3$, $d_5$, $d_7$, and $d_8$, as well as the following facts:
    \begin{align*}
		\arraycolsep=1.0pt
        \small
		\begin{array}{ll}
			(d_1) \rightarrow (d_9)     & \CI([\Doe_1,\Doe_2], \{ \texttt{J. Doe} \}, \{ \PNumberA \} ) \\
			(d_2) \rightarrow (d_{10}) & \CI([\Doe_1,\Doe_2], \{ \texttt{John Doe} \}, \{ \PNumberB \} ) \\
            (d_4) \rightarrow (d_{11}) & \Emp([\Doe_1,\Doe_2], [\Yahoo]) \\
            (d_6) \rightarrow (d_{12}) & \CEO([\Yahoo],[\Doe_1,\Doe_2])\\
		\end{array}
    \normalsize
	\end{align*}
    
    \smallskip
    \noindent
    \textbf{Step 3:} $\I_2 \xrightarrow[]{r_2,\mu_2} \I_3$, where $\mu_2$ is such that $\mu_2(\Body(r_2))=\{ d_9, d_{10} \}$. The resulting instance $\I_3$ contains the facts $d_3$, $d_{11}$, $d_5$, $d_{12}$, $d_7$, $d_8$, as well as the following facts:
    \begin{align*}
		\arraycolsep=1.0pt
        \small
		\begin{array}{ll}
			(d_9) \rightarrow (d_{13})     & \CI([\Doe_1,\Doe_2], \{ \texttt{J. Doe,John Doe} \}, \{ \PNumberA \} ) \\
			(d_{10}) \rightarrow (d_{14}) & \CI([\Doe_1,\Doe_2], \{ \texttt{J. Doe,John Doe} \}, \{ \PNumberB \} ) \\
    \end{array}
    \normalsize
	\end{align*}
         
    \smallskip
    \noindent
    \textbf{Step 4:} $\I_3 \xrightarrow[]{r_3,\mu_4} \I_4$, where $\mu_3$ is such that $\mu_3(\Body(r_3))= \{ d_{13}, d_{14} \}$. The resulting instance $\I_4$ contains the fact $d_3$, $d_{11}$, $d_5$, $d_{12}$, $d_7$, $d_8$, as well as the following fact:
    \begin{align*}
		\arraycolsep=1.0pt
        \small
		\begin{array}{ll}
			(d_{13}), (d_{14})  \rightarrow (d_{15})   & \CI([\Doe_1,\Doe_2], \{ \texttt{J. Doe,John Doe} \},\\ 
   &~~\{ \PNumberA,\PNumberB \} )
    \end{array}
    \normalsize
	\end{align*}
 Note that $d_{15}$ replaces facts $d_{13}$ and $d_{14}$ of $\I_3$.
    
    \smallskip
    \noindent
    \textbf{Step 5:} $\I_4 \xrightarrow[]{r_5,\mu_5} \I_5$, where $\mu_5$ is such that $\mu_5(\Body(r_5))= \{ d_{5}, d_{7} \}$. The resulting instance $\I_5$ contains facts $d_{14}$, $d_3$, $d_{11}$, $d_{12}$ plus the facts:
    \begin{align*}
		\arraycolsep=1.0pt
        \small
		\begin{array}{ll}
            (d_5), (d_7) \rightarrow (d_{16}) & \Emp([\Doe_3], [\IBM,\textnormal{\textbf{e}}_1^\bot]) \\
            (d_8) \rightarrow (d_{17}) & \CEO([\IBM,\textnormal{\textbf{e}}_1^\bot], [\textnormal{\textbf{e}}_2^\bot])		
    \end{array}
    \normalsize
	\end{align*}
    Note that $d_{16}$ replaces facts $d_{5}$ and $d_{7}$ of $\I_4$.
    
    \smallskip
    \noindent
    \textbf{Step 6:} $\I_5 \xrightarrow[]{r_6,\mu_6} \I_6$, where $\mu_6$ is such that $\mu_6(\Body(r_6))= \{ d_{14}, d_{3} \}$. The resulting instance $\I_6$ contains the same facts as $\I_5$ plus the following fact
    \begin{align*}
		\arraycolsep=3.0pt
		\begin{array}{ll}
			(d_{18}) & \SameHouse([\Doe_1,\Doe_2],[\Doe_3],\{\PNumberA\}) 
		\end{array}
	\end{align*}
 
    \smallskip
    \noindent
    \textbf{Step 7:} $\I_6 \xrightarrow[]{r_6,\mu_7} \I_7$, where $\mu_7$ is such that $\mu_7(\Body(r_6))= \{ d_{14}, d_{3} \}$ (as for step~6, but in this case $ d_{14}$ is the image of the second fact in $\Body(r_6)$ and $d_3$ of the first fact). The resulting instance $\I_7$ contains the same facts as $\I_6$ plus the following fact
    \begin{align*}
		\arraycolsep=3.0pt
		\begin{array}{ll}
			(d_{19}) & \SameHouse([\Doe_3],[\Doe_1,\Doe_2],\{\PNumberA\})  
		\end{array}
	\end{align*}
 
    It is not difficult to see that no rule of $\T$ can be applied to $\I_7$, and thus the chase procedure terminates. The result of the chase is indeed $\I_7$, which coincides with $\I'$ given in Example~\ref{ex:interpretation}. \qed

\end{example}

\end{appendices}

\end{document}